\documentclass[12pt]{article}
\usepackage{amsmath,amssymb,amsthm,mathtools}
\usepackage[margin=1in]{geometry}
\usepackage[numbers,sort&compress]{natbib}
\usepackage[colorlinks=true,linkcolor=blue,citecolor=blue,urlcolor=blue]{hyperref}
\usepackage{enumitem}

\newtheorem{theorem}{Theorem}[section]
\newtheorem{lemma}[theorem]{Lemma}
\newtheorem{proposition}[theorem]{Proposition}
\newtheorem{corollary}[theorem]{Corollary}

\theoremstyle{definition}
\newtheorem{definition}[theorem]{Definition}
\theoremstyle{remark}
\newtheorem{remark}[theorem]{Remark}

\newcommand{\cA}{\mathcal A}
\newcommand{\cB}{\mathcal B}
\newcommand{\cH}{\mathcal H}
\newcommand{\cK}{\mathcal K}
\newcommand{\Chan}{\operatorname{Chan}}
\newcommand{\Ad}{\operatorname{Ad}}
\newcommand{\vN}{\operatorname{vN}}
\newcommand{\id}{\operatorname{id}}
\newcommand{\U}{\mathcal U}
\newcommand{\cb}{\mathrm{cb}}
\newcommand{\NS}{\mathfrak{NS}}
\newcommand{\diam}{\operatorname{diam}}

\title{Essential Duality and Quantitative Channel Composability\\
in Algebraic Quantum Field Theory}
\author{Hassan Nasreddine\thanks{\texttt{hassan.nasreddine@fortisec.net}}}
\date{}

\begin{document}
\maketitle

\begin{abstract}
Let $O\mapsto\cA(O)$ be a local Haag--Kastler net and
$\cA^d(O)=\cA(O')'$ its dual net.  Normal unital completely positive maps on $B(\cH)$
that fix $\cA(O')$ pointwise are exactly the channels with Kraus operators in
$\cA^d(O)$.  Haag duality at $O$ is therefore equality with $\Chan(\cA(O))$, while
essential duality is equivalent to order-independence of these causal-complement channel
classes at spacelike separation.  For self-adjoint generators $a\in M$ and $b\in N$, the
order defect has a finite-parameter spectral formula and mixed second-order coefficient
$\diam\sigma(-i[a,b])$.  Uniform optimization yields
$\Gamma(M,N)=2\Delta_{\rm sa}(M,N)$, and the reversible inner-channel cb balls recover the
same invariant through
\[
\lim_{\varepsilon\downarrow0}
\frac{\Omega^{\rm rev}_\varepsilon(M,N)}{\varepsilon^2}
=\frac14\Gamma(M,N).
\]
The corresponding full-channel coefficient has an exact normalized characterization and
vanishes exactly for commuting algebras.  If a positive self-adjoint energy observable $G$
is fixed, the mean-input-energy-constrained coefficient $\Gamma_{G,E}$ has the same zero
criterion and increases to $\Gamma$ as $E\to\infty$.  No model-independent recovery rate
follows from these hypotheses, even for $G$ with compact resolvent, finite spectral
multiplicities, ground energy zero, and first positive eigenvalue one.  For bosonic
second-quantization nets, the defect at each spacelike pair is either $0$ or $4$: wedge-dual
generalized free fields realize the zero branch despite bounded-region Haag-duality gaps,
whereas nonzero dual symplectic pairing gives extremal Weyl and Clifford witnesses.  In the
translation-covariant case with positive one-particle time generator, failure of essential
duality also admits extremal bounded witness flows that are energy-limited relative to the
second-quantized time-translation energy.
\end{abstract}

\section{Introduction}

In the Haag--Kastler framework \cite{HaagKastler1964,Haag1996}, a bounded region $O$ is
assigned a von Neumann algebra $\cA(O)$, while the causal complement $O'$ carries
$\cA(O')$.  We consider normal channels on $B(\cH)$ that fix $\cA(O')$ pointwise.  The
fixed-commutant characterization of inner channels identifies this class with
\[
\NS_{\cA}(O)
=\Chan\bigl(\cA^d(O)\bigr),
\qquad
\cA^d(O):=\cA(O')'.
\]
The corresponding finite-dimensional commutant statement appears in
\cite[Thm.~5]{Chiribella2018}; the corresponding von Neumann-algebraic extension criterion is
\cite[Prop.~8]{vanLuijkStottmeisterWernerWilming2025}.  This gives two channel formulations
of the duality hierarchy.  Haag duality at $O$ is
$\NS_{\cA}(O)=\Chan(\cA(O))$, whereas essential duality, i.e. locality of the dual net
\cite{BrunettiGuidoLongo1993,HalvorsonMuger2007}, is order-independence of
$\NS_{\cA}(O_1)$ and $\NS_{\cA}(O_2)$ for $O_1\perp O_2$.  A strict inclusion
$\cA(O)\subsetneq\cA^d(O)$ is already detected by complement-fixing unitary flows arbitrarily
close to the identity in cb norm.  This is distinct from the purification characterization of
Haag duality in \cite{vanLuijkStottmeisterWilming2026}.

For von Neumann algebras $M,N\subset B(\cH)$ and self-adjoint contractions $a\in M$,
$b\in N$, set
\[
C_{a,b}(s,t)
=
\bigl\|\Ad_{e^{isa}}\Ad_{e^{itb}}-\Ad_{e^{itb}}\Ad_{e^{isa}}\bigr\|_{\cb}.
\]
The finite-parameter value is determined by the spectrum of the unitary group commutator,
while
\[
\frac{C_{a,b}(s,t)}{|st|}
\longrightarrow
\diam\sigma(-i[a,b])
\qquad ((s,t)\to(0,0),\ st\ne0).
\]
The convergence is uniform on the two self-adjoint unit balls.  This permits optimization
before taking the common-scale limit and gives
\[
\Gamma(M,N)=2\Delta_{\rm sa}(M,N)\in[0,4].
\]
The same coefficient is recovered intrinsically from reversible inner channels near the
identity,
\[
\lim_{\varepsilon\downarrow0}
\frac{\Omega^{\rm rev}_\varepsilon(M,N)}{\varepsilon^2}
=\frac14\Gamma(M,N),
\]
while the full inner-channel classes give a normalized coefficient $\Lambda(M,N)$ with the
same commutativity zero set.  Applied to $M=\cA^d(O_1)$ and $N=\cA^d(O_2)$, these quantities
measure failure of essential duality.

If the representation is equipped with a positive self-adjoint observable $G$, the same
small-parameter analysis can be carried out in the mean-energy-constrained diamond norm
\cite{Shirokov2018,BeckerDattaLamiRouze2021}.  The resulting coefficient satisfies
\[
\Gamma_{G,E}(M,N)=0\iff[M,N]=0,
\qquad
\Gamma_{G,E}(M,N)\uparrow\Gamma(M,N)\quad(E\to\infty).
\]
The energy needed to approach near-extremal spectral edges gives a convenient sufficient
recovery threshold.  A block-diagonal construction shows that the general
von Neumann-algebraic hypotheses determine no recovery rate, even after normalizing the ground
energy and first excitation and imposing compact resolvent with finite spectral multiplicities.

The model analysis uses bosonic second quantization.  The complemented-net correspondence
$K\mapsto\mathcal R(K)$ \cite{Araki1963,EckmannOsterwalder1973,FiglioliniGuido2000} makes the
spacelike defect depend on the symplectic pairing of the corresponding dual localization
subspaces.  A nonzero pairing gives $\Gamma=4$ and an exact Clifford finite-parameter profile,
whereas wedge-dual generalized free fields of Gaier and Yngvason
\cite{GaierYngvason2000} exhibit bounded-region Haag-duality gaps with vanishing spacelike
defect.  For translation-covariant second-quantization nets with positive one-particle time
generator, the nonzero branch can be represented by vectors in every power domain of that
generator, and the associated bounded trigonometric witness flows are energy-limited relative
to its second quantization.  Here $\Chan(\cA(O))$ refers to algebraic Kraus support; localized
probe realizability is a separate operational notion
\cite{FewsterVerch2020,Kitajima2017,MandryschSimmonsNavascues2026}.

Sections~2--3 establish the channel-duality dictionary.  Section~4 develops the quantitative
and energy-constrained coefficients.  Section~5 treats wedge and second-quantization models.
\section{Causal-complement non-signalling channels and the dual algebra}

Let $O\mapsto\cA(O)\subset B(\cH)$ be a Haag--Kastler net of von Neumann algebras indexed by
bounded open double cones in Minkowski space.  We assume isotony and locality, the latter in the
form $[\cA(O_1),\cA(O_2)]=0$ whenever $O_1\subset O_2'$, where $O'$ denotes the causal
complement. We write $O_1\perp O_2$ for $O_1\subset O_2'$.  For an arbitrary open region $R$ we
use the additive extension
\begin{equation}
\cA(R):=\bigvee_{V\subset R}\cA(V),
\label{eq:additive-extension}
\end{equation}
$V$ ranging over double cones contained in $R$, and likewise
$\cA^d(R):=\bigvee_{V\subset R}\cA^d(V)$ whenever an algebra of the dual net is needed for a
general open region.  Set
\begin{equation}
\cA^d(O):=\cA(O')' .
\label{eq:dual-net}
\end{equation}
Locality gives $\cA(O)\subset\cA^d(O)$, and Haag duality at $O$ is equality in this inclusion.

\begin{definition}[Essential duality]
\label{def:essential-duality}
The net $\cA$ satisfies \emph{essential duality} if the dual net $\cA^d$ is local, that is
$[\cA^d(O_1),\cA^d(O_2)]=0$ for every spacelike pair $O_1\perp O_2$ of double cones.
\end{definition}

\begin{remark}
\label{rem:dual-net-haag}
Definition~\ref{def:essential-duality} is the usual one
\cite{BrunettiGuidoLongo1993,HalvorsonMuger2007}, and it is equivalent to Haag duality for the
dual net.  Indeed $\cA(V)\subset\cA^d(V)$ for every double cone $V$, so
$\cA(O')\subset\cA^d(O')$ and hence $\cA^{dd}(O)=\cA^d(O')'\subset\cA(O')'=\cA^d(O)$ for every
$O$.  If $\cA^d$ is local then $\cA^d(O)\subset\cA^d(O')'=\cA^{dd}(O)$ as well, so
$\cA^{dd}=\cA^d$.  Conversely $\cA^{dd}=\cA^d$ gives, for $O_2\subset O_1'$,
$\cA^d(O_2)\subset\cA^d(O_1')\subset\cA^d(O_1)'$.
\end{remark}

\begin{definition}[Causal-complement non-signalling]
\label{def:NS}
A normal unital completely positive map $\Phi:B(\cH)\to B(\cH)$ is
\emph{causal-complement non-signalling at $O$} if
\begin{equation}
(\Phi_*\omega)|_{\cA(O')}=\omega|_{\cA(O')}
\qquad\text{for every normal state }\omega,
\label{eq:NS-state}
\end{equation}
equivalently if
\begin{equation}
\Phi|_{\cA(O')}=\id_{\cA(O')} .
\label{eq:NS-Heis}
\end{equation}
We write $\NS_{\cA}(O)$ for the class of all such channels.
\end{definition}

The equivalence of \eqref{eq:NS-state} and \eqref{eq:NS-Heis} holds because normal states
separate $B(\cH)$.  The assignment is isotone: $O_1\subset O_2$ gives
$\cA(O_2')\subset\cA(O_1')$ and hence $\NS_{\cA}(O_1)\subset\NS_{\cA}(O_2)$.

For an arbitrary Kraus index set, sums are understood as nets over finite subsets.  Thus
$\sum_i k_i^*k_i=1$ means ultraweak convergence of the increasing net of finite partial sums.

\begin{definition}[Inner channels]
For a von Neumann algebra $M\subset B(\cH)$, let $\Chan(M)$ be the class of normal unital
completely positive maps $\Phi$ on $B(\cH)$ admitting a Kraus representation
\begin{equation}
\Phi(x)=\sum_{i\in I} k_i^*xk_i,
\qquad
k_i\in M,
\qquad
\sum_{i\in I}k_i^*k_i=1 .
\label{eq:kraus}
\end{equation}
Such channels are called $M$-inner.
\end{definition}

We use the following fixed-commutant characterization.

\begin{lemma}[Fixed-commutant characterization]
\label{lem:fixed-commutant}
For a von Neumann algebra $M\subset B(\cH)$ and a normal unital completely positive map $\Phi$
on $B(\cH)$,
\begin{equation}
\Phi\in\Chan(M)
\quad\Longleftrightarrow\quad
\Phi|_{M'}=\id_{M'} .
\label{eq:fixed-commutant}
\end{equation}
\end{lemma}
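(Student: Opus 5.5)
The forward implication is a direct computation.  If $\Phi(x)=\sum_i k_i^*xk_i$ with $k_i\in M$ and $\sum_i k_i^*k_i=1$ ultraweakly, then for $y\in M'$ each $k_i$ commutes with $y$.  Hence every finite partial sum satisfies $\sum_{i\in F}k_i^*yk_i=y\sum_{i\in F}k_i^*k_i$.  Multiplication by a fixed bounded operator is ultraweakly continuous, so passing to the limit along the net of finite subsets gives $\Phi(y)=y$.

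For the converse, I would start from a minimal Stinespring dilation of the normal ucp map $\Phi$.  This is a Hilbert space $\cK$, a normal representation $\pi$ of $B(\cH)$ on $\cK$ and an isometry $V:\cH\to\cK$ with $\Phi(x)=V^*\pi(x)V$ and $\overline{\pi(B(\cH))V\cH}=\cK$.  Since $\pi$ is normal, it is unitarily a multiple of the identity: $\cK\cong\cH\otimes\cK_0$ and $\pi(x)=x\otimes1$.  Choosing an orthonormal basis $(e_i)$ of $\cK_0$ and setting $k_i:=(1\otimes\langle e_i|)V$ gives $\Phi(x)=\sum_i k_i^*xk_i$ with $\sum_i k_i^*k_i=V^*V=1$.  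These are the minimal Kraus operators.  The remaining task is to show $k_i\in M=M''$, i.e.\ $k_iy=yk_i$ for all $y\in M'$.

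The key step, and the main obstacle, is a multiplicative-domain argument.  Since $M'$ is a C$^*$-algebra fixed by $\Phi$, every unitary $u\in M'$ satisfies $\Phi(u^*u)=1=\Phi(u)^*\Phi(u)$.  By Choi's theorem (equivalently, from the Stinespring form), $\|(\pi(u)-VuV^*)V\xi\|^2=\langle\xi,(\Phi(u^*u)-\Phi(u)^*\Phi(u))\xi\rangle$.  Concretely, $\|\pi(u)V\xi-Vu\xi\|^2=\langle\xi,\Phi(u^*u)\xi\rangle-2\mathrm{Re}\langle\xi,\Phi(u)^*u\xi\rangle+\|u\xi\|^2=0$.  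Therefore $\pi(y)V=Vy$ for all $y\in M'$, since unitaries span $M'$.  In tensor form this reads $(y\otimes1)V=Vy$, and applying $1\otimes\langle e_i|$ yields $yk_i=k_iy$.  Thus $k_i\in M''=M$ by the bicommutant theorem.

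I expect the care to lie in two places.  The first is the normality bookkeeping: the identification $\pi(x)=x\otimes1$ for normal representations of $B(\cH)$, and convergence of the Kraus sum for arbitrary index sets in the ultraweak sense fixed before the definition.  The second is checking the intertwining identity with the correct adjoint conventions.  Minimality of the dilation is not actually needed for the intertwining step, so any normal dilation suffices.
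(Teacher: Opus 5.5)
Your proposal is correct and follows essentially the same route as the paper: a minimal (normal) Stinespring dilation written as $\Phi(x)=V^*(x\otimes 1)V$, a multiplicative-domain computation giving the intertwining relation $(y\otimes 1)V=Vy$ for $y\in M'$, and then reading off $yk_i=k_iy$ from basis coefficients before applying the bicommutant theorem. The only cosmetic difference is that you carry out the norm computation for unitaries of $M'$ and extend by linearity, whereas the paper uses $\Phi(y^*y)=\Phi(y)^*\Phi(y)$ for arbitrary $y\in M'$ via the projection $VV^*$.
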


\begin{proof}
If \eqref{eq:kraus} holds with $k_i\in M$ and $y\in M'$, then $y$ commutes with every $k_i$ and
every $k_i^*$, so $\Phi(y)=\sum_i k_i^*yk_i=y\sum_i k_i^*k_i=y$.

Conversely, suppose $\Phi$ fixes $M'$ pointwise.  Take a minimal Stinespring dilation of $\Phi$;
minimality together with normality of $\Phi$ makes the representing homomorphism of $B(\cH)$
normal, and every normal representation of $B(\cH)$ is a multiple of the identity
representation. Hence there are a Hilbert space $\cK$ and an isometry $V:\cH\to\cH\otimes\cK$
with
\[
\Phi(x)=V^*(x\otimes1_{\cK})V .
\]
Every $y\in M'$ lies in the multiplicative domain of $\Phi$: since $M'$ is an algebra,
$\Phi(y^*y)=y^*y=\Phi(y)^*\Phi(y)$, and likewise for $yy^*$.  Writing $P=VV^*$ and substituting
the dilation, the first identity becomes
\[
V^*(y\otimes1_{\cK})^*(1-P)(y\otimes1_{\cK})V=0,
\]
that is $\|(1-P)(y\otimes1_{\cK})V\|=0$.  Hence
\[
(y\otimes1_{\cK})V=P(y\otimes1_{\cK})V=VV^*(y\otimes1_{\cK})V=V\Phi(y)=Vy .
\]
Fix an orthonormal basis $(e_i)_{i\in I}$ of $\cK$ and set $k_i=(1\otimes\langle
e_i,\cdot\rangle)V$, so that $V\xi=\sum_i(k_i\xi)\otimes e_i$.  Comparing basis coefficients in
the displayed intertwining relation gives $yk_i=k_iy$ for every $y\in M'$, so $k_i\in(M')'=M$.
Expanding $V^*(x\otimes1_{\cK})V$ in the basis yields \eqref{eq:kraus}, the finite partial sums
converging ultraweakly (for $x\ge0$ they increase and are bounded by $\|x\|$).
\end{proof}

\begin{lemma}[Algebras from channel classes]
\label{lem:order-embedding}
For von Neumann algebras $M,N\subset B(\cH)$,
\begin{equation}
M\subset N
\quad\Longleftrightarrow\quad
\Chan(M)\subset\Chan(N).
\label{eq:order-embedding}
\end{equation}
\end{lemma}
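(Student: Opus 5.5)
The forward direction is immediate from the definition: if $M\subset N$, then any Kraus representation \eqref{eq:kraus} with $k_i\in M$ is also one with $k_i\in N$, so $\Chan(M)\subset\Chan(N)$. Alternatively, by Lemma~\ref{lem:fixed-commutant}, $M\subset N$ gives $N'\subset M'$, and a channel fixing $M'$ pointwise fixes $N'$ pointwise.

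For the converse I will use Lemma~\ref{lem:fixed-commutant} twice, together with the bicommutant theorem. Suppose $\Chan(M)\subset\Chan(N)$. It suffices to show $N'\subset M'$, since then $M=M''\subset N''=N$. So fix $y\in N'$; the aim is to show that $y$ commutes with every element of $M$. Because $M$ is the norm-closed span of its unitaries, it is enough to show $yu=uy$ for every unitary $u\in M$.

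For such a $u$, the map $\Ad_u(x)=u^*xu$ is a normal unital completely positive map with the single Kraus operator $u\in M$, so $\Ad_u\in\Chan(M)\subset\Chan(N)$. Lemma~\ref{lem:fixed-commutant} applied to $N$ then says $\Ad_u$ fixes $N'$ pointwise. In particular $u^*yu=y$, i.e. $yu=uy$. Hence $y\in M'$, and the converse follows.

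The only step needing care is the reduction to unitaries. It holds because every element of a unital C$^*$-algebra is a linear combination of four unitaries: write $x$ as a combination of self-adjoint contractions $h$, and use $h=\tfrac12\bigl((h+i\sqrt{1-h^2})+(h-i\sqrt{1-h^2})\bigr)$. Beyond this there is no real obstacle.
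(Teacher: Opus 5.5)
Your proof is correct and follows essentially the paper's argument: both apply Lemma~\ref{lem:fixed-commutant} to the unitary channels $\Ad_u$, $u\in\U(M)$, and use that unitaries span $M$. The paper concludes $u\in(N')'=N$ directly for each unitary, while you show $N'\subset M'$ and then take commutants; this is the same bicommutant step in a different order.
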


\begin{proof}
The forward implication is immediate from \eqref{eq:kraus}.  Conversely, let $u\in\U(M)$.  The
unitary channel $\Ad_u(x)=u^*xu$ lies in $\Chan(M)\subset\Chan(N)$ and therefore fixes $N'$
pointwise by Lemma~\ref{lem:fixed-commutant}, so $u\in(N')'=N$.  The unitaries generate $M$.
\end{proof}

\section{Haag duality and Kraus localization}

\begin{proposition}[Dual-algebra reconstruction]
\label{prop:dual-reconstruction}
For every double cone $O$,
\begin{equation}
\NS_{\cA}(O)=\Chan(\cA^d(O)).
\label{eq:C=dualchannels}
\end{equation}
Moreover $\cA^d(O)$ is the smallest von Neumann algebra $M$ for which every channel in
$\NS_{\cA}(O)$ is $M$-inner, and it is already recovered from the reversible members:
\begin{equation}
\cA^d(O)=\vN\{u\in\U(B(\cH)):\Ad_u\in\NS_{\cA}(O)\}.
\label{eq:support-reconstruction}
\end{equation}
\end{proposition}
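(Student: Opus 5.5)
The plan is to reduce everything to Lemma~\ref{lem:fixed-commutant} applied with $M=\cA^d(O)$. Since $\cA^d(O)=\cA(O')'$, the bicommutant theorem gives $\cA^d(O)'=\cA(O')''=\cA(O')$; here $\cA(O')$ is a von Neumann algebra by the additive extension \eqref{eq:additive-extension}, being a join of von Neumann algebras. Definition~\ref{def:NS}, in the Heisenberg form \eqref{eq:NS-Heis}, says exactly that $\Phi$ fixes $\cA(O')=\cA^d(O)'$ pointwise. Lemma~\ref{lem:fixed-commutant} then says this holds if and only if $\Phi\in\Chan(\cA^d(O))$, which proves \eqref{eq:C=dualchannels}.

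For minimality, we first observe that $\cA^d(O)$ itself has the required property, by \eqref{eq:C=dualchannels}. Now let $M$ be any von Neumann algebra such that every channel in $\NS_{\cA}(O)$ is $M$-inner. Then $\Chan(\cA^d(O))=\NS_{\cA}(O)\subset\Chan(M)$, and Lemma~\ref{lem:order-embedding} gives $\cA^d(O)\subset M$. So $\cA^d(O)$ is the smallest such algebra.

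For \eqref{eq:support-reconstruction}, let $W$ denote the von Neumann algebra generated by the set $S=\{u\in\U(B(\cH)):\Ad_u\in\NS_{\cA}(O)\}$.
\begin{itemize}[leftmargin=*]
\item If $u\in S$, then $\Ad_u\in\Chan(\cA^d(O))$, so $\Ad_u$ fixes $\cA^d(O)'$ pointwise by Lemma~\ref{lem:fixed-commutant}. Hence $u^*yu=y$, i.e.\ $yu=uy$, for all $y\in\cA^d(O)'$, so $u\in\cA^d(O)''=\cA^d(O)$. This gives $W\subset\cA^d(O)$.
\item Conversely, every unitary $u\in\cA^d(O)$ gives a single-Kraus channel $\Ad_u\in\Chan(\cA^d(O))=\NS_{\cA}(O)$, so $\U(\cA^d(O))\subset S$. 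Since a von Neumann algebra is the linear span of its unitaries, $\cA^d(O)\subset W$.
\end{itemize}

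No step is a real obstacle. The only point needing care is the identification $\cA^d(O)'=\cA(O')$, which uses that $\cA(O')$ is a von Neumann algebra, so that the bicommutant theorem applies. For the converse inclusion in \eqref{eq:support-reconstruction} it also matters that the generated von Neumann algebra contains all unitaries of $\cA^d(O)$, which holds because those unitaries lie in $S$ themselves.
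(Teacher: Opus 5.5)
Your proposal is correct and follows the paper's proof essentially step for step: Lemma~\ref{lem:fixed-commutant} with $M=\cA^d(O)$ gives the channel identity, Lemma~\ref{lem:order-embedding} gives minimality, and the unitary characterization together with generation by unitaries gives the reconstruction. You also state explicitly that $\cA^d(O)'=\cA(O')$ because the additive extension is a von Neumann algebra, a step the paper leaves implicit.
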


\begin{proof}
Equation \eqref{eq:C=dualchannels} is Lemma~\ref{lem:fixed-commutant} with
$M=\cA(O')'=\cA^d(O)$. Minimality follows from Lemma~\ref{lem:order-embedding}, since
$\Chan(\cA^d(O))\subset\Chan(M)$ forces $\cA^d(O)\subset M$.  For
\eqref{eq:support-reconstruction}, Lemma~\ref{lem:fixed-commutant} shows that
$\Ad_u\in\NS_{\cA}(O)$ exactly when $u\in\cA(O')'=\cA^d(O)$, and the unitaries of a von Neumann
algebra generate it.
\end{proof}

\paragraph{Unitary-flow characterization.}
For bounded self-adjoint $a\in B(\cH)$,
\begin{equation}
a\in\cA^d(O)
\quad\Longleftrightarrow\quad
\Ad_{e^{ita}}\in\NS_{\cA}(O)\ \text{for every }t\in\mathbb R .
\label{eq:tangent-reconstruction}
\end{equation}
The forward implication follows from \eqref{eq:C=dualchannels}.  Conversely, if every
$\Ad_{e^{ita}}$ fixes $\cA(O')$ pointwise, then $e^{-ita}be^{ita}=b$ for $b\in\cA(O')$ and all
$t$, and differentiation at $t=0$ gives $[a,b]=0$.  So the self-adjoint part of $\cA^d(O)$
consists exactly of the bounded generators of norm-continuous complement-fixing unitary flows at
$O$.

\begin{theorem}[Haag duality and Kraus localization]
\label{thm:haag-kraus}
For a double cone $O$ the following are equivalent.
\begin{enumerate}[label=(\roman*)]
\item $\cA$ satisfies Haag duality at $O$, i.e.\ $\cA(O)=\cA(O')'$.
\item Every normal channel on $B(\cH)$ fixing $\cA(O')$ pointwise is $\cA(O)$-inner.
\item $\NS_{\cA}(O)=\Chan(\cA(O))$.
\end{enumerate}
If Haag duality fails at $O$, then for every bounded self-adjoint
$a\in\cA^d(O)\setminus\cA(O)$ there is $\delta>0$ such that
\begin{equation}
\alpha_t:=\Ad_{e^{ita}}\in\NS_{\cA}(O)\setminus\Chan(\cA(O))
\qquad\text{for all }0<|t|<\delta,
\label{eq:weak-kraus-gap}
\end{equation}
while $\|\alpha_t-\id\|_{\cb}\to0$ as $t\to0$.
\end{theorem}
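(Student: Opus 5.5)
The equivalences will follow from the lemmas of Section~2, and the quantitative statement from an analytic-functional-calculus argument.

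For (i)$\Leftrightarrow$(iii), Proposition~\ref{prop:dual-reconstruction} gives $\NS_{\cA}(O)=\Chan(\cA^d(O))$. Lemma~\ref{lem:order-embedding}, applied in both directions, shows that $\Chan(\cA^d(O))=\Chan(\cA(O))$ holds exactly when $\cA^d(O)=\cA(O)$, which is (i). For (ii)$\Leftrightarrow$(iii), statement (ii) is the inclusion $\NS_{\cA}(O)\subset\Chan(\cA(O))$. The reverse inclusion always holds: locality gives $\cA(O)\subset\cA^d(O)$, so Lemma~\ref{lem:order-embedding} yields $\Chan(\cA(O))\subset\Chan(\cA^d(O))=\NS_{\cA}(O)$.

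For the gap statement, fix a self-adjoint $a\in\cA^d(O)\setminus\cA(O)$. Membership $\alpha_t\in\NS_{\cA}(O)$ for every $t$ is the forward direction of \eqref{eq:tangent-reconstruction}. Suppose $\alpha_t\in\Chan(\cA(O))$. By Lemma~\ref{lem:fixed-commutant}, $\alpha_t$ then fixes $\cA(O)'$ pointwise, so $e^{ita}$ commutes with $\cA(O)'$. Hence $e^{ita}\in\cA(O)''=\cA(O)$; this is the argument in the proof of Lemma~\ref{lem:order-embedding}. It therefore suffices to find $\delta>0$ such that $e^{ita}\notin\cA(O)$ for $0<|t|<\delta$.

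Choose $\delta$ with $\delta\|a\|<\pi$. For $|t|<\delta$ the spectrum of $ta$ lies in $(-\pi,\pi)$, so the spectrum of $u=e^{ita}$ lies in an arc avoiding $-1$. The principal logarithm is holomorphic on a neighbourhood of that arc, and holomorphic functional calculus gives $ta=-i\log u$. This calculus stays inside any unital Banach subalgebra that is inverse-closed, and von Neumann algebras are inverse-closed in $B(\cH)$. Equivalently, $\log u$ is a norm limit of polynomials in $u$ on the arc, since $u$ is normal with spectrum in a compact arc where $\log$ is continuous. So $e^{ita}\in\cA(O)$ with $t\ne0$ would force $a\in\cA(O)$, a contradiction.

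Finally, $\|\alpha_t-\id\|_{\cb}\le 2\|e^{ita}-1\|\le 2|t|\|a\|\to0$. To see the first inequality, write $u^*xu-x=u^*x(u-1)+(u^*-1)x$. The same estimate holds after tensoring with identity matrices, which gives the cb bound.

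The only step requiring care is recovering $a$ from $e^{ita}$ inside $\cA(O)$. It is handled by confining the spectrum of $ta$ to $(-\pi,\pi)$ so that the principal logarithm is continuous on $\sigma(e^{ita})$. For larger $|t|$, $e^{ita}$ can lie in $\cA(O)$, for example when $ta$ is a multiple of a projection scaled by $2\pi$, so a small-$t$ restriction such as $|t|<\pi/\|a\|$ is genuinely needed.
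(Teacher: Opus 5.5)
Your proof is correct. It differs from the paper's in one step, the recovery of $a$ from $e^{ita}$. The equivalence of (i)--(iii) and the reduction of the gap statement to showing $e^{ita}\notin\cA(O)$ (via Lemma~\ref{lem:fixed-commutant} and $\cA(O)''=\cA(O)$) are the same as in the paper.

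The paper argues by contradiction. It assumes a sequence $t_n\to0$, $t_n\neq0$, with $e^{it_na}\in\cA(O)$, and uses that $(e^{it_na}-1)/(it_n)\to a$ in norm. That needs only norm closedness of $\cA(O)$ and norm differentiability of $t\mapsto e^{ita}$, but it yields $\delta$ non-constructively.

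Your principal-logarithm argument uses closure of $\cA(O)$ under functional calculus. Continuous functional calculus in the C$^*$-algebra generated by $u$ already suffices, so the holomorphic and polynomial-approximation remarks are more than you need. In exchange you get a direct conclusion for each individual $t$ with $0<|t|\,\|a\|<\pi$, i.e.\ the explicit threshold $\delta=\pi/\|a\|$.

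That threshold is optimal in general, which sharpens your closing remark. Take $a=c(2p-1)$ with $c>0$ and $p$ a projection in $\cA^d(O)\setminus\cA(O)$; such $p$ exists because von Neumann algebras are generated by their projections. Then $a\in\cA^d(O)\setminus\cA(O)$, $\|a\|=c$, and $e^{i(\pi/c)a}=-1\in\cA(O)$, so $\alpha_{\pi/\|a\|}=\id$ is $\cA(O)$-inner.
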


\begin{proof}
By Proposition~\ref{prop:dual-reconstruction}, (ii) is the inclusion
$\Chan(\cA^d(O))\subset\Chan(\cA(O))$; the reverse inclusion always holds by locality.
Lemma~\ref{lem:order-embedding} therefore gives
$\NS_{\cA}(O)=\Chan(\cA(O))\iff\cA^d(O)=\cA(O)$, which is the equivalence of (i)--(iii).

Suppose Haag duality fails and pick a bounded self-adjoint $a\in\cA^d(O)\setminus\cA(O)$; such
an element exists because a proper inclusion of von Neumann algebras is proper on self-adjoint
parts.  Each $\alpha_t$ is complement-fixing by \eqref{eq:tangent-reconstruction}.  If there
were $t_n\neq0$ with $t_n\to0$ and $\alpha_{t_n}\in\Chan(\cA(O))$, then
Lemma~\ref{lem:fixed-commutant} applied to the unitary channel $\alpha_{t_n}$ would give
$e^{it_na}\in\cA(O)''=\cA(O)$; since
\[
\frac{e^{it_na}-1}{it_n}\longrightarrow a
\quad\text{in operator norm}
\]
and $\cA(O)$ is norm closed and contains $1$, this forces $a\in\cA(O)$, a contradiction.  Hence
\eqref{eq:weak-kraus-gap} holds for some $\delta>0$.  Finally
$\|\Ad_u-\id\|_{\cb}\le2\|u-1\|$ for every unitary $u$, because
$\Ad_u(x)-x=u^*x(u-1)+(u^*-1)x$ and the same identity holds at every matrix level; now use
$e^{ita}\to1$ in norm.
\end{proof}

A strict inclusion $\cA(O)\subsetneq\cA^d(O)$ is therefore already visible in an arbitrarily
small cb neighbourhood of the identity channel.

\section{Essential duality and channel composability}

\begin{lemma}[Channel commutation]
\label{lem:channel-commutation}
For von Neumann algebras $M,N\subset B(\cH)$ the following are equivalent.
\begin{enumerate}[label=(\roman*)]
\item $[M,N]=0$.
\item $\Phi\circ\Psi=\Psi\circ\Phi$ for all $\Phi\in\Chan(M)$ and $\Psi\in\Chan(N)$.
\item $\Ad_u\circ\Ad_v=\Ad_v\circ\Ad_u$ for all $u\in\U(M)$ and $v\in\U(N)$.
\end{enumerate}
\end{lemma}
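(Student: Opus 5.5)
I will prove the cycle (i)$\Rightarrow$(ii)$\Rightarrow$(iii)$\Rightarrow$(i).

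For (i)$\Rightarrow$(ii), take Kraus representations $\Phi(x)=\sum_i k_i^*xk_i$ with $k_i\in M$ and $\Psi(x)=\sum_j l_j^*xl_j$ with $l_j\in N$. Then
\[
\Phi\circ\Psi(x)=\sum_{i,j}k_i^*l_j^*xl_jk_i ,
\qquad
\Psi\circ\Phi(x)=\sum_{i,j}l_j^*k_i^*xk_il_j .
\]
Since $[M,N]=0$, we have $l_jk_i=k_il_j$, so the summands agree termwise. The only care needed is convergence of the double sums with arbitrary index sets. I would handle it by a fixed-commutant argument instead. Composition preserves normality and unitality. Because $k_i\in M$ and $l_j\in N$ commute, the composite channel is still a Kraus sum over an index set, and the two double sums are the same ultraweakly convergent net over finite subsets of $I\times J$. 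It suffices to check convergence for $x\ge0$, where the partial sums increase and are bounded. Iterated ultraweak limits then agree with the joint limit, using normality of $\Phi$ to pass it inside $\Psi$. This bookkeeping is the only step with any subtlety.

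(ii)$\Rightarrow$(iii) is immediate, since $\Ad_u\in\Chan(M)$ for $u\in\U(M)$ (single Kraus operator $u$, with $u^*u=1$), and likewise for $N$.

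For (iii)$\Rightarrow$(i), fix $u\in\U(M)$ and $v\in\U(N)$. Then $\Ad_u\Ad_v=\Ad_{vu}$ and $\Ad_v\Ad_u=\Ad_{uv}$, with the convention $\Ad_w(x)=w^*xw$. So $\Ad_{vu}=\Ad_{uv}$, i.e.\ $\Ad_w=\id$ for $w=(uv)(vu)^{*}$ (up to ordering conventions). An automorphism $\Ad_w$ of $B(\cH)$ is trivial only if $w$ commutes with all of $B(\cH)$, so $w=\lambda1$ with $|\lambda|=1$. Hence
\[
uvu^*v^*=\lambda(u,v)1 .
\]

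The main point is to remove this scalar. I would first replace $u$ by the norm-continuous path $u_s=e^{isa}$ for self-adjoint $a\in M$, keeping $v$ fixed. The function $s\mapsto\lambda(u_s,v)$ is continuous with $\lambda(u_0,v)=1$. Multiplicativity shows it is a continuous character of $\mathbb R$:
\[
u_{s+r}vu_{s+r}^*v^*=u_s\bigl(u_rvu_r^*v^*\bigr)vu_s^*v^*=\lambda(u_r,v)\,u_svu_s^*v^*,
\]
so $\lambda(u_s,v)=e^{i\theta s}$. Thus $e^{isa}ve^{-isa}=e^{i\theta s}v$. Differentiating at $s=0$ gives $[a,v]=\theta v$.

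The bounded derivation $\delta=[a,\cdot]$ then has eigenvalue $\theta$ with eigenvector $v\neq0$, and $\|\delta^n(v)\|=|\theta|^n$. On the other hand, $\|e^{isa}ve^{-isa}\|=1$ for all $s$, which gives no contradiction by itself. So I argue instead via traces of spectral structure. Since $v$ is unitary, $v^*av=a-\theta$, so $\sigma(a)=\sigma(a)-\theta$. Compactness of $\sigma(a)$ forces $\theta=0$. Hence $[a,v]=0$ for all self-adjoint $a\in M$ and unitary $v\in N$. Linear spans of these give $[M,N]=0$, because every element of a von Neumann algebra is a combination of self-adjoints, and of unitaries.
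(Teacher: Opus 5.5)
Your proof is correct and follows the paper's route: termwise agreement of the Kraus families $(l_jk_i)$ and $(k_il_j)$ over $I\times J$ for (i)$\Rightarrow$(ii), and for (iii)$\Rightarrow$(i) a continuous character $s\mapsto\lambda(e^{isa},v)$. That character yields $v^*av=a+\theta\,1$ (you wrote $a-\theta$, a harmless sign slip), and the compact-spectrum translation argument then forces $\theta=0$. You can delete the stray mention of a ``fixed-commutant argument'' and the abandoned derivation-eigenvalue remark, since the net-convergence bookkeeping you actually give is what is needed.
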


\begin{proof}
Assume (i) and choose Kraus families $(k_i)_{i\in I}\subset M$ for $\Phi$ and $(\ell_j)_{j\in
J}\subset N$ for $\Psi$.  Then $\Phi\circ\Psi$ has Kraus family $(\ell_jk_i)_{(i,j)\in I\times
J}$ and $\Psi\circ\Phi$ has Kraus family $(k_i\ell_j)$, both nets converging ultraweakly over
finite subsets of $I\times J$.  Since $[M,N]=0$ the two families agree term by term, so the
channels are equal.  Thus (i)$\Rightarrow$(ii), and (ii)$\Rightarrow$(iii) is immediate.

Assume (iii).  Fix a bounded self-adjoint $a\in M$, a unitary $v\in\U(N)$, and put
$u_t=e^{ita}$. With the convention $\Ad_x\circ\Ad_y=\Ad_{yx}$, commutation of $\Ad_{u_t}$ and
$\Ad_v$ reads $\Ad_{vu_t}=\Ad_{u_tv}$.  Two unitaries implement the same automorphism of
$B(\cH)$ exactly when they differ by a scalar phase, since the centre of $B(\cH)$ is trivial.
Hence there is $\lambda(t)\in\mathbb T$ with
\begin{equation}
v^*u_tv=\lambda(t)u_t,
\qquad\text{equivalently}\qquad
\lambda(t)1=v^*u_tvu_t^* .
\label{eq:projective}
\end{equation}
The right-hand identity in \eqref{eq:projective} exhibits $\lambda$ as a norm-continuous
function of $t$ with $\lambda(0)=1$, and
\[
\lambda(s+t)1
=v^*u_su_tvu_t^*u_s^*
=v^*u_sv\,(v^*u_tvu_t^*)\,u_s^*
=\lambda(t)\,v^*u_svu_s^*
=\lambda(s)\lambda(t)1 .
\]
A continuous character of $\mathbb R$ has the form $\lambda(t)=e^{ict}$ with $c\in\mathbb R$, so
$v^*e^{ita}v=e^{it(a+c1)}$ for all $t$.  Differentiating at $t=0$, or using uniqueness of the
bounded generator of a norm-continuous one-parameter unitary group, gives
\[
v^*av=a+c1 .
\]
Unitary conjugation preserves the spectrum while $\sigma(a+c1)=\sigma(a)+c$, and $\sigma(a)$ is
a nonempty compact subset of $\mathbb R$; hence $c=0$ and $v^*av=a$.  As $a$ ranges over the
self-adjoint part of $M$ and $v$ over $\U(N)$, this gives $[M,N]=0$.
\end{proof}

\begin{definition}[Spacelike order-independence]
An assignment $O\mapsto\mathfrak D(O)$ of normal channel classes is \emph{spacelike
order-independent} if $\Phi_1\circ\Phi_2=\Phi_2\circ\Phi_1$ for every spacelike pair
$O_1\perp O_2$ and all $\Phi_i\in\mathfrak D(O_i)$.
\end{definition}

\begin{theorem}[Essential duality and channel order]
\label{thm:essential-composition}
For the assignment $\NS_{\cA}$ of Definition~\ref{def:NS} the following are equivalent.
\begin{enumerate}[label=(\roman*)]
\item $\cA$ satisfies essential duality.
\item $O\mapsto\NS_{\cA}(O)$ is spacelike order-independent.
\item For every spacelike pair $O_1\perp O_2$, the unitary channels in $\NS_{\cA}(O_1)$ commute
under composition with those in $\NS_{\cA}(O_2)$.
\end{enumerate}
\end{theorem}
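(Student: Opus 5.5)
The plan is to reduce the statement, one spacelike pair at a time, to Lemma~\ref{lem:channel-commutation} with $M=\cA^d(O_1)$ and $N=\cA^d(O_2)$, using Proposition~\ref{prop:dual-reconstruction} to identify the channel classes.

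First I will invoke \eqref{eq:C=dualchannels}, which gives $\NS_{\cA}(O_i)=\Chan(\cA^d(O_i))$ for $i=1,2$. Next I need the unitary channels in $\NS_{\cA}(O_i)$, i.e.\ the $\Ad_u$ with $u\in\U(B(\cH))$. By Lemma~\ref{lem:fixed-commutant} (equivalently the proof of \eqref{eq:support-reconstruction}), $\Ad_u\in\NS_{\cA}(O_i)$ holds exactly when $u\in\cA(O_i')'=\cA^d(O_i)$. So the unitary channels in $\NS_{\cA}(O_i)$ are precisely $\{\Ad_u : u\in\U(\cA^d(O_i))\}$.

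With these identifications, the three conditions of Lemma~\ref{lem:channel-commutation} for the fixed pair read as follows.
\begin{itemize}
\item Condition (i) becomes $[\cA^d(O_1),\cA^d(O_2)]=0$.
\item Condition (ii) becomes commutation of all channels in $\NS_{\cA}(O_1)$ with all channels in $\NS_{\cA}(O_2)$.
\item Condition (iii) becomes commutation of the unitary members of these two classes.
\end{itemize}
The lemma shows these are equivalent for each fixed pair.

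Finally I quantify over all spacelike pairs $O_1\perp O_2$ of double cones. Condition (i) over all pairs is exactly Definition~\ref{def:essential-duality}. Condition (ii) over all pairs is spacelike order-independence of $\NS_{\cA}$. Condition (iii) over all pairs is item (iii) of the theorem. Since the equivalence holds pairwise, it survives the universal quantifier, which proves (i)$\Leftrightarrow$(ii)$\Leftrightarrow$(iii).

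I expect no step to be a real obstacle. The only point needing care is confirming that ``unitary channels in $\NS_{\cA}(O)$'' refers to arbitrary unitaries $u\in\U(B(\cH))$, and that these are automatically in $\cA^d(O)$. The fixed-commutant lemma gives this directly, with the phase ambiguity of $\Ad_u$ being harmless because it does not change $u$'s membership in a von Neumann algebra.
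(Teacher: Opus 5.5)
Your proposal is correct and follows the paper's proof essentially verbatim: identify $\NS_{\cA}(O_i)=\Chan(\cA^d(O_i))$ via Proposition~\ref{prop:dual-reconstruction}, apply Lemma~\ref{lem:channel-commutation} pairwise with $M=\cA^d(O_1)$, $N=\cA^d(O_2)$, and quantify over spacelike pairs. Your explicit check that the unitary channels in $\NS_{\cA}(O)$ are exactly the $\Ad_u$ with $u\in\U(\cA^d(O))$ is a detail the paper leaves implicit, and it is handled correctly.
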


\begin{proof}
By Proposition~\ref{prop:dual-reconstruction}, $\NS_{\cA}(O)=\Chan(\cA^d(O))$.
Lemma~\ref{lem:channel-commutation} applied to $M=\cA^d(O_1)$ and $N=\cA^d(O_2)$ shows that, for
a given spacelike pair, commutation of the two channel classes, commutation of their unitary
members, and $[\cA^d(O_1),\cA^d(O_2)]=0$ are equivalent.  Quantifying over all spacelike pairs
and using Definition~\ref{def:essential-duality} gives the theorem.
\end{proof}

\begin{remark}[Local extensions]
\label{rem:local-extensions}
The dual net bounds every local extension of $\cA$.  Let $O\mapsto\cB(O)$ be an isotone local
net of von Neumann algebras with $\cA(O)\subset\cB(O)$.  For $x\in\cB(O)$ and a double cone
$V\subset O'$, locality of $\cB$ gives $[x,\cA(V)]\subset[\cB(O),\cB(V)]=0$, so
\begin{equation}
\cB(O)\subset\cA^d(O),
\qquad
\Chan(\cB(O))\subset\NS_{\cA}(O).
\label{eq:universal-operation-net}
\end{equation}
Under essential duality $\cA^d$ is itself local and attains this bound, and is then Haag dual by
Remark~\ref{rem:dual-net-haag}.
\end{remark}

\subsection{Quantitative channel composability}

We quantify the failure of spacelike order-independence along unitary flows in the
causal-complement non-signalling classes.  For bounded self-adjoint $a,b\in B(\cH)$ set
\begin{equation}
C_{a,b}(s,t)
:=\bigl\|\Ad_{e^{isa}}\circ\Ad_{e^{itb}}-\Ad_{e^{itb}}\circ\Ad_{e^{isa}}\bigr\|_{\cb},
\qquad
D_a(x):=\left.\frac{d}{ds}\right|_{s=0}\Ad_{e^{isa}}(x)=-i[a,x].
\label{eq:channel-order-profile}
\end{equation}
We work in the Banach algebra $CB(B(\cH))$ of completely bounded maps, in which
$\Ad_{e^{isa}}=e^{sD_a}$.

\begin{lemma}[Norm of an inner derivation]
\label{lem:derivation-cb-norm}
Let $h=h^*\in B(\cH)$ and $D_h(x)=-i[h,x]$.  Then
\begin{equation}
\|D_h\|=\|D_h\|_{\cb}=\diam\sigma(h)=2\inf_{\lambda\in\mathbb R}\|h-\lambda1\| .
\label{eq:derivation-cb-norm}
\end{equation}
\end{lemma}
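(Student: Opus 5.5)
We prove the chain $\|D_h\|\le\|D_h\|_{\cb}\le 2\inf_\lambda\|h-\lambda1\|=\diam\sigma(h)\le\|D_h\|$. The first inequality holds by definition of the cb norm, which dominates the norm at matrix level one.

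\emph{Upper bound.} Since $D_h=D_{h-\lambda1}$ for every real $\lambda$, it suffices to bound $D_{h-\lambda 1}$. At matrix level $n$ the amplification of $D_h$ is $D_{h\otimes1_n}$, acting on $B(\cH\otimes\mathbb C^n)$. The identity $[k,x]=kx-xk$ therefore gives $\|D_k\|_{\cb}\le2\|k\|$ for every bounded $k$. Hence $\|D_h\|_{\cb}\le2\inf_\lambda\|h-\lambda1\|$. For the middle equality, take $\lambda$ to be the midpoint of $[\min\sigma(h),\max\sigma(h)]$. Then $\|h-\lambda1\|=\tfrac12\diam\sigma(h)$, and no $\lambda$ can do better, because $\|h-\lambda1\|=\max(|\max\sigma(h)-\lambda|,|\min\sigma(h)-\lambda|)$ by the spectral theorem.

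\emph{Lower bound (main step).} We show $\|D_h\|\ge\diam\sigma(h)$ using only level one. Write $\alpha=\min\sigma(h)$ and $\beta=\max\sigma(h)$, and fix $\epsilon>0$. Let $P$ and $Q$ be the spectral projections of $h$ for $[\beta-\epsilon,\beta]$ and $[\alpha,\alpha+\epsilon]$. Both are nonzero because $\alpha,\beta\in\sigma(h)$. If $\beta-\alpha\le2\epsilon$ the bound is trivial up to $\epsilon$, so we may assume $\beta-\alpha>2\epsilon$, in which case $P$ and $Q$ are orthogonal. Choose unit vectors $\xi\in P\cH$ and $\eta\in Q\cH$, and set $x=|\xi\rangle\langle\eta|-|\eta\rangle\langle\xi|$. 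This operator has norm $1$, since it is a rotation generator on $\mathrm{span}\{\xi,\eta\}$. We then compute
\[
\langle\xi,[h,x]\eta\rangle=\langle h\xi,\xi\rangle-\langle\eta,h\eta\rangle\ge\beta-\alpha-2\epsilon,
\]
using $\langle\xi,x\eta\rangle=1$, $x\eta=\xi$, $x^*\xi=\eta$ and the self-adjointness of $h$. Therefore $\|D_h(x)\|\ge\beta-\alpha-2\epsilon$. Letting $\epsilon\to0$ gives $\|D_h\|\ge\diam\sigma(h)$, which closes the chain.

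The only delicate point is the lower bound when the extreme spectral points are not eigenvalues. Approximate eigenvectors handle this, and spectral projections of small intervals give them cleanly. One must also check that $\xi\perp\eta$ holds, which the orthogonality of $P$ and $Q$ guarantees when $\beta-\alpha>2\epsilon$; otherwise the estimate is vacuous.
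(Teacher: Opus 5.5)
Your proof is correct and follows the paper's argument essentially step for step. For the upper bound you use $D_h=D_{h-\lambda1}$ at every matrix level with $\lambda$ the midpoint of $[\min\sigma(h),\max\sigma(h)]$; for the lower bound you take unit vectors in the spectral subspaces of $[\beta-\epsilon,\beta]$ and $[\alpha,\alpha+\epsilon]$ and a rank-two witness. The only difference is the witness. The paper takes the self-adjoint $|\xi\rangle\langle\eta|+|\eta\rangle\langle\xi|$ rather than your skew-adjoint $|\xi\rangle\langle\eta|-|\eta\rangle\langle\xi|$. That choice makes no difference for this lemma, but it is what later lets Corollary~\ref{cor:expectation-realization} restrict to self-adjoint contractions $x$.
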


\begin{proof}
Write $m=\min\sigma(h)$ and $M=\max\sigma(h)$.  For every $\lambda\in\mathbb R$ and $x\in
B(\cH)$, $\|[h,x]\|=\|[h-\lambda1,x]\|\le2\|h-\lambda1\|\,\|x\|$, and $\lambda=(M+m)/2$ gives
$\|h-\lambda1\|=(M-m)/2$, hence $\|D_h\|\le M-m$.  The same estimate applied to $h\otimes1_n$,
whose spectrum is again $\sigma(h)$, gives $\|D_h\|_{\cb}\le M-m$.

For the lower bound assume $M>m$ and fix $\varepsilon\in(0,(M-m)/2)$.  The spectral projections
of $h$ for $[M-\varepsilon,M]$ and $[m,m+\varepsilon]$ are nonzero, and the two spectral
subspaces are orthogonal and reduce $h$.  Choose unit vectors $\xi,\eta$ in their ranges and set
\[
x=|\xi\rangle\langle\eta|+|\eta\rangle\langle\xi| ,
\]
a self-adjoint contraction with $\|x\|=1$.  Then $x\eta=\xi$ and $\langle\xi,xh\eta\rangle
=\langle\eta,h\eta\rangle$, so
\[
\|D_h(x)\|\ge|\langle\xi,D_h(x)\eta\rangle|
=|\langle\xi,h\xi\rangle-\langle\eta,h\eta\rangle|
\ge M-m-2\varepsilon .
\]
Letting $\varepsilon\downarrow0$ gives $\|D_h\|\ge M-m$, and $\|D_h\|\le\|D_h\|_{\cb}$ closes
the chain.  The witness is a self-adjoint contraction of rank two.  The last equality in
\eqref{eq:derivation-cb-norm} is the midpoint formula for a bounded self-adjoint operator.  The
scalar version of this computation is due to Stampfli \cite{Stampfli1970}.
\end{proof}

Stampfli's theorem holds for an arbitrary bounded operator, and this gives the profile
$C_{a,b}$ in closed form at all parameters.  For a unitary $w\in B(\cH)$ write
\begin{equation}
\nu(w):=\operatorname{dist}\bigl(0,\operatorname{conv}\sigma(w)\bigr),
\label{eq:nu-def}
\end{equation}
and let $\Theta(w)\in[0,\pi]\cup(\pi,2\pi]$ be the length of a shortest closed arc of $\mathbb
T$ containing $\sigma(w)$.  Since $\sigma(w)$ is compact, $\operatorname{conv}\sigma(w)$ is a
compact subset of the plane, so no closure is needed in \eqref{eq:nu-def}.

\begin{proposition}[Finite-scale spectral form of the order defect]
\label{prop:finite-scale-spectral}
Let $a,b\in B(\cH)$ be bounded self-adjoint, put $u=e^{isa}$, $v=e^{itb}$ and
\begin{equation}
w_{a,b}(s,t):=u^*v^*uv=e^{-isa}e^{-itb}e^{isa}e^{itb}.
\label{eq:group-commutator}
\end{equation}
Then, with $w=w_{a,b}(s,t)$,
\begin{equation}
C_{a,b}(s,t)
=\bigl\|\id-\Ad_{w}\bigr\|_{\cb}
=2\inf_{\lambda\in\mathbb C}\|w-\lambda1\|
=2\sqrt{1-\nu(w)^2}
=2\sin\Bigl(\tfrac12\min\{\Theta(w),\pi\}\Bigr).
\label{eq:finite-scale-spectral}
\end{equation}
\end{proposition}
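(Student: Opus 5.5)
The plan is to reduce everything to the single unitary $w$ and then invoke Stampfli's theorem in its general (non-self-adjoint) form.

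\emph{Step 1: reduction to $\id-\Ad_w$.} With the convention $\Ad_x\circ\Ad_y=\Ad_{yx}$ we have $\Ad_u\circ\Ad_v=\Ad_{vu}$ and $\Ad_v\circ\Ad_u=\Ad_{uv}$. Since $uv=vu\,w$, we get
\[
\Ad_{uv}=\Ad_{vuw}=\Ad_w\circ\Ad_{vu}.
\]
Hence $\Ad_{vu}-\Ad_{uv}=(\id-\Ad_w)\circ\Ad_{vu}$. Composition on the right with the complete isometry $\Ad_{vu}$, which is invertible with completely isometric inverse, preserves the cb norm. This gives the first equality in \eqref{eq:finite-scale-spectral}.

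\emph{Step 2: the cb norm of $\id-\Ad_w$.} Write $x-w^*xw=w^*(wx-xw)$. Left multiplication by the unitary $w^*$ is a complete isometry, so
\[
\|\id-\Ad_w\|_{\cb}=\|\delta_w\|_{\cb},\qquad \delta_w(x)=[w,x].
\]
Stampfli's theorem \cite{Stampfli1970} gives $\|\delta_w\|=2\inf_{\lambda}\|w-\lambda1\|$ for arbitrary bounded $w$. For the cb norm, the upper bound follows as in Lemma~\ref{lem:derivation-cb-norm}, using $\|[w-\lambda,x]\|\le2\|w-\lambda\|\|x\|$ at every matrix level for $w\otimes1_n$; the lower bound is $\|\delta_w\|\le\|\delta_w\|_{\cb}$. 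This gives the second equality.

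\emph{Step 3: the distance from $w$ to the scalars.} Since $w$ is normal, $w-\lambda$ is normal and $\|w-\lambda\|=\max_{z\in\sigma(w)}|z-\lambda|$. The quantity $\inf_\lambda\|w-\lambda\|$ is therefore the Chebyshev radius $r$ of the compact set $\sigma(w)\subset\mathbb T$. I would show $r^2=1-\nu^2$ as follows.

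\emph{Upper bound on $r$.} Let $p$ be the point of $\operatorname{conv}\sigma(w)$ nearest to $0$, so $|p|=\nu$. If $\nu=0$, take $\lambda=0$, which gives $r\le1$. If $\nu>0$, the supporting-hyperplane property gives $\operatorname{Re}(\bar p z)\ge\nu^2$ for all $z\in\sigma(w)$. Then
\[
|z-p|^2=1-2\operatorname{Re}(\bar pz)+\nu^2\le1-\nu^2 .
\]

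\emph{Lower bound on $r$.} For any $\lambda$, compute
\[
\int|z-\lambda|^2\,d\mu(z)=1-2\operatorname{Re}\Bigl(\bar\lambda\int z\,d\mu\Bigr)+|\lambda|^2
\]
for any probability measure $\mu$ on $\sigma(w)$. Choose $\mu$ with barycenter $p$, which is possible since $p\in\operatorname{conv}\sigma(w)$. Then the integral equals $1-\nu^2+|\lambda-p|^2\ge1-\nu^2$, so $\max_{z\in\sigma(w)}|z-\lambda|^2\ge1-\nu^2$.

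This yields the third equality.

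\emph{Step 4: the arc formula.} If $\sigma(w)$ lies in a closed arc of length $\Theta\le\pi$, a rotation brings the arc to be symmetric about $1$. Then $\operatorname{conv}\sigma(w)$ is contained in the circular segment cut off by the chord between the arc's endpoints, whose points closest to $0$ are on that chord, at distance $\cos(\Theta/2)$. Minimality of the arc puts both endpoints in $\sigma(w)$, so the chord midpoint lies in the hull and $\nu=\cos(\Theta/2)$. Hence
\[
2\sqrt{1-\nu^2}=2\sin(\Theta/2).
\]
If no arc of length $\le\pi$ contains $\sigma(w)$, then $\sigma(w)$ lies in no closed half-plane through $0$, so $0\in\operatorname{conv}\sigma(w)$, $\nu=0$, and the value is $2=2\sin(\pi/2)$.

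The main obstacle I expect is this last case distinction. One has to verify that ``no closed semicircle contains $\sigma(w)$'' is equivalent to $0\in\operatorname{conv}\sigma(w)$, which is a separation argument using compactness. One also has to check the boundary case $\Theta=\pi$, where both formulas give $\nu=0$. The remaining steps are routine once Stampfli's theorem is granted.
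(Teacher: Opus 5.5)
Your proposal is correct. Steps 1 and 2 match the paper's proof. The paper composes on the right with $\Ad_{(vu)^*}$ rather than factoring out $\Ad_{vu}$, and it gets $\|\delta_w\|_{\cb}=\|\delta_w\|$ by viewing $\delta_w\otimes\id_n$ as the inner derivation of $w\otimes 1_n$; this amounts to your trivial upper bound together with $\|\delta_w\|\le\|\delta_w\|_{\cb}$.

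The genuine difference is in how you obtain the last two equalities. The paper never proves $R^2=1-\nu(w)^2$, where $R$ is the Chebyshev radius of $\sigma(w)$, as a separate fact. Instead it splits on $\Theta(w)$:
\begin{itemize}
\item For $\Theta\le\pi$ it rotates and computes both quantities explicitly. It finds $R=\sin(\Theta/2)$ with centre $\cos(\Theta/2)$, the lower bound coming from the two endpoints $e^{\pm i\Theta/2}$, which lie at distance $2\sin(\Theta/2)$. It finds $\nu=\cos(\Theta/2)$ from the half-plane $\{\operatorname{Re}z\ge\cos(\Theta/2)\}$ and the chord midpoint.
\item For $\Theta>\pi$ it gets $\nu=0$ from a half-plane argument. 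It then gets $R=1$ by contradiction: a centre $c$ with $R<1$ would force $\operatorname{Re}(z\bar c)\ge 1-R>0$ on $\sigma(w)$.
\end{itemize}

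You instead prove $R^2=1-\nu^2$ directly by convex duality, and only afterwards relate $\nu$ to $\Theta$. The nearest point $p$ of the hull is an admissible centre by the supporting-hyperplane inequality. Averaging $|z-\lambda|^2$ against a measure with barycentre $p$ gives the matching lower bound $1-\nu^2+|\lambda-p|^2$.

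What each route buys:
\begin{itemize}
\item Your route treats both cases uniformly: the bound $R\ge1$ when $\nu=0$ is just the averaging identity with barycentre $0$. It also shows that $p$ is the unique Chebyshev centre, and it works unchanged for compact subsets of the unit sphere of $\mathbb R^n$.
\item The paper's route is more concrete. It produces the explicit centre and the value $\sin(\Theta/2)$ in one pass, and it needs no nearest-point or Carath\'eodory/barycentre step.
\end{itemize}

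One small correction to your closing remark. You need only the implication ``$\sigma(w)$ lies in no closed semicircle $\Rightarrow$ $0\in\operatorname{conv}\sigma(w)$''. This follows at once because strictly separating $0$ from the compact convex hull would put $\sigma(w)$ in an open half-plane. The converse is false: $\sigma(w)=\{1,-1\}$ lies in a closed semicircle yet has $0$ in its hull. That is exactly the case $\Theta=\pi$, which your first branch already covers.
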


\begin{proof}
With $\Ad_x\circ\Ad_y=\Ad_{yx}$ one has $\Ad_{e^{isa}}\circ\Ad_{e^{itb}}=\Ad_{vu}$ and
$\Ad_{e^{itb}}\circ\Ad_{e^{isa}}=\Ad_{uv}$.  Composing on the right with the surjective complete
isometry $\Ad_{(vu)^*}$ leaves cb norms unchanged and sends $\Ad_{vu}$ to $\id$ and $\Ad_{uv}$
to $\Ad_{(vu)^*(uv)}=\Ad_{w}$, which gives the first equality.

Next, $x-w^*xw=w^*(wx-xw)$, so $\id-\Ad_w=L_{w^*}\circ\delta_w$ with $\delta_w(x)=[w,x]$ and
$L_{w^*}(y)=w^*y$; left multiplication by a unitary is a surjective complete isometry, whence
$\|\id-\Ad_w\|_{\cb}=\|\delta_w\|_{\cb}$.  At matrix level $\delta_w\otimes\id_n$ is the inner
derivation implemented by $w\otimes1_n$, and $\|w\otimes1_n-\lambda1\|=\|w-\lambda1\|$, so
$\|\delta_w\|_{\cb}=\|\delta_w\|$.  Stampfli's theorem \cite{Stampfli1970} for an arbitrary
bounded operator gives $\|\delta_w\|=2\inf_{\lambda\in\mathbb C}\|w-\lambda1\|$.

For the last two equalities, $w$ is normal, so
$\|w-\lambda1\|=\max\{|z-\lambda|:z\in\sigma(w)\}$ and $\inf_\lambda\|w-\lambda1\|$ is the
radius $R$ of the smallest closed disc containing $\sigma(w)$.  Suppose first
$\Theta:=\Theta(w)\le\pi$.  After a rotation,
$\sigma(w)\subset\{e^{i\varphi}:|\varphi|\le\Theta/2\}$ with $e^{\pm i\Theta/2}\in\sigma(w)$.
For $|\varphi|\le\Theta/2$,
\[
\bigl|e^{i\varphi}-\cos(\Theta/2)\bigr|^2
=1-2\cos(\Theta/2)\cos\varphi+\cos^2(\Theta/2)
\le\sin^2(\Theta/2),
\]
so $R\le\sin(\Theta/2)$, while $|e^{i\Theta/2}-e^{-i\Theta/2}|=2\sin(\Theta/2)$ forces equality.
Moreover $\cos\varphi\ge\cos(\Theta/2)$ on that arc gives
$\operatorname{conv}\sigma(w)\subset\{\Re z\ge\cos(\Theta/2)\}$, with the value $\cos(\Theta/2)$
attained at the midpoint of the chord through $e^{\pm i\Theta/2}$; hence $\nu(w)=\cos(\Theta/2)$
and $R=\sqrt{1-\nu(w)^2}$.  If instead $\Theta>\pi$, then $\sigma(w)$ lies in no open half-plane
through $0$, so $0\in\operatorname{conv}\sigma(w)$ and $\nu(w)=0$.  Here $R\le1$; and $R<1$
would give a centre $c$ with $|z-c|\le R$ for all $z\in\sigma(w)$, hence $|c|\ge1-R>0$ and
$\Re(z\bar c)\ge\tfrac12(1+|c|^2-R^2)\ge1-R>0$, putting $\sigma(w)$ in an open half-plane
through $0$ and contradicting $\nu(w)=0$.  So $R=1=\sqrt{1-\nu(w)^2}=\sin(\pi/2)$.
\end{proof}

In finite dimensions the corresponding stabilized distance formula for unitary channels is
standard; see \cite{NechitaPuchalaPawelaZyczkowski2018}.  Expanding
\eqref{eq:group-commutator} gives $w_{a,b}(s,t)=1-st[a,b]+O(|st|(|s|+|t|))$.  For small
$(s,t)$ its principal logarithm gives $w_{a,b}(s,t)=e^{iP_{s,t}}$, with $P_{s,t}=P_{s,t}^*$,
$\|P_{s,t}\|<\pi$, and $P_{s,t}=-st(-i[a,b])+O(|st|(|s|+|t|))$.  Spectral extrema of
self-adjoint operators are norm-Lipschitz, hence
$\Theta(w_{a,b}(s,t))=\diam\sigma(P_{s,t})=|st|\diam\sigma(-i[a,b])+o(|st|)$.  Thus
\eqref{eq:finite-scale-spectral} recovers Theorem~\ref{thm:quantitative-essential}(i); the
uniform statement in part (ii) still requires the exponential-series remainder estimate.

\begin{definition}[Pairwise composability defect]
\label{def:Gamma}
For von Neumann algebras $M,N\subset B(\cH)$ put
\begin{equation}
\Gamma(M,N):=\!\!\sup_{\substack{a=a^*\in M,\ \|a\|\le1\\ b=b^*\in N,\ \|b\|\le1}}\!\!
\|[D_a,D_b]\|_{\cb},
\qquad
\Delta_{\rm sa}(M,N):=\!\!\sup_{\substack{a=a^*\in M,\ \|a\|\le1\\ b=b^*\in N,\ \|b\|\le1}}\!\!
\|[a,b]\| ,
\label{eq:Gamma-Delta-defs}
\end{equation}
and for a spacelike pair $O_1\perp O_2$ set
$\Gamma_{\cA}(O_1,O_2):=\Gamma(\cA^d(O_1),\cA^d(O_2))$.
\end{definition}

\begin{theorem}[Channel-order asymptotics and essential duality]
\label{thm:quantitative-essential}
Let $O_1\perp O_2$ be a spacelike pair.
\begin{enumerate}[label=(\roman*)]
\item For fixed bounded self-adjoint $a\in\cA^d(O_1)$ and $b\in\cA^d(O_2)$,
\begin{equation}
\lim_{\substack{(s,t)\to(0,0)\\ st\neq0}}\frac{C_{a,b}(s,t)}{|st|}
=\|[D_a,D_b]\|_{\cb}
=\diam\sigma(-i[a,b]).
\label{eq:exact-second-order}
\end{equation}
\item At common scale the convergence is uniform on the two self-adjoint unit balls:
\begin{equation}
\Gamma_{\cA}(O_1,O_2)
=\lim_{\varepsilon\downarrow0}\frac{1}{\varepsilon^2}
\sup_{\substack{a=a^*\in\cA^d(O_1),\ \|a\|\le1\\ b=b^*\in\cA^d(O_2),\ \|b\|\le1}}
C_{a,b}(\varepsilon,\varepsilon).
\label{eq:Gamma-common-scale}
\end{equation}
\item $\cA$ satisfies essential duality if and only if $\Gamma_{\cA}(O_1,O_2)=0$ for every
spacelike pair.
\item If essential duality fails, then for some spacelike pair there are self-adjoint contractions
$a\in\cA^d(O_1)$, $b\in\cA^d(O_2)$ with $c:=\diam\sigma(-i[a,b])>0$ such that for every
$\eta\in(0,1)$ there is $\delta>0$ with
\begin{equation}
(1-\eta)c\,|st|\le C_{a,b}(s,t)\le(1+\eta)c\,|st|
\qquad\text{whenever }0<|s|,|t|<\delta .
\label{eq:sharp-order-gap}
\end{equation}
The corresponding flows lie in $\NS_{\cA}(O_1)$ and $\NS_{\cA}(O_2)$ and converge to the
identity in cb norm.
\end{enumerate}
\end{theorem}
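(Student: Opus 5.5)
The plan is to work in the Banach algebra $CB(B(\cH))$ with $\Ad_{e^{isa}}=e^{sD_a}$, and to expand the group commutator of exponentials with explicit remainders. For (i) and (ii), write
\[
e^{sD_a}e^{tD_b}-e^{tD_b}e^{sD_a}=st\,[D_a,D_b]+R_{a,b}(s,t).
\]
Bounding $R$ is routine. Expand both exponentials as power series. The terms of total degree $\le 1$ in $(s,t)$ cancel, as do pure powers of $s$ or of $t$. Every remaining term other than $st[D_a,D_b]$ is a product containing at least one $D_a$ and one $D_b$ and of total degree at least $3$. Since $\|D_a\|_{\cb}\le2\|a\|\le2$ by Lemma~\ref{lem:derivation-cb-norm}, summing the series gives
\[
\|R_{a,b}(s,t)\|_{\cb}\le K\,|st|(|s|+|t|)\quad\text{for }|s|,|t|\le1,
\]
with $K$ an absolute constant (something like $K=8e^4$) depending only on the norm bound. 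This bound is uniform over the self-adjoint unit balls.

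Next, $[D_a,D_b]=D_{-i[a,b]}$ by the Jacobi identity: $[D_a,D_b](x)=-[a,[b,x]]+[b,[a,x]]=-[[a,b],x]$. Lemma~\ref{lem:derivation-cb-norm} applied to $h=-i[a,b]$ then gives $\|[D_a,D_b]\|_{\cb}=\diam\sigma(-i[a,b])$. Hence
\[
\bigl|C_{a,b}(s,t)-|st|\diam\sigma(-i[a,b])\bigr|\le K|st|(|s|+|t|).
\]
Dividing by $|st|$ proves (i); rescaling covers non-contractive $a,b$.

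For (ii), put $s=t=\varepsilon$ and take the supremum over the unit balls. The uniform remainder gives
\[
\bigl|\varepsilon^{-2}\sup C_{a,b}(\varepsilon,\varepsilon)-\Gamma_{\cA}(O_1,O_2)\bigr|\le 2K\varepsilon,
\]
because $\Gamma$ is by Definition~\ref{def:Gamma} the supremum of $\|[D_a,D_b]\|_{\cb}$. This is why the uniformity is essential, and it is the step I expect to need the most care. One has to check that the constant really is independent of $a,b$; pointwise convergence as in Proposition~\ref{prop:finite-scale-spectral} would not suffice to interchange the limit with the supremum.

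For (iii), $\Gamma_{\cA}(O_1,O_2)=0$ means $\diam\sigma(-i[a,b])=0$ for all self-adjoint $a,b$ in the respective dual algebras. Then $-i[a,b]=\lambda1$ with $\lambda$ real. A commutator $[a,b]$ with $a,b$ bounded cannot be a nonzero scalar (Wintner--Wielandt), so $[a,b]=0$. Self-adjoint parts span the algebras, so $[\cA^d(O_1),\cA^d(O_2)]=0$. The converse is trivial, and Definition~\ref{def:essential-duality} closes the equivalence. Alternatively, the argument of Lemma~\ref{lem:channel-commutation} (spectrum shift forces $c=0$) could replace Wintner--Wielandt.

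For (iv), failure of essential duality gives, via (iii), a pair with $\Gamma_{\cA}>0$, hence self-adjoint contractions $a,b$ with $c=\diam\sigma(-i[a,b])>0$. The estimate from (i) gives $|C_{a,b}(s,t)-c|st||\le K|st|(|s|+|t|)$, so $\delta=\eta c/(2K)$ works. Membership of the flows in $\NS_{\cA}(O_i)$ follows from \eqref{eq:tangent-reconstruction}, and cb convergence to the identity is the bound $\|\Ad_u-\id\|_{\cb}\le2\|u-1\|$ established in the proof of Theorem~\ref{thm:haag-kraus}.
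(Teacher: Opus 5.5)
Your proposal is correct and follows essentially the same route as the paper: an exponential-series expansion in $CB(B(\cH))$ with a remainder bound that is uniform over the self-adjoint unit balls, the Jacobi identity $[D_a,D_b]=D_{-i[a,b]}$ combined with Lemma~\ref{lem:derivation-cb-norm}, exclusion of a nonzero scalar commutator for (iii), and (i) together with \eqref{eq:tangent-reconstruction} for (iv). The only differences are cosmetic: you use the polynomial remainder $K|st|(|s|+|t|)$ where the paper uses $2|st|AB(e^{|s|A+|t|B}-1)$, and you cite Wintner--Wielandt where the paper gives the equivalent spectral-translation argument.
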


\begin{proof}
For self-adjoint $a,b$ the Jacobi identity gives
\begin{equation}
[D_a,D_b](x)=-[[a,b],x]=D_{-i[a,b]}(x),
\label{eq:derivation-bracket}
\end{equation}
and $-i[a,b]$ is self-adjoint, so Lemma~\ref{lem:derivation-cb-norm} yields the second equality
in \eqref{eq:exact-second-order}.

Multiplying the absolutely convergent exponential series in $CB(B(\cH))$ gives
\begin{equation}
\Ad_{e^{isa}}\Ad_{e^{itb}}-\Ad_{e^{itb}}\Ad_{e^{isa}}
=st\,[D_a,D_b]+R_{s,t},
\label{eq:cb-expansion}
\end{equation}
and, with $A=\|D_a\|_{\cb}$ and $B=\|D_b\|_{\cb}$,
\begin{equation}
\|R_{s,t}\|_{\cb}
\le2\,|st|\,AB\bigl(e^{|s|A+|t|B}-1\bigr).
\label{eq:remainder-bound}
\end{equation}
Indeed, after the $st[D_a,D_b]$ term is removed, absolute summation of the remaining mixed
terms gives the displayed bound.  Dividing \eqref{eq:cb-expansion} by $|st|$ and using
\eqref{eq:remainder-bound} proves (i).

If $a,b$ are self-adjoint contractions then $A,B\le2$ by
Lemma~\ref{lem:derivation-cb-norm}, so \eqref{eq:remainder-bound} at $s=t=\varepsilon>0$ gives
\[
\frac{\|R_{\varepsilon,\varepsilon}\|_{\cb}}{\varepsilon^2}\le8\bigl(e^{4\varepsilon}-1\bigr),
\]
a bound independent of $a$ and $b$.  Therefore
$\bigl|\varepsilon^{-2}C_{a,b}(\varepsilon,\varepsilon)-\|[D_a,D_b]\|_{\cb}\bigr|
\le8(e^{4\varepsilon}-1)$ uniformly over the two unit balls, and taking suprema proves (ii).

If essential duality holds, the dual algebras commute at spacelike separation and every
$[D_a,D_b]$ vanishes.  Conversely, suppose $\Gamma_{\cA}(O_1,O_2)=0$.  Then $[D_a,D_b]=0$ for
all self-adjoint $a\in\cA^d(O_1)$, $b\in\cA^d(O_2)$, so by \eqref{eq:derivation-bracket} the
self-adjoint operator $-i[a,b]$ is central in $B(\cH)$ and hence equals $\gamma1$ for some
$\gamma\in\mathbb R$.  If $\gamma\neq0$, then $\frac{d}{dt}\bigl(e^{ita}be^{-ita}\bigr)
=ie^{ita}[a,b]e^{-ita}=-\gamma1$, so $e^{ita}be^{-ita}=b-\gamma t\,1$, which is impossible
because unitary conjugation preserves the nonempty compact spectrum of $b$ while a nonzero real
translation does not.  Hence $[a,b]=0$; since self-adjoint elements span,
$[\cA^d(O_1),\cA^d(O_2)]=0$ for every spacelike pair, which is (iii).

For (iv), failure of essential duality gives by (iii) a spacelike pair with
$\Gamma_{\cA}(O_1,O_2)>0$, hence self-adjoint contractions $a,b$ with $c>0$; the two-sided
estimate is (i), no uniformity in $a,b$ being used.  The flows are complement-fixing by
\eqref{eq:tangent-reconstruction} and satisfy
$\|\Ad_{e^{isa}}-\id\|_{\cb}\le2\|e^{isa}-1\|\to0$.
\end{proof}

\begin{proposition}[Reduction of the optimized defect]
\label{prop:Gamma-Delta}
For von Neumann algebras $M,N\subset B(\cH)$,
\begin{equation}
\Gamma(M,N)
=\!\!\sup_{\substack{a=a^*\in M,\ \|a\|\le1\\ b=b^*\in N,\ \|b\|\le1}}\!\!
\diam\sigma(-i[a,b])
=2\Delta_{\rm sa}(M,N).
\label{eq:Gamma-Delta}
\end{equation}
In particular $\Gamma_{\cA}(O_1,O_2)=2\Delta_{\rm sa}(\cA^d(O_1),\cA^d(O_2))\in[0,4]$ for every
spacelike pair.
\end{proposition}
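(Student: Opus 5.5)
The first equality in \eqref{eq:Gamma-Delta} is termwise. By \eqref{eq:derivation-bracket}, for self-adjoint $a,b$ we have $[D_a,D_b]=D_{-i[a,b]}$. Lemma~\ref{lem:derivation-cb-norm} then gives $\|[D_a,D_b]\|_{\cb}=\diam\sigma(-i[a,b])$ for every admissible pair, and taking the supremum over the two self-adjoint unit balls yields the first equality. The statement about $\Gamma_{\cA}(O_1,O_2)$ follows from Definition~\ref{def:Gamma} once the second equality is proved, because $\|[a,b]\|\le2$ for contractions gives $\Delta_{\rm sa}\le2$.

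For the upper bound in the second equality, put $h=-i[a,b]$. Then $\diam\sigma(h)\le 2\inf_\lambda\|h-\lambda1\|\le2\|h\|=2\|[a,b]\|$, by the last equality in Lemma~\ref{lem:derivation-cb-norm} with $\lambda=0$. Hence $\Gamma\le2\Delta_{\rm sa}$.

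For the lower bound I would first show that $0\in[\min\sigma(h),\max\sigma(h)]$ for every such $h$, which already gives $\diam\sigma(h)\ge\|h\|$. This is the argument from the proof of Theorem~\ref{thm:quantitative-essential}(iii) in order-theoretic form. If $h\ge c1$ with $c>0$, then $b_t:=e^{ita}be^{-ita}$ satisfies $\tfrac{d}{dt}b_t=-e^{ita}he^{-ita}\le -c1$. Hence $b_t\le b-ct1$, and $\max\sigma(b_t)\to-\infty$, which contradicts $\|b_t\|=\|b\|$. The case $h\le -c1$ is symmetric.

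The main obstacle is the missing factor $2$: a near-optimal pair for $\|[a,b]\|$ must be upgraded to one whose commutator spectrum is nearly symmetric, $[-\|h\|,\|h\|]$. I would try the following first. Let $\xi$ be a unit vector with $\langle\xi,h\xi\rangle\approx\|h\|$ (after possibly replacing $b$ by $-b$). Modify $a$ inside $M$ by functional calculus, say $a\mapsto f(a)$ or $a\mapsto 2p-1$ for a spectral projection $p$ of $a$, and modify $b$ inside $N$ by unitary conjugation $vbv^*$ with $v\in\U(N)$. The aim is a new self-adjoint contraction pair for which one vector sees $\approx+\|h\|$ and an orthogonal vector sees $\approx-\|h\|$, mimicking the rank-two witness in Lemma~\ref{lem:derivation-cb-norm}. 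The model case is the Pauli pair $a=\sigma_x$, $b=\sigma_y$, where $-i[a,b]=2\sigma_z$ has spectrum exactly $\{\pm2\}$, so $\diam=4=2\|[a,b]\|$.

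If this direct symmetrization fails, I would fall back on a compression argument. Restrict to the two-dimensional space spanned by $\xi$ and $a\xi$ (or $b\xi$), show that the compressions of $a$ and $b$ approximately generate a $2\times2$ matrix picture, and read off symmetric spectral edges of $-i[a,b]$ there. This would establish $\sup\diam\sigma(-i[a,b])\ge2\Delta_{\rm sa}(M,N)$ and complete \eqref{eq:Gamma-Delta}.
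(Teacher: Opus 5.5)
\textbf{There is a genuine gap in the lower bound.} Several parts of your proposal are correct: the first equality, the upper bound $\Gamma\le2\Delta_{\rm sa}$, the range statement, and your order argument showing $\min\sigma(h)\le0\le\max\sigma(h)$. But that last argument only gives $\diam\sigma(h)\ge\|h\|$. The inequality $\sup\diam\sigma(-i[a,b])\ge2\Delta_{\rm sa}(M,N)$, which is the real content of the proposition, is left as a plan, and the plan does not work as stated.

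Your compression fallback reads spectral edges off the \emph{given} pair. The commutator spectrum of a given pair need not be symmetric, however. For example, $a=\operatorname{diag}(1,0,-1)$ and $b=\tfrac23\begin{psmallmatrix}0&i&i/2\\-i&0&i\\-i/2&-i&0\end{psmallmatrix}$ are self-adjoint contractions with $-i[a,b]=\tfrac23(J-1)$, where $J$ is the all-ones matrix. Its eigenvalues are $\tfrac43,-\tfrac23,-\tfrac23$. You give no reason why near-optimality of $\|[a,b]\|$ would exclude such asymmetry. A compression cannot create a lower spectral edge that is not there, and compressions of $a,b$ leave $M,N$ in any case. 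So the pair itself must be replaced inside $M$ and $N$, and the whole difficulty is to do this without losing $\|[a,b]\|$. Replacing $a$ by $2p-1$ for a single spectral projection $p$ gives no such control, and conjugating $b$ by $v\in\U(N)$ does not help.

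\textbf{The missing idea is a reduction to symmetries, chosen by convexity.}
\begin{enumerate}[label=(\roman*)]
\item By the spectral theorem, every self-adjoint contraction in $M$ is a norm limit of operators $\sum_j\lambda_jp_j$. Here $\lambda_j\in[-1,1]$ and the $p_j\in M$ are orthogonal projections summing to $1$.
\item Each such operator is a convex combination of the symmetries $\sum_j\varepsilon_jp_j\in M$ with $\varepsilon_j=\pm1$.
\item The map $a\mapsto\|[a,b]\|$ is convex and norm continuous. Hence $\Delta_{\rm sa}(M,N)$ equals the supremum of $\|[s,b]\|$ over symmetries $s\in M$ and self-adjoint contractions $b\in N$.
\item For a symmetry $s$ and $h=-i[s,b]$, one has $shs=-i(bs-sb)=-h$. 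Therefore $\sigma(h)=-\sigma(h)$ and $\diam\sigma(h)=2\|h\|$.
\end{enumerate}
This yields exactly the missing factor $2$.

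The paper makes both $a$ and $b$ symmetries $s,t$. It then gets the spectral symmetry from the unitary $u=st$, using $sus=u^*$ and $-i[s,t]=2\operatorname{Im}u$. The identity $shs=-h$ shows that a single symmetry already suffices.
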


\begin{proof}
The first equality is \eqref{eq:derivation-bracket} together with
Lemma~\ref{lem:derivation-cb-norm}.  Since $\diam\sigma(h)\le2\|h\|$ for self-adjoint $h$, the
middle term of \eqref{eq:Gamma-Delta} is at most $2\Delta_{\rm sa}(M,N)$.

For the reverse inequality, let $M_{\rm sa,1}$ be the self-adjoint unit ball of $M$.  By the
spectral theorem every $a\in M_{\rm sa,1}$ is a norm limit of operators
$a_0=\sum_{j=1}^m\lambda_jp_j$ with $\lambda_j\in[-1,1]$ and $p_j\in M$ mutually orthogonal
projections summing to $1$.  The vector $(\lambda_1,\dots,\lambda_m)$ lies in the convex hull of
$\{-1,1\}^m$, so $a_0$ is a convex combination of the symmetries $\sum_j\varepsilon_jp_j\in M$,
$\varepsilon_j\in\{-1,1\}$.  Hence $M_{\rm sa,1}$ is the norm-closed convex hull of the
self-adjoint unitaries of $M$, and similarly for $N$.  Bilinearity of the commutator and
convexity of the norm give
\begin{equation}
\Delta_{\rm sa}(M,N)
=\sup\bigl\{\|[s,t]\|:\ s=s^*=s^{-1}\in M,\ t=t^*=t^{-1}\in N\bigr\}.
\label{eq:Delta-symmetries}
\end{equation}

Fix such symmetries $s,t$ and put $u=st$, a unitary.  Then $sus=s(st)s=ts=(st)^*=u^*$, so $u^*$
is unitarily equivalent to $u$ and $\sigma(u)=\overline{\sigma(u)}$.  The continuous function
$f(z)=-i(z-\bar z)=2\operatorname{Im}z$ on $\mathbb T$ is real valued and satisfies $f(\bar
z)=-f(z)$, and $f(u)=-i(u-u^*)=-i[s,t]$.  By the spectral mapping theorem
$\sigma(-i[s,t])=f(\sigma(u))$, a compact subset of $\mathbb R$ symmetric about the origin.  For
a self-adjoint operator with spectrum symmetric about the origin the diameter of the spectrum is
twice the norm, so
\[
\diam\sigma(-i[s,t])=2\|[s,t]\| .
\]
Taking the supremum over symmetries and using \eqref{eq:Delta-symmetries} gives the reverse
inequality.  Finally $\Delta_{\rm sa}(M,N)\le2$, whence the stated range.
\end{proof}

\begin{definition}[Local channel-class order defects]
\label{def:local-channel-defects}
Let $M,N\subset B(\cH)$ be von Neumann algebras.  For $0<\varepsilon<2$ define the reversible
cb-ball defect
\begin{equation}
\Omega^{\mathrm{rev}}_\varepsilon(M,N)
:=
\sup\left\{
\|\alpha\circ\beta-\beta\circ\alpha\|_{\cb}:
\begin{array}{l}
\alpha=\Ad_u,\ u\in\U(M),\quad \beta=\Ad_v,\ v\in\U(N),\\
\|\alpha-\id\|_{\cb}\le\varepsilon,\quad
\|\beta-\id\|_{\cb}\le\varepsilon
\end{array}
\right\}.
\label{eq:Omega-rev-def}
\end{equation}
For $\varepsilon>0$ define the full-channel cb-ball defect
\begin{equation}
\Omega_\varepsilon(M,N)
:=
\sup\left\{
\|\Phi\circ\Psi-\Psi\circ\Phi\|_{\cb}:
\begin{array}{l}
\Phi\in\Chan(M),\quad \Psi\in\Chan(N),\\
\|\Phi-\id\|_{\cb}\le\varepsilon,\quad
\|\Psi-\id\|_{\cb}\le\varepsilon
\end{array}
\right\}.
\label{eq:Omega-full-def}
\end{equation}
\end{definition}

\begin{theorem}[Intrinsic cb-ball realization of the reversible defect]
\label{thm:reversible-cb-tangent}
For von Neumann algebras $M,N\subset B(\cH)$ and $0<\varepsilon<2$, put
$r_\varepsilon:=\arcsin(\varepsilon/2)$.  Then
\begin{equation}
\Omega^{\mathrm{rev}}_\varepsilon(M,N)
=
\sup_{\substack{a=a^*\in M,\ \|a\|\le1\\ b=b^*\in N,\ \|b\|\le1}}
C_{a,b}(r_\varepsilon,r_\varepsilon).
\label{eq:Omega-rev-exact}
\end{equation}
Consequently,
\begin{equation}
\lim_{\varepsilon\downarrow0}
\frac{\Omega^{\mathrm{rev}}_\varepsilon(M,N)}{\varepsilon^2}
=
\frac14\,\Gamma(M,N)
=
\frac12\,\Delta_{\rm sa}(M,N).
\label{eq:reversible-cb-tangent}
\end{equation}
Thus $\Gamma/4$ is the intrinsic quadratic cb-metric tangent of the reversible $M$- and
$N$-inner channel classes at the identity.
\end{theorem}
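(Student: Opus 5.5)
The plan is to identify the reversible channels in each cb ball exactly as the unitary flows $\Ad_{e^{ir_\varepsilon a}}$ with $a$ ranging over the self-adjoint unit ball, and then quote the uniform remainder estimate from the proof of Theorem~\ref{thm:quantitative-essential}(ii).

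\textbf{Step 1: the reversible cb ball.} For a unitary $u$, Proposition~\ref{prop:finite-scale-spectral} with $w=u$ gives
\[
\|\Ad_u-\id\|_{\cb}=2\sin\bigl(\tfrac12\min\{\Theta(u),\pi\}\bigr).
\]
The relevant identity is the case $w=u$ of the last two equalities in that proposition; these use only normality of $w$.

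Since $0<\varepsilon<2$, we have $2r_\varepsilon<\pi$. Hence $\|\Ad_u-\id\|_{\cb}\le\varepsilon$ holds exactly when $\Theta(u)\le2r_\varepsilon$. The case $\Theta>\pi$ gives the value $2>\varepsilon$, so it is excluded.

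\textbf{Step 2: logarithms inside $M$.} I expect this to be the only delicate step. Suppose $u\in\U(M)$ and $\sigma(u)$ lies in a closed arc of length at most $2r_\varepsilon$ centred at some $\lambda\in\mathbb T$. Then $\bar\lambda u$ has spectrum in $\{e^{i\varphi}:|\varphi|\le r_\varepsilon\}$.

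Apply the bounded Borel (in fact continuous) function $z\mapsto\arg(z)/r_\varepsilon$, taking the principal branch on that arc. This produces $a=a^*\in M$ with $\|a\|\le1$ and $\bar\lambda u=e^{ir_\varepsilon a}$. Functional calculus stays in $M$, so $a\in M$.

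Because a scalar phase does not change $\Ad$, we get $\Ad_u=\Ad_{e^{ir_\varepsilon a}}$. Conversely, for any self-adjoint contraction $a\in M$, the unitary $e^{ir_\varepsilon a}$ has spectrum in an arc of length at most $2r_\varepsilon$, so $\Ad_{e^{ir_\varepsilon a}}$ lies in the $\varepsilon$-ball.

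Thus the reversible $M$-inner channels in the cb $\varepsilon$-ball are precisely $\{\Ad_{e^{ir_\varepsilon a}}:a=a^*\in M,\ \|a\|\le1\}$. The same holds for $N$.

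\textbf{Step 3: the exact formula.} Substituting Step~2 into \eqref{eq:Omega-rev-def} gives \eqref{eq:Omega-rev-exact} directly, by the definition \eqref{eq:channel-order-profile} of $C_{a,b}$.

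\textbf{Step 4: the limit.} Put $r=r_\varepsilon$. The proof of Theorem~\ref{thm:quantitative-essential}(ii) shows that, uniformly over self-adjoint contractions $a\in M$ and $b\in N$,
\[
\Bigl|\frac{C_{a,b}(r,r)}{r^2}-\|[D_a,D_b]\|_{\cb}\Bigr|\le8\bigl(e^{4r}-1\bigr).
\]
That argument used only the contraction hypothesis, not the net.

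Taking suprema gives $\Omega^{\rm rev}_\varepsilon/r^2\to\Gamma(M,N)$ as $r\to0$. Since $\arcsin x\sim x$, we have $r_\varepsilon/\varepsilon\to\frac12$. Therefore
\[
\frac{\Omega^{\rm rev}_\varepsilon}{\varepsilon^2}=\frac{\Omega^{\rm rev}_\varepsilon}{r_\varepsilon^2}\cdot\frac{r_\varepsilon^2}{\varepsilon^2}\longrightarrow\frac14\Gamma(M,N).
\]
Proposition~\ref{prop:Gamma-Delta} converts $\frac14\Gamma(M,N)$ to $\frac12\Delta_{\rm sa}(M,N)$.
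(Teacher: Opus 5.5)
Your proposal is correct and follows the paper's proof essentially step for step: it identifies the reversible $\varepsilon$-balls with $\{\Ad_{e^{ir_\varepsilon a}}: a=a^*\in M,\ \|a\|\le1\}$ via the arc formula for $\|\Ad_u-\id\|_{\cb}$ and a continuous logarithm inside $M$, then passes to the limit using the uniform bound $8(e^{4r}-1)$ and $r_\varepsilon/\varepsilon\to\frac12$. One wording fix in Step~1: the identity you need also includes the Stampfli step $\|\Ad_u-\id\|_{\cb}=2\inf_{\lambda\in\mathbb C}\|u-\lambda1\|$, which uses that $u$ is unitary and not merely normal; since $u$ is unitary, this does not affect the argument.
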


\begin{proof}
The proof of Proposition~\ref{prop:finite-scale-spectral}, applied to an arbitrary unitary $u$,
gives
\begin{equation}
\|\Ad_u-\id\|_{\cb}
=
2\sin\Bigl(\tfrac12\min\{\Theta(u),\pi\}\Bigr).
\label{eq:unitary-distance-arc}
\end{equation}
Suppose $\|\Ad_u-\id\|_{\cb}\le\varepsilon<2$.  Then
$\Theta(u)\le2r_\varepsilon<\pi$.  After multiplying $u$ by a scalar phase, which does not
change $\Ad_u$, its spectrum lies in
$\{e^{i\theta}:|\theta|\le r_\varepsilon\}$.  Continuous functional calculus on this arc gives
a self-adjoint $h\in M$ with $\|h\|\le r_\varepsilon$ and
$\Ad_u=\Ad_{e^{ih}}$.  Hence $h=r_\varepsilon a$ for a self-adjoint contraction $a\in M$.
Conversely, if $a=a^*\in M$ and $\|a\|\le1$, then
$\Theta(e^{ir_\varepsilon a})\le2r_\varepsilon$, so
\eqref{eq:unitary-distance-arc} gives
$\|\Ad_{e^{ir_\varepsilon a}}-\id\|_{\cb}\le2\sin r_\varepsilon=\varepsilon$.
The same argument applies to $N$, and scalar phases do not affect the order of the implemented
automorphisms.  This proves \eqref{eq:Omega-rev-exact}.

For self-adjoint contractions $a,b$, the uniform estimate in the proof of
Theorem~\ref{thm:quantitative-essential}(ii) gives
\[
\left|
r^{-2}C_{a,b}(r,r)-\|[D_a,D_b]\|_{\cb}
\right|
\le 8(e^{4r}-1),
\]
independently of $a,b$.  Taking suprema and letting $r\downarrow0$ therefore yields
\[
\frac{1}{r^2}
\sup_{\substack{a=a^*\in M,\ \|a\|\le1\\ b=b^*\in N,\ \|b\|\le1}}
C_{a,b}(r,r)
\longrightarrow \Gamma(M,N).
\]
Since $r_\varepsilon/\varepsilon\to1/2$, \eqref{eq:Omega-rev-exact} gives the first equality in
\eqref{eq:reversible-cb-tangent}; the second is Proposition~\ref{prop:Gamma-Delta}.
\end{proof}

\begin{proposition}[Full-channel quadratic cb-ball coefficient]
\label{prop:full-channel-cb-tangent}
For von Neumann algebras $M,N\subset B(\cH)$,
\begin{equation}
0\le \Omega_\varepsilon(M,N)\le 2\varepsilon^2
\qquad(\varepsilon>0).
\label{eq:Omega-full-upper}
\end{equation}
Moreover, $\varepsilon\mapsto\Omega_\varepsilon(M,N)/\varepsilon^2$ is nonincreasing as
$\varepsilon$ increases.  Hence the quadratic cb-ball coefficient
\begin{equation}
\Lambda(M,N):=
\lim_{\varepsilon\downarrow0}
\frac{\Omega_\varepsilon(M,N)}{\varepsilon^2}
\label{eq:Lambda-def}
\end{equation}
exists and satisfies
\begin{equation}
\frac14\,\Gamma(M,N)\le\Lambda(M,N)\le2.
\label{eq:Lambda-bounds}
\end{equation}
It also has the exact normalized characterization
\begin{equation}
\Lambda(M,N)
=
\sup_{\substack{\Phi\in\Chan(M)\setminus\{\id\}\\
                  \Psi\in\Chan(N)\setminus\{\id\}}}
\frac{\|\Phi\circ\Psi-\Psi\circ\Phi\|_{\cb}}
     {\|\Phi-\id\|_{\cb}\,\|\Psi-\id\|_{\cb}},
\label{eq:Lambda-normalized-sup}
\end{equation}
where the supremum of the empty set is understood as $0$.  The following are equivalent:
\[
\begin{aligned}
[M,N]=0
&\quad\Longleftrightarrow\quad
\Omega_{\varepsilon_0}(M,N)=0\ \text{for some }\varepsilon_0>0\\
&\quad\Longleftrightarrow\quad
\Omega_\varepsilon(M,N)=0\ \text{for every }\varepsilon>0\\
&\quad\Longleftrightarrow\quad
\Lambda(M,N)=0.
\end{aligned}
\]
In particular, for a Haag--Kastler net $\cA$, essential duality is equivalent to
\[
\Lambda\bigl(\cA^d(O_1),\cA^d(O_2)\bigr)=0
\]
for every spacelike pair $O_1\perp O_2$; equivalently, the reversible tangent
\eqref{eq:reversible-cb-tangent} vanishes at every such pair.
\end{proposition}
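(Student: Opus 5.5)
The central identity is an algebraic one: for any channels $\Phi,\Psi$,
\[
\Phi\circ\Psi-\Psi\circ\Phi=(\Phi-\id)\circ(\Psi-\id)-(\Psi-\id)\circ(\Phi-\id).
\]
Taking cb norms and using submultiplicativity gives
\[
\|\Phi\circ\Psi-\Psi\circ\Phi\|_{\cb}\le2\|\Phi-\id\|_{\cb}\|\Psi-\id\|_{\cb}.
\]
This yields \eqref{eq:Omega-full-upper} directly. It also shows that the ratio appearing in \eqref{eq:Lambda-normalized-sup} is at most $2$, which gives the upper bound in \eqref{eq:Lambda-bounds}.

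Write $\rho(\Phi,\Psi)$ for that ratio and $L$ for its supremum. The plan is to prove $\Omega_\varepsilon(M,N)/\varepsilon^2=L$ for every $\varepsilon>0$. In that case the ratio is constant in $\varepsilon$, hence nonincreasing, and its limit $\Lambda$ exists and equals $L$.

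The inequality $\Omega_\varepsilon\le L\varepsilon^2$ is immediate: pairs in which one channel is $\id$ contribute $0$, and every other admissible pair has commutator norm at most $\rho\,\|\Phi-\id\|_{\cb}\|\Psi-\id\|_{\cb}\le L\varepsilon^2$.

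For the reverse inequality I use convex interpolation toward the identity. Set $\Phi_\lambda=(1-\lambda)\id+\lambda\Phi$ for $\lambda\in[0,1]$. It remains in $\Chan(M)$: the Kraus family $\{\sqrt{1-\lambda}\,1\}\cup\{\sqrt\lambda\,k_i\}$ lies in $M$, or one can invoke Lemma~\ref{lem:fixed-commutant}, since $\Phi_\lambda$ still fixes $M'$. It also satisfies $\Phi_\lambda-\id=\lambda(\Phi-\id)$. Define $\Psi_\mu$ in the same way. By bilinearity of the identity above,
\[
[\Phi_\lambda,\Psi_\mu]=\lambda\mu\,[\Phi,\Psi].
\]
So $\rho$ is invariant under this scaling. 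Given $\Phi,\Psi\neq\id$ and $\varepsilon>0$, choose
\[
\lambda=\min\{1,\varepsilon/\|\Phi-\id\|_{\cb}\},\qquad \mu=\min\{1,\varepsilon/\|\Psi-\id\|_{\cb}\}.
\]
Then the scaled pair is admissible for $\Omega_\varepsilon$. One would like its commutator to equal $\rho\,\varepsilon^2$, but this needs the distances to reach exactly $\varepsilon$. Note that $\|\Phi-\id\|_{\cb}\le2$, so for $\varepsilon\ge2$ the distances may be stuck below $\varepsilon$.

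I expect this to be the main technical wrinkle. For $\varepsilon\le2$ the choice above gives distances exactly $\varepsilon$, so $\Omega_\varepsilon\ge\rho\varepsilon^2$, hence $\Omega_\varepsilon=L\varepsilon^2$ on $(0,2]$. For $\varepsilon>2$ the ball contains all channels. Then $\Omega_\varepsilon=\Omega_2=4L$, and $\Omega_\varepsilon/\varepsilon^2=4L/\varepsilon^2$ is decreasing in $\varepsilon$. Monotonicity therefore holds on all of $(0,\infty)$, and $\Lambda=L$ is determined by small $\varepsilon$, which proves \eqref{eq:Lambda-normalized-sup}.

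For the lower bound $\Lambda\ge\Gamma/4$, I note that reversible channels are among the admissible ones, so $\Omega^{\rm rev}_\varepsilon\le\Omega_\varepsilon$ for $\varepsilon<2$. Theorem~\ref{thm:reversible-cb-tangent} then gives the bound.

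For the chain of equivalences, start from $[M,N]=0$. Lemma~\ref{lem:channel-commutation} then gives $\Omega_\varepsilon=0$ for every $\varepsilon>0$, which trivially implies $\Omega_{\varepsilon_0}=0$ for some $\varepsilon_0>0$. That in turn forces $L=0$: either via $\Omega_{\varepsilon_0}=L\varepsilon_0^2$ when $\varepsilon_0\le2$ (or $4L$ when $\varepsilon_0>2$), or via the monotone ratio. Hence $\Lambda=0$. Finally, $\Lambda=0$ gives $\Gamma=0$ by \eqref{eq:Lambda-bounds}, so $\Delta_{\rm sa}=0$ by Proposition~\ref{prop:Gamma-Delta}. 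Since self-adjoint elements span $M$ and $N$, this yields $[M,N]=0$.

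The Haag--Kastler statement follows by applying this with $M=\cA^d(O_1)$ and $N=\cA^d(O_2)$ and combining with Definition~\ref{def:essential-duality}. For the reversible tangent, vanishing of $\Gamma/4$ at every spacelike pair is equivalent to essential duality by Theorem~\ref{thm:quantitative-essential}(iii).
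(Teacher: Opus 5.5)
Your proposal has a genuine gap: the claim that $\Omega_\varepsilon(M,N)=L\varepsilon^2$ for every $\varepsilon\in(0,2]$, and that $\Omega_\varepsilon=4L$ for $\varepsilon>2$, is false. Your dilution with $\lambda=\min\{1,\varepsilon/\|\Phi-\id\|_{\cb}\}$ gives $\|\Phi_\lambda-\id\|_{\cb}=\min\{\|\Phi-\id\|_{\cb},\varepsilon\}$. This equals $\varepsilon$ only when $\|\Phi-\id\|_{\cb}\ge\varepsilon$. The bound $\|\Phi-\id\|_{\cb}\le2$ does not help, because convex dilution toward $\id$ can only shrink distances, and the pairs nearly attaining $L$ may all lie close to $\id$.

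A concrete check shows the identity fails. Since $\Phi\circ\Psi$ and $\Psi\circ\Phi$ are unital completely positive, $\Omega_\varepsilon\le 2$ for every $\varepsilon$. Now take anticommuting symmetries $A,B$, e.g.\ $Z,X$ generating $M,N$ on $\mathbb C^2$. One has $\|\Ad_{e^{isA}}-\id\|_{\cb}=2|\sin s|$, and Proposition~\ref{prop:exact-Clifford-profile} then gives your ratio $\rho=\sqrt{1-q^2}\to1$ as $s,t\to0$, so $L\ge1$. Your identity at $\varepsilon=2$ would force $\Omega_2=4L\ge4$, a contradiction.

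This breaks two steps. Your proof of the stated monotonicity of $\varepsilon\mapsto\Omega_\varepsilon/\varepsilon^2$ (``constant, hence nonincreasing'') does not stand. The first route in your step $\Omega_{\varepsilon_0}=0\Rightarrow L=0$ also fails. The second route (``via the monotone ratio'') points the wrong way: a nonincreasing ratio that vanishes at $\varepsilon_0$ says nothing about smaller radii, yet $\Lambda$ is the limit at $\varepsilon\downarrow0$.

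The ingredients for a repair are already in your proposal.
\begin{itemize}
\item For monotonicity, take $\delta<\varepsilon$ and dilute every pair admissible at radius $\varepsilon$ by the common factor $r=\delta/\varepsilon$. The diluted pair is admissible at radius $\delta$ and its commutator is $r^2$ times the original, so $\Omega_\delta/\delta^2\ge\Omega_\varepsilon/\varepsilon^2$. This is the paper's argument.
\item For \eqref{eq:Lambda-normalized-sup}, use your exact-distance dilution only when $\varepsilon\le\min\{\|\Phi-\id\|_{\cb},\|\Psi-\id\|_{\cb}\}$. This gives $\Omega_\varepsilon/\varepsilon^2\ge\rho(\Phi,\Psi)$ for all sufficiently small $\varepsilon$. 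Combined with your correct bound $\Omega_\varepsilon\le L\varepsilon^2$, letting $\varepsilon\downarrow0$ yields existence of the limit and $\Lambda=L$, exactly as in the paper's independent-dilution argument.
\item For the equivalence chain, use that $\Omega_\varepsilon$ itself is nondecreasing in $\varepsilon$, since the cb balls are nested. Then $\Omega_{\varepsilon_0}=0$ forces $\Omega_\delta=0$ for $\delta<\varepsilon_0$, and hence $\Lambda=0$.
\end{itemize}
Your remaining steps agree with the paper: the bound $2\varepsilon^2$, the lower bound through $\Omega^{\mathrm{rev}}_\varepsilon$, and the closing implications to $[M,N]=0$ and essential duality.
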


\begin{proof}
Write $A=\Phi-\id$ and $B=\Psi-\id$.  Then
\[
\Phi\circ\Psi-\Psi\circ\Phi=A\circ B-B\circ A,
\]
so every pair admitted in \eqref{eq:Omega-full-def} satisfies
\[
\|\Phi\circ\Psi-\Psi\circ\Phi\|_{\cb}
\le2\|A\|_{\cb}\|B\|_{\cb}
\le2\varepsilon^2,
\]
which proves \eqref{eq:Omega-full-upper}.

Let $0<\delta<\varepsilon$ and set $r=\delta/\varepsilon$.  The classes $\Chan(M)$ and
$\Chan(N)$ are convex and contain $\id$: convex combinations are obtained by adjoining the
correspondingly rescaled Kraus families.  Thus, for any pair $\Phi,\Psi$ admitted at radius
$\varepsilon$,
\[
\Phi_r:=(1-r)\id+r\Phi\in\Chan(M),
\qquad
\Psi_r:=(1-r)\id+r\Psi\in\Chan(N),
\]
and $\|\Phi_r-\id\|_{\cb}\le\delta$, $\|\Psi_r-\id\|_{\cb}\le\delta$.  Direct expansion gives
\[
\Phi_r\circ\Psi_r-\Psi_r\circ\Phi_r
=
r^2(\Phi\circ\Psi-\Psi\circ\Phi).
\]
Taking suprema, with approximation if a supremum is not attained, yields
\[
\frac{\Omega_\delta(M,N)}{\delta^2}
\ge
\frac{\Omega_\varepsilon(M,N)}{\varepsilon^2}.
\]
Together with \eqref{eq:Omega-full-upper}, this proves existence of
\eqref{eq:Lambda-def} and the upper bound in \eqref{eq:Lambda-bounds}.  Since reversible inner
channels are contained in the full inner-channel classes,
$\Omega_\varepsilon(M,N)\ge\Omega^{\mathrm{rev}}_\varepsilon(M,N)$ for $0<\varepsilon<2$.
Theorem~\ref{thm:reversible-cb-tangent} therefore gives the lower bound in
\eqref{eq:Lambda-bounds}.

We next prove \eqref{eq:Lambda-normalized-sup}.  Denote its right-hand side by $S(M,N)$.
For any nonidentity $\Phi\in\Chan(M)$ and $\Psi\in\Chan(N)$, set
$d_\Phi:=\|\Phi-\id\|_{\cb}$ and $d_\Psi:=\|\Psi-\id\|_{\cb}$.  By definition of $S(M,N)$,
\[
\|\Phi\circ\Psi-\Psi\circ\Phi\|_{\cb}
\le S(M,N)d_\Phi d_\Psi.
\]
Hence every pair admitted at radius $\varepsilon$ has order defect at most
$S(M,N)\varepsilon^2$; pairs containing $\id$ have zero defect.  Therefore
$\Omega_\varepsilon(M,N)/\varepsilon^2\le S(M,N)$ and $\Lambda(M,N)\le S(M,N)$.

Conversely, fix nonidentity $\Phi\in\Chan(M)$ and $\Psi\in\Chan(N)$.  For
$0<\varepsilon<\min\{d_\Phi,d_\Psi\}$ put
\[
r_\Phi:=\frac{\varepsilon}{d_\Phi},\qquad
r_\Psi:=\frac{\varepsilon}{d_\Psi},
\]
and define the independent convex dilutions
\[
\widetilde\Phi_\varepsilon:=(1-r_\Phi)\id+r_\Phi\Phi,
\qquad
\widetilde\Psi_\varepsilon:=(1-r_\Psi)\id+r_\Psi\Psi.
\]
They belong to $\Chan(M)$ and $\Chan(N)$, respectively, and
\[
\|\widetilde\Phi_\varepsilon-\id\|_{\cb}
=
\|\widetilde\Psi_\varepsilon-\id\|_{\cb}
=\varepsilon.
\]
A direct expansion gives
\[
\widetilde\Phi_\varepsilon\circ\widetilde\Psi_\varepsilon
-
\widetilde\Psi_\varepsilon\circ\widetilde\Phi_\varepsilon
=
\frac{\varepsilon^2}{d_\Phi d_\Psi}
(\Phi\circ\Psi-\Psi\circ\Phi).
\]
Thus, for all sufficiently small $\varepsilon$,
\[
\frac{\Omega_\varepsilon(M,N)}{\varepsilon^2}
\ge
\frac{\|\Phi\circ\Psi-\Psi\circ\Phi\|_{\cb}}
     {d_\Phi d_\Psi}.
\]
Letting $\varepsilon\downarrow0$ and then taking the supremum over $\Phi,\Psi$ proves
$\Lambda(M,N)\ge S(M,N)$, and hence \eqref{eq:Lambda-normalized-sup}.

If $[M,N]=0$, Lemma~\ref{lem:channel-commutation} shows that every channel in $\Chan(M)$
commutes with every channel in $\Chan(N)$, so $\Omega_\varepsilon=0$ for all $\varepsilon$.
Conversely, if $\Omega_{\varepsilon_0}(M,N)=0$ for some $\varepsilon_0>0$, then
$\Omega_\delta(M,N)=0$ for every $0<\delta<\varepsilon_0$, hence $\Lambda(M,N)=0$.
Finally, $\Lambda(M,N)=0$ and \eqref{eq:Lambda-bounds} imply $\Gamma(M,N)=0$, so
$[M,N]=0$ by Proposition~\ref{prop:Gamma-Delta}.  The AQFT statement follows from
Theorem~\ref{thm:essential-composition}.
\end{proof}

\begin{corollary}[Expectation-value realization]
\label{cor:expectation-realization}
Let $a,b\in B(\cH)$ be bounded self-adjoint operators and put
$h=-i[a,b]$.  For a normal state $\omega$ and a self-adjoint contraction $x$, define
\[
F_{\omega,x}^{a,b}(s,t)
:=\omega\Bigl(\bigl(\Ad_{e^{isa}}\Ad_{e^{itb}}-\Ad_{e^{itb}}\Ad_{e^{isa}}\bigr)(x)\Bigr).
\]
Then
\begin{equation}
\diam\sigma(h)
=\sup_{\substack{x=x^*,\ \|x\|\le1\\ \omega\ \mathrm{normal\ state}}}
\left|\left.\frac{\partial^2}{\partial s\,\partial t}\right|_{(0,0)}
F_{\omega,x}^{a,b}(s,t)\right| .
\label{eq:expectation-realization}
\end{equation}
The supremum is unchanged if normal states are replaced by vector states.  Suppose
$c:=\diam\sigma(h)>0$ and fix $\eta\in(0,1)$.  There are a self-adjoint contraction $x$, a
normal vector state $\omega$, and $\delta>0$ such that
\begin{equation}
\bigl|F_{\omega,x}^{a,b}(s,t)\bigr|
\ge (1-\eta)c\,|st|
\qquad (0<|s|,|t|<\delta).
\label{eq:expectation-near-witness}
\end{equation}
\end{corollary}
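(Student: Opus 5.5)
The plan is to compute the mixed second derivative of $F_{\omega,x}^{a,b}$ at the origin, identify it with $\omega(D_h(x))$ up to sign, and then use Lemma~\ref{lem:derivation-cb-norm} together with its rank-two witness.

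\emph{The derivative.} By the expansion \eqref{eq:cb-expansion},
\[
\Ad_{e^{isa}}\Ad_{e^{itb}}-\Ad_{e^{itb}}\Ad_{e^{isa}}=st\,[D_a,D_b]+R_{s,t},
\]
and the bound \eqref{eq:remainder-bound} gives $\|R_{s,t}\|_{\cb}=O(|st|(|s|+|t|))$. Since the map is a norm-convergent double power series in $(s,t)$, $F$ is real-analytic, and $\partial_s\partial_t F(0,0)$ is its $st$-coefficient. By \eqref{eq:derivation-bracket} that coefficient is $\omega(D_h(x))$ with $h=-i[a,b]$. The right-hand side of \eqref{eq:expectation-realization} therefore equals
\[
\sup_{x,\omega}|\omega(D_h(x))|.
\]

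\emph{The upper bound.} We have $|\omega(D_h(x))|\le\|D_h(x)\|\le\|D_h\|$, which equals $\diam\sigma(h)$ by Lemma~\ref{lem:derivation-cb-norm}.

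\emph{The lower bound.} Here we use the witness from the proof of Lemma~\ref{lem:derivation-cb-norm}. Take $x=|\xi\rangle\langle\eta|+|\eta\rangle\langle\xi|$ with $\xi,\eta$ unit vectors in the spectral subspaces of $h$ near $\max\sigma(h)$ and $\min\sigma(h)$. Then $D_h(x)=-i[h,x]$ is self-adjoint, and $\langle\xi,D_h(x)\eta\rangle$ has modulus at least $c-2\varepsilon$. This is an off-diagonal matrix element, so we must turn it into an expectation value. Use the vector $\zeta=(\xi+e^{i\phi}\eta)/\sqrt2$. The diagonal terms $\langle\xi,D_h(x)\xi\rangle$ and $\langle\eta,D_h(x)\eta\rangle$ vanish, since $\xi$ and $\eta$ are orthogonal vectors in reducing subspaces of $h$, and $x$ maps $\xi\mapsto\eta$ and $\eta\mapsto\xi$. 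Hence
\[
\langle\zeta,D_h(x)\zeta\rangle=\operatorname{Re}\bigl(e^{i\phi}\langle\xi,D_h(x)\eta\rangle\bigr).
\]
Choosing the phase $\phi$ makes this equal to $|\langle\xi,D_h(x)\eta\rangle|\ge c-2\varepsilon$. So the supremum is attained up to $\varepsilon$ by the vector state $\omega_\zeta$, which also proves that restricting to vector states does not change the supremum. Equivalently, the numerical radius of the self-adjoint operator $D_h(x)$ equals its norm, but the explicit $\zeta$ is more concrete.

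\emph{The near-witness estimate \eqref{eq:expectation-near-witness}.} Take $\varepsilon$ so small that $c-2\varepsilon\ge(1-\eta/2)c$, and fix the resulting $x$ and $\omega=\omega_\zeta$. Then
\[
|F(s,t)|\ge|st|\,(1-\eta/2)c-\|R_{s,t}\|_{\cb},
\]
and the remainder bound gives $\|R_{s,t}\|_{\cb}\le(\eta/2)c\,|st|$ for $|s|,|t|<\delta$. The main obstacle is the diagonal-vanishing computation in the lower bound, which needs care with signs and phases; the other steps are direct uses of earlier results.
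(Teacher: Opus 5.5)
Your proposal is correct and follows the paper's proof. You identify the mixed derivative with $\omega(D_h(x))$ through the exponential expansion and the Jacobi bracket, bound it above by $\|D_h\|=\diam\sigma(h)$, obtain the lower bound from the rank-two self-adjoint witness of Lemma~\ref{lem:derivation-cb-norm}, and absorb the remainder for the near-witness estimate, all exactly as the paper does. The one difference is cosmetic. The paper simply uses that the norm of the self-adjoint operator $D_h(x)$ equals $\sup_{\|\psi\|=1}|\langle\psi,D_h(x)\psi\rangle|$, whereas you construct the vector $\zeta=(\xi+e^{i\phi}\eta)/\sqrt2$ explicitly, and your diagonal-vanishing computation for it is correct.
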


\begin{proof}
By \eqref{eq:cb-expansion} and \eqref{eq:derivation-bracket},
\[
\left.\frac{\partial^2}{\partial s\,\partial t}\right|_{(0,0)}
F_{\omega,x}^{a,b}(s,t)
=\omega(D_h(x))=-\omega([[a,b],x]).
\]
Since $h$ and $x$ are self-adjoint, so is $D_h(x)$, and therefore
$\sup_\omega|\omega(D_h(x))|=\|D_h(x)\|$, with the same value over normal states and over vector
states.  The right-hand side of \eqref{eq:expectation-realization} is thus the norm of $D_h$ on
the self-adjoint unit ball.  It is at most $\|D_h\|=\diam\sigma(h)$, and the witness constructed
in the proof of Lemma~\ref{lem:derivation-cb-norm} is a self-adjoint contraction, so the two
agree.

For \eqref{eq:expectation-near-witness}, choose $x$ and a vector state $\omega$ with
$|\omega(D_h(x))|>(1-\eta/2)c$.  The expansion \eqref{eq:cb-expansion} gives
\[
F_{\omega,x}^{a,b}(s,t)=st\,\omega(D_h(x))+\omega(R_{s,t}(x)).
\]
For the fixed pair $a,b$, \eqref{eq:remainder-bound} implies
$\|R_{s,t}\|_{\cb}/|st|\to0$ as $(s,t)\to(0,0)$; shrinking $\delta$ so that
$\|R_{s,t}\|_{\cb}\le(\eta/2)c\,|st|$ proves \eqref{eq:expectation-near-witness}.
\end{proof}

For a spacelike pair $O_1\perp O_2$, Corollary~\ref{cor:expectation-realization} shows that
essential duality is equivalent to the vanishing of every mixed expectation-value order response
associated with bounded self-adjoint generators in $\cA^d(O_1)$ and $\cA^d(O_2)$.

\subsection{Energy-constrained dual-net order defect}
\label{subsec:energy-constrained}

Assume in this subsection that the chosen representation is equipped with a positive self-adjoint
energy observable $G$.  For $E>\inf\sigma(G)$ put
\[
\mathfrak S_E^G(\cH)
:=\{\rho\ge0:\ \rho\text{ trace class},\ \operatorname{Tr}\rho=1,
\ \operatorname{Tr}(G\rho)\le E\}.
\]
For a Hermiticity-preserving map $T$ on trace class, let
\begin{equation}
\|T\|_{\diamond,G,E}
:=\sup\left\{\|(T\otimes\id)(\varrho)\|_1:
\varrho\text{ a state on }\cH\otimes\cH_R,
\ \operatorname{Tr}(G\varrho_{\cH})\le E\right\},
\label{eq:EC-diamond-def}
\end{equation}
with unrestricted reference system.  This is the standard mean-energy-constrained diamond norm
\cite{Shirokov2018,BeckerDattaLamiRouze2021}.  For bounded self-adjoint $a,b$, define
\[
\widehat\alpha_s(\rho)=e^{isa}\rho e^{-isa},\qquad
\widehat\beta_t(\rho)=e^{itb}\rho e^{-itb},
\]
and
\begin{equation}
C^{G,E}_{a,b}(s,t)
:=\|\widehat\beta_t\widehat\alpha_s-\widehat\alpha_s\widehat\beta_t\|_{\diamond,G,E}.
\label{eq:EC-order-defect}
\end{equation}

\begin{theorem}[Energy-constrained dual-net order defect]
\label{thm:EC-dual-net-defect}
Let $M,N\subset B(\cH)$ be von Neumann algebras, let $E>\inf\sigma(G)$, and for bounded
self-adjoint $h$ set
\[
\gamma_{G,E}(h)
:=2\sup_{\rho\in\mathfrak S_E^G(\cH)}\sqrt{\operatorname{Var}_{\rho}(h)},
\qquad
\operatorname{Var}_{\rho}(h)=\operatorname{Tr}(\rho h^2)-\operatorname{Tr}(\rho h)^2.
\]
Then:
\begin{enumerate}[label=(\roman*)]
\item For fixed bounded self-adjoint $a,b$, with $h=-i[a,b]$,
\begin{equation}
\lim_{\substack{(s,t)\to(0,0)\\st\neq0}}
\frac{C^{G,E}_{a,b}(s,t)}{|st|}=\gamma_{G,E}(h).
\label{eq:EC-fixed-generator-limit}
\end{equation}
\item Defining
\begin{equation}
\Gamma_{G,E}(M,N)
:=\sup_{\substack{a=a^*\in M,\ \|a\|\le1\\ b=b^*\in N,\ \|b\|\le1}}
\gamma_{G,E}(-i[a,b]),
\label{eq:Gamma-GE-def}
\end{equation}
one has the uniform common-scale formula
\begin{equation}
\Gamma_{G,E}(M,N)
=\lim_{\varepsilon\downarrow0}\frac1{\varepsilon^2}
\sup_{\substack{a=a^*\in M,\ \|a\|\le1\\ b=b^*\in N,\ \|b\|\le1}}
C^{G,E}_{a,b}(\varepsilon,\varepsilon).
\label{eq:Gamma-GE-common-scale}
\end{equation}
\item For every $E>\inf\sigma(G)$,
\begin{equation}
0\le\Gamma_{G,E}(M,N)\le\Gamma(M,N)\le4,
\qquad
\Gamma_{G,E}(M,N)=0\iff[M,N]=0,
\label{eq:Gamma-GE-zero}
\end{equation}
and $E\mapsto\Gamma_{G,E}(M,N)$ is nondecreasing.
\item The constrained coefficients recover the algebraic one:
\begin{equation}
\Gamma_{G,E}(M,N)\uparrow\Gamma(M,N)\qquad(E\to\infty).
\label{eq:Gamma-GE-limit}
\end{equation}
\end{enumerate}
Consequently, if
$\Gamma_{\cA;G,E}(O_1,O_2):=\Gamma_{G,E}(\cA^d(O_1),\cA^d(O_2))$, then for every fixed
$E>\inf\sigma(G)$ essential duality is equivalent to
$\Gamma_{\cA;G,E}(O_1,O_2)=0$ at all spacelike pairs.
\end{theorem}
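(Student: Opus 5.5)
The plan is to reduce everything to the first-order (in $st$) term of the Schrödinger-picture commutator, exactly as in Theorem~\ref{thm:quantitative-essential}, and then to evaluate the energy-constrained diamond norm of the derivation $\rho\mapsto -i[h,\rho]$ by a variance formula.

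\textbf{Expansion and remainder.} On trace class the exponential series give
\[
\widehat\beta_t\widehat\alpha_s-\widehat\alpha_s\widehat\beta_t=st\,\widehat D+\widehat R_{s,t},
\qquad \widehat D(\rho)=\pm i[h,\rho],
\]
up to sign. The remainder satisfies the same bound \eqref{eq:remainder-bound} in the unconstrained diamond norm, since the diamond norm is dual to the cb norm. Because $\|\cdot\|_{\diamond,G,E}\le\|\cdot\|_\diamond$, this gives
\[
\bigl|C^{G,E}_{a,b}(s,t)-|st|\,\|\widehat D\|_{\diamond,G,E}\bigr|\le 2|st|AB\bigl(e^{|s|A+|t|B}-1\bigr),
\]
uniformly over contractions with $A,B\le2$. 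Parts (i) and (ii) then follow as before, once the key identity $\|\widehat D_h\|_{\diamond,G,E}=\gamma_{G,E}(h)$ is established.

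\textbf{The key identity (main obstacle).} For a pure input $\varrho=|\psi\rangle\langle\psi|$ on $\cH\otimes\cH_R$ we have
\[
(\widehat D\otimes\id)(\varrho)=-i\bigl(|\phi\rangle\langle\psi|-|\psi\rangle\langle\phi|\bigr),
\qquad \phi=(h\otimes1)\psi .
\]
Subtracting $\langle h\rangle\psi$ from $\phi$ leaves the operator unchanged, so we may take $\phi\perp\psi$. The rank-two operator then has eigenvalues $\pm\|\phi\|$, hence trace norm $2\|\phi\|=2\sqrt{\operatorname{Var}_{\rho}(h)}$ with $\rho=\varrho_\cH$. By purification every constrained state $\rho$ arises as such a reduced state, which gives the lower bound. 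For mixed $\varrho$, convexity of the trace norm together with concavity of the square root yields
\[
\sum_k p_k\sqrt{\operatorname{Var}_{\rho_k}}\le\sqrt{\textstyle\sum_k p_k\operatorname{Var}_{\rho_k}}\le\sqrt{\operatorname{Var}_{\rho}},
\]
where the last step uses concavity of the variance and $\rho=\sum_k p_k\rho_k$ with $\operatorname{Tr}(G\rho)\le E$. This gives the upper bound. The delicate points are that pure decompositions of $\varrho$ need not respect the energy constraint individually, which is why we use averaged variance rather than a termwise constraint, and that $G$ is unbounded, so $\operatorname{Tr}(G\rho)$ must be handled as an extended-valued affine functional.

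\textbf{Part (iii).} Monotonicity in $E$ is immediate, since the feasible set grows with $E$. For the bound, $2\sqrt{\operatorname{Var}}\le\diam\sigma(h)$ by Popoviciu's inequality; taking the supremum and using Proposition~\ref{prop:Gamma-Delta} gives $\Gamma_{G,E}\le\Gamma\le4$. For the zero criterion, suppose $\Gamma_{G,E}(M,N)=0$. Then every constrained state has zero variance of $h$. Constrained vector states include spectral vectors of $G$ at energies below $E$ and their convex mixtures with any finite-energy vector, so by small-weight mixing every vector in the dense set $\bigcup_n 1_{[0,n]}(G)\cH$ can be perturbed into a feasible state. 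The map $\rho\mapsto\operatorname{Var}_\rho$ is continuous along such mixtures, so zero variance extends to all vectors in a dense set. A vector in which $h$ has zero variance is an eigenvector of $h$. If every vector of a dense subspace is an eigenvector, then $h=\gamma1$, and the argument in Theorem~\ref{thm:quantitative-essential}(iii) forces $\gamma=0$. Hence $[M,N]=0$. The converse is trivial.

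\textbf{Part (iv).} Fix $a,b$ and $\epsilon>0$. Take unit vectors $\xi,\eta$ in spectral subspaces of $h$ near its maximum and minimum, so that $\psi=(\xi+\eta)/\sqrt2$ has $2\sqrt{\operatorname{Var}}\ge\diam\sigma(h)-\epsilon$. Approximate $\xi,\eta$ by vectors in $1_{[0,n]}(G)\cH$, using continuity of the variance in the vector. The approximating vectors have finite energy, so they are admissible once $E$ is large. Therefore $\gamma_{G,E}(h)\to\diam\sigma(h)$. Taking the supremum over $a,b$ and using monotonicity in $E$ gives $\Gamma_{G,E}\uparrow\Gamma$. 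The AQFT consequence then follows from (iii) applied to the dual algebras together with Theorem~\ref{thm:essential-composition}.
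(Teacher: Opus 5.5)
Your proposal follows the paper's structure: the exponential-series remainder moved to the diamond norm, the identity $\|\mathcal L_h\|_{\diamond,G,E}=\gamma_{G,E}(h)$ via the rank-two pure-input computation, Popoviciu's bound in (iii), and density of finite-energy vectors in (iv). Parts (i), (ii) and (iv) are fine.

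\textbf{The gap is in the zero criterion of (iii).} You go from $\operatorname{Var}_{\rho_p}(h)=0$ for the admissible mixtures $\rho_p=(1-p)\rho_0+p|\psi\rangle\langle\psi|$ to $\operatorname{Var}_\psi(h)=0$ ``by continuity along such mixtures.'' That inference fails.

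\begin{itemize}
\item Continuity in $p$ tells you nothing outside the admissible interval $[0,p_{\max}]$. That interval has $p_{\max}<1$ whenever $\langle\psi,G\psi\rangle>E$, and your dense subspace $\bigcup_n\mathbf 1_{[0,n]}(G)\cH$ contains such vectors.
\item No continuity argument in the state can help either. The map $\rho\mapsto\operatorname{Tr}(G\rho)$ is lower semicontinuous, so $\mathfrak S_E^G(\cH)$ is trace-norm closed. Hence a vector with energy above $E$ is not a limit of admissible states.
\end{itemize}

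What you need is concavity of the variance, which you already used in the key identity. For a single admissible $p>0$,
\[
0=\operatorname{Var}_{\rho_p}(h)\ge(1-p)\operatorname{Var}_{\rho_0}(h)+p\operatorname{Var}_\psi(h),
\]
so $\operatorname{Var}_\psi(h)=0$. Equivalently, $p\mapsto\operatorname{Var}_{\rho_p}(h)$ is a quadratic polynomial that vanishes on an interval, hence everywhere.

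With this repair, your eigenvector argument on the dense subspace, followed by the scalar-commutator exclusion, goes through. The paper runs the same mixing in contrapositive form. It uses continuity only in the vector, to find a finite-energy $\psi$ with positive variance, and concavity for the mixture.

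\textbf{Your upper bound in the key identity takes a different but valid route.} You prove $\|(\mathcal L_h\otimes\id)(\varrho)\|_1\le2\sqrt{\operatorname{Var}_{\varrho_{\cH}}(h)}$ for mixed $\varrho$ as follows:
\begin{itemize}
\item decompose $\varrho$ into pure states and apply the triangle inequality;
\item apply Jensen's inequality for the square root;
\item apply concavity of the variance.
\end{itemize}
This way only the averaged marginal $\varrho_{\cH}$ has to meet the energy constraint.

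The paper instead purifies $\varrho$ on an enlarged reference system, which leaves $\varrho_{\cH}$, and hence admissibility, unchanged. It then uses trace-norm contractivity of the partial trace over the extra reference to reduce to the pure case at once. The purification argument is shorter and needs neither convexity inequality. Yours makes explicit why pure components that individually violate the energy constraint do no harm.
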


\begin{proof}
We first identify the constrained norm of the infinitesimal generator.  For
$\mathcal L_h(\rho)=i[h,\rho]$ and an admissible state $\varrho$, purify $\varrho$ to a unit
vector $\Psi$ without changing its system marginal.  Trace-norm contractivity under the partial
trace and the rank-two form of the commutator give
\[
\|(\mathcal L_h\otimes\id)(\varrho)\|_1
\le2\sqrt{\operatorname{Var}_{\varrho_{\cH}}(h)}.
\]
Conversely, purifying any $\rho\in\mathfrak S_E^G(\cH)$ gives equality in the same rank-two
calculation.  Hence
\begin{equation}
\|\mathcal L_h\|_{\diamond,G,E}=\gamma_{G,E}(h).
\label{eq:EC-commutator-variance}
\end{equation}

On trace class let $L_a(\rho)=i[a,\rho]$ and $L_b(\rho)=i[b,\rho]$.  Then
$\widehat\alpha_s=e^{sL_a}$, $\widehat\beta_t=e^{tL_b}$ and $[L_b,L_a]=\mathcal L_{-i[a,b]}$.
The same absolutely convergent exponential-series estimate as in
\eqref{eq:remainder-bound}, now in the Banach algebra of bounded maps on trace class, gives
\[
\widehat\beta_t\widehat\alpha_s-\widehat\alpha_s\widehat\beta_t
=st\,\mathcal L_{-i[a,b]}+\widehat R_{s,t},
\qquad
\|\widehat R_{s,t}\|_\diamond
\le2|st|AB\bigl(e^{|s|A+|t|B}-1\bigr),
\]
where $A=\|L_a\|_\diamond$ and $B=\|L_b\|_\diamond$.  Since
$\|\cdot\|_{\diamond,G,E}\le\|\cdot\|_\diamond$, equation
\eqref{eq:EC-commutator-variance} proves (i).  If $a,b$ are self-adjoint contractions, then
$A,B\le2$, so at $s=t=\varepsilon$
\[
\left|\varepsilon^{-2}C^{G,E}_{a,b}(\varepsilon,\varepsilon)
-\gamma_{G,E}(-i[a,b])\right|
\le8(e^{4\varepsilon}-1),
\]
uniformly in $a,b$ (and in $E$), proving (ii).

For any bounded self-adjoint $h$,
\[
\operatorname{Var}_{\rho}(h)\le\frac14\diam\sigma(h)^2,
\qquad
\gamma_{G,E}(h)\le\diam\sigma(h).
\]
This gives the inequalities in (iii), while monotonicity in $E$ is immediate.  Moreover
\begin{equation}
\gamma_{G,E}(h)=0\iff h\text{ is scalar}.
\label{eq:gamma-GE-scalar}
\end{equation}
Indeed, if $h$ is non-scalar, choose a finite-energy unit vector $\psi$ with positive variance
(using density of $\operatorname{dom}G^{1/2}$), and choose a state $\rho_0$ with
$\operatorname{Tr}(G\rho_0)<E$.  For sufficiently small $p>0$ the mixture
$\rho_p=(1-p)\rho_0+p|\psi\rangle\langle\psi|$ is admissible.  Since variance is concave in
the state,
\[
\operatorname{Var}_{\rho_p}(h)\ge(1-p)\operatorname{Var}_{\rho_0}(h)
+p\operatorname{Var}_{\psi}(h)>0.
\]  Thus
$\Gamma_{G,E}(M,N)=0$ implies that $-i[a,b]$ is scalar for every self-adjoint $a\in M$, $b\in N$.
As in Theorem~\ref{thm:quantitative-essential}(iii), a nonzero scalar commutator of bounded
self-adjoint operators would translate the compact spectrum of $b$ under conjugation by
$e^{ita}$, which is impossible.  Hence $[M,N]=0$; the converse is immediate.

Finally, spectral cutoffs $\mathbf1_{[0,R]}(G)$ converge strongly to $1$, so finite-energy unit
vectors are dense.  Since $h$ and $h^2$ are bounded, vector-state variances are continuous under
norm approximation, while the unrestricted maximal variance equals
$\frac14\diam\sigma(h)^2$.  Therefore
$\gamma_{G,E}(h)\uparrow\diam\sigma(h)$ for every bounded self-adjoint $h$.  Choosing
self-adjoint contractions $a,b$ whose spectral coefficient is arbitrarily close to
$\Gamma(M,N)$ proves (iv).  The AQFT statement follows from
Theorem~\ref{thm:essential-composition}.
\end{proof}

\paragraph{Relation to Haag duality.}
If Haag duality holds at both members of a spacelike pair, locality gives
$\Gamma_{\cA}(O_1,O_2)=0$.  Hence a positive defect requires a Haag-duality gap at least
one of the two regions; Proposition~\ref{prop:GY-zero-defect} shows that the converse need not
hold.

\subsection{Energy accessibility and recovery rates}
\label{subsec:energy-accessibility}

For a bounded self-adjoint $h$, write
\[
\lambda_+(h)=\max\sigma(h),\qquad
\lambda_-(h)=\min\sigma(h),
\]
and, for a unit vector $\psi\in\operatorname{dom}G^{1/2}$,
$\mathcal E_G(\psi)=\|G^{1/2}\psi\|^2$.  For $\delta>0$ set
\begin{equation}
\mathfrak e_{G,h}(\delta)
:=\inf\left\{
\frac{\mathcal E_G(\psi_+)+\mathcal E_G(\psi_-)}{2}:
\begin{array}{l}
\psi_\pm\in\operatorname{dom}G^{1/2},\quad \|\psi_\pm\|=1,\\
\langle\psi_+,h\psi_+\rangle\ge\lambda_+(h)-\delta,\\
\langle\psi_-,h\psi_-\rangle\le\lambda_-(h)+\delta
\end{array}
\right\}.
\label{eq:edge-accessibility}
\end{equation}
For $\varepsilon>0$ define
\begin{equation}
\mathfrak E^{\rm edge}_{G;M,N}(\varepsilon)
:=\inf\left\{
\mathfrak e_{G,-i[a,b]}(\varepsilon/3):
\begin{array}{l}
a=a^*\in M,\ \|a\|\le1,\qquad b=b^*\in N,\ \|b\|\le1,\\
\diam\sigma(-i[a,b])>\Gamma(M,N)-\varepsilon/3
\end{array}
\right\}.
\label{eq:optimized-edge-profile}
\end{equation}

\begin{proposition}[Spectral-edge recovery]
\label{prop:spectral-edge-accessibility}
For every bounded self-adjoint $h$ and every $\delta>0$,
$\mathfrak e_{G,h}(\delta)<\infty$, and
\begin{equation}
E>\mathfrak e_{G,h}(\delta)
\quad\Longrightarrow\quad
\gamma_{G,E}(h)\ge\diam\sigma(h)-2\delta.
\label{eq:edge-accessibility-bound}
\end{equation}
Consequently, for every $\varepsilon>0$,
\begin{equation}
E>\mathfrak E^{\rm edge}_{G;M,N}(\varepsilon)
\quad\Longrightarrow\quad
\Gamma_{G,E}(M,N)>\Gamma(M,N)-\varepsilon.
\label{eq:finite-energy-recovery}
\end{equation}
\end{proposition}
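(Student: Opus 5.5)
The plan has three parts: finiteness of $\mathfrak e_{G,h}(\delta)$, a two-point mixture lower bound for the variance, and the optimized corollary.

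\textbf{Finiteness.} Let $P_\pm$ be the spectral projections of $h$ for $[\lambda_+(h)-\delta,\lambda_+(h)]$ and $[\lambda_-(h),\lambda_-(h)+\delta]$; both are nonzero. Finite-energy vectors are dense, since $\mathbf 1_{[0,R]}(G)\to1$ strongly. So I pick a unit vector $\xi\in\operatorname{ran}P_+$ and a unit vector $\psi\in\operatorname{dom}G^{1/2}$ with $\|\psi-\xi\|$ small. By norm continuity of $\psi\mapsto\langle\psi,h\psi\rangle$, for $\|\xi-\psi\|$ small enough we get $\langle\psi,h\psi\rangle\ge\lambda_+(h)-\delta$; the point is that $\langle\xi,h\xi\rangle\ge\lambda_+(h)-\delta$ and we need a little slack. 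To get it, I use a slightly smaller window $\delta/2$ first and then perturb. The same construction gives $\psi_-$. Hence the infimum is over a nonempty set and is finite.

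\textbf{The bound \eqref{eq:edge-accessibility-bound}.} Given $E>\mathfrak e_{G,h}(\delta)$, choose admissible $\psi_\pm$ with $\tfrac12(\mathcal E_G(\psi_+)+\mathcal E_G(\psi_-))\le E$. The equal mixture $\rho=\tfrac12(|\psi_+\rangle\langle\psi_+|+|\psi_-\rangle\langle\psi_-|)$ then has $\operatorname{Tr}(G\rho)\le E$, interpreting $\operatorname{Tr}(G\rho)$ through the quadratic form of $G^{1/2}$. Variance is concave in the state, and the variance of a mixture dominates the variance of the mixture of means. Concretely, with $m_\pm=\langle\psi_\pm,h\psi_\pm\rangle$,
\[
\operatorname{Var}_\rho(h)\ge\tfrac12\operatorname{Var}_{\psi_+}(h)+\tfrac12\operatorname{Var}_{\psi_-}(h)+\tfrac14(m_+-m_-)^2\ge\tfrac14(m_+-m_-)^2 .
\]
Since $m_+-m_-\ge\diam\sigma(h)-2\delta$, this gives $2\sqrt{\operatorname{Var}_\rho(h)}\ge\diam\sigma(h)-2\delta$ whenever the right side is nonnegative; otherwise there is nothing to prove. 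Therefore $\gamma_{G,E}(h)\ge\diam\sigma(h)-2\delta$.

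\textbf{The corollary \eqref{eq:finite-energy-recovery}.} If $E>\mathfrak E^{\rm edge}_{G;M,N}(\varepsilon)$, the defining infimum yields contractions $a,b$ with $\diam\sigma(h)>\Gamma(M,N)-\varepsilon/3$ for $h=-i[a,b]$, and $\mathfrak e_{G,h}(\varepsilon/3)<E$. The admissible set in \eqref{eq:optimized-edge-profile} is nonempty: if $\Gamma(M,N)=0$ the claim is trivial, since $\Gamma_{G,E}\ge0>-\varepsilon$, and otherwise the set is nonempty by Proposition~\ref{prop:Gamma-Delta}. Applying the bound with $\delta=\varepsilon/3$ gives
\[
\Gamma_{G,E}(M,N)\ge\gamma_{G,E}(h)\ge\diam\sigma(h)-\tfrac{2\varepsilon}{3}>\Gamma(M,N)-\varepsilon .
\]

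\textbf{Main obstacle.} The only delicate point is the slack in the finiteness step, where a spectral-window vector must be approximated by finite-energy vectors while keeping the expectation constraint non-strict. Using a half-width window first, as above, is the fix.
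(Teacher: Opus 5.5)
Your proposal is correct and follows essentially the same route as the paper. Both arguments get finiteness by approximating unit vectors from half-width spectral windows with finite-energy unit vectors. Both then use the equal two-point mixture, whose variance contains the term $\tfrac14(\mu_+-\mu_-)^2$, and apply the resulting bound with $\delta=\varepsilon/3$ to near-optimal contractions $a,b$. The only differences are cosmetic. You state the mixture-variance identity as an inequality, though it is in fact an equality. You also split off the case $\Gamma(M,N)=0$, which is unnecessary because the set defining $\mathfrak E^{\rm edge}_{G;M,N}(\varepsilon)$ is always nonempty.
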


\begin{proof}
The spectral subspaces of $h$ associated with
$(\lambda_+(h)-\delta/2,\lambda_+(h)]$ and
$[\lambda_-(h),\lambda_-(h)+\delta/2)$ are nonzero.  Unit vectors in these subspaces can be
approximated in norm by unit vectors in $\operatorname{dom}G^{1/2}$; boundedness of $h$ then
gives admissible vectors in \eqref{eq:edge-accessibility}.  Thus
$\mathfrak e_{G,h}(\delta)<\infty$.

If $E>\mathfrak e_{G,h}(\delta)$, choose admissible $\psi_\pm$ whose average energy is below
$E$ and put
\[
\rho=\tfrac12|\psi_+\rangle\langle\psi_+|
     +\tfrac12|\psi_-\rangle\langle\psi_-|.
\]
For $\mu_\pm=\langle\psi_\pm,h\psi_\pm\rangle$,
\[
\operatorname{Var}_\rho(h)
=
\tfrac12\operatorname{Var}_{\psi_+}(h)
+\tfrac12\operatorname{Var}_{\psi_-}(h)
+\tfrac14(\mu_+-\mu_-)^2.
\]
This gives \eqref{eq:edge-accessibility-bound}; when
$2\delta\ge\diam\sigma(h)$ the bound is immediate from $\gamma_{G,E}(h)\ge0$.
The defining set in \eqref{eq:optimized-edge-profile} is nonempty by
\eqref{eq:Gamma-Delta}.  If $E$ is larger than its infimum, choose $a,b$ with
\[
\diam\sigma(-i[a,b])>\Gamma(M,N)-\varepsilon/3,
\qquad
\mathfrak e_{G,-i[a,b]}(\varepsilon/3)<E.
\]
Applying \eqref{eq:edge-accessibility-bound} with $\delta=\varepsilon/3$ and then taking the
supremum in \eqref{eq:Gamma-GE-def} proves \eqref{eq:finite-energy-recovery}.
\end{proof}

The profile in \eqref{eq:optimized-edge-profile} is a sufficient energy threshold; an upper
bound on it in a specific model gives a corresponding convergence estimate.  No such rate
follows from the hypotheses of Theorem~\ref{thm:EC-dual-net-defect} alone.

\begin{proposition}[No model-independent recovery rate]
\label{prop:no-uniform-energy-rate}
For every sequence $r_k\downarrow0$ with $0<r_k<1$, there are von Neumann algebras
$M,N\subset B(\cH)$, a positive self-adjoint $G$ with compact resolvent, finite spectral
multiplicities, $\inf\sigma(G)=0$, and first positive eigenvalue equal to $1$, together with
energies $E_k\to\infty$, such that
\begin{equation}
\Gamma(M,N)=4,
\qquad
4-\Gamma_{G,E_k}(M,N)\ge r_k
\quad(k\ge3).
\label{eq:arbitrarily-slow-recovery}
\end{equation}
Consequently there is no function $R:[0,\infty)\to[0,\infty)$ with $R(E)\to0$, independent of $G,M,N$, for which
\begin{equation}
\Gamma(M,N)-\Gamma_{G,E}(M,N)\le R(E)
\label{eq:no-uniform-rate-form}
\end{equation}
holds for all such triples and all sufficiently large $E$.
\end{proposition}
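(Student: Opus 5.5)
The plan is a block-diagonal construction that pushes the near-extremal commutators to ever higher energies. It rests on two earlier facts: Proposition~\ref{prop:Gamma-Delta}, which reduces $\Gamma$ to commutators of symmetries, and the variance bound behind $\gamma_{G,E}$.

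\emph{Construction.} Take
\[
\cH=\mathbb C\oplus\bigoplus_{j\ge1}\mathbb C^2 .
\]
On the summand $\mathbb C$ put $G=0$. On the $j$-th copy of $\mathbb C^2$ put $G=g_j1$, with $g_1=1<g_2<\cdots\to\infty$ chosen below. Then $G$ has compact resolvent, finite multiplicities (each eigenspace has dimension at most $2$), ground energy $0$ and first positive eigenvalue $1$.

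In block $j$ let $p_j,q_j$ be rank-one projections onto lines at angle $\theta_j\in(0,\pi/4)$, and set $s_j=2p_j-1$, $t_j=2q_j-1$. A $2\times2$ computation gives $\|[s_j,t_j]\|=2\sin2\theta_j=:\lambda_j$. Let
\[
M=\mathbb C\oplus\bigoplus_j\operatorname{span}\{1,p_j\},\qquad N=\mathbb C\oplus\bigoplus_j\operatorname{span}\{1,q_j\}
\]
(abelian von Neumann algebras, block diagonal). Choose $\lambda_j\uparrow2$ strictly, with $2-\lambda_j$ controlled by $r_j$ as specified below.

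Every symmetry of $M$ or $N$ is blockwise either $\pm1$ or $\pm s_j$ (resp.\ $\pm t_j$). Hence \eqref{eq:Delta-symmetries} gives $\Delta_{\rm sa}(M,N)=\sup_j\lambda_j=2$, and Proposition~\ref{prop:Gamma-Delta} gives $\Gamma(M,N)=4$.

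\emph{Upper bound on $\Gamma_{G,E_k}$.} Fix $E_k$, say $E_k=k$ (any prescribed sequence tending to infinity works). For self-adjoint contractions $a\in M$, $b\in N$, the operator $h=-i[a,b]$ is block diagonal with traceless $2\times2$ blocks $h_j$ and zero on $\mathbb C$. Since $a_j,b_j$ are contractions in $\operatorname{span}\{1,p_j\}$, $\operatorname{span}\{1,q_j\}$, one gets $\|h_j\|\le\lambda_j$ by writing them as convex combinations of symmetries.

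For $\rho\in\mathfrak S^G_{E_k}$ let $\pi_j$ be the weight of $\rho$ on block $j$. The energy constraint gives $\sum_{j>k}\pi_j\le E_k/g_{k+1}$. Bounding the variance by the second moment,
\[
\operatorname{Var}_\rho(h)\le\operatorname{Tr}(\rho h^2)\le\lambda_k^2+4E_k/g_{k+1}.
\]
Therefore
\[
\gamma_{G,E_k}(h)\le2\sqrt{\lambda_k^2+4E_k/g_{k+1}}.
\]
Choose $\lambda_k$ with $2\lambda_k\le4-2r_k$, possible since $r_k<1$ and $r_k\downarrow0$ lets $\lambda_k\uparrow2$. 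Then choose $g_{k+1}$ so large, recursively, that the right-hand side is at most $4-r_k$. Taking the supremum over $a,b$ gives \eqref{eq:arbitrarily-slow-recovery}. The restriction $k\ge3$ merely leaves room to normalize $g_1=1$ and the first blocks.

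\emph{Consequence.} Suppose a universal $R$ existed. Set $r_k:=\min\{1/2,\ \sup_{E\ge k}R(E)+1/k\}$, made nonincreasing; then $r_k>R(E_k)$ eventually. Apply the construction with $E_k=k$. Then $\Gamma-\Gamma_{G,E_k}\ge r_k>R(E_k)$ for infinitely many large $E$, a contradiction.

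The main obstacle is the upper bound on $\Gamma_{G,E_k}$. Variance may mix blocks, so the global spectral diameter could be reached by a state spreading weight across several blocks. Bounding by the second moment $\operatorname{Tr}(\rho h^2)$, which is valid because each $h_j$ is centred and traceless, together with the energy-controlled tail weight, is what resolves this.
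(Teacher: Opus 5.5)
Your proposal is correct and is essentially the paper's proof: the same block-diagonal construction, in which $2\times2$ abelian blocks are generated by two projections at angle $\theta_j$ (the paper's $\vN(Z)$ and $\vN(B_n)$), the bound $\operatorname{Var}_\rho(h)\le\operatorname{Tr}(\rho h^2)$ combined with the tail estimate $\sum_{j>k}\pi_j\le E_k/g_{k+1}$, and a recursively chosen energy ladder with $g_{k+1}$ large compared with $E_k/r_k^2$. The differences are cosmetic. You use a one-dimensional ground summand instead of commuting blocks, you decouple the evaluation energies from the eigenvalues, and you get the consequence by adapting $r_k$ to $R$ rather than choosing the ladder so that $R(E_k)<r_k$. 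Also, the second-moment bound holds for every self-adjoint $h$, so no centring or tracelessness is needed.
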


\begin{proof}
Let
\[
\cH=\bigoplus_{n\ge1}\mathbb C^2,
\qquad
G=\bigoplus_{n\ge1}E_n1_2,
\]
where $E_1=0$, $E_2=1$, $E_3>1$, and, for $k\ge3$,
\begin{equation}
E_{k+1}\ge\frac{16E_k}{r_k^2}.
\label{eq:energy-ladder-choice}
\end{equation}
Put $s_1=s_2=0$ and $s_k=1-r_k/2$ for $k\ge3$.  Choose
$\theta_k\in[0,\pi/4]$ with $\sin(2\theta_k)=s_k$, and let
\[
Z=\begin{pmatrix}1&0\\0&-1\end{pmatrix},\qquad
X=\begin{pmatrix}0&1\\1&0\end{pmatrix},\qquad
B_k=\cos(2\theta_k)Z+\sin(2\theta_k)X.
\]
Define the block-diagonal von Neumann algebras
\[
\begin{aligned}
M&=\left\{\bigoplus_n a_n:\ a_n\in\vN(Z),\ \sup_n\|a_n\|<\infty\right\},\\
N&=\left\{\bigoplus_n b_n:\ b_n\in\vN(B_n),\ \sup_n\|b_n\|<\infty\right\}.
\end{aligned}
\]
Since $E_n\to\infty$ and every eigenvalue has multiplicity two, $G$ has compact resolvent and
finite spectral multiplicities, with the stated normalization.

For a self-adjoint contraction $a_n\in\vN(Z)$ and a self-adjoint contraction
$b_n\in\vN(B_n)$, write
$a_n=\alpha_n1+\beta_nZ$ and $b_n=\gamma_n1+\delta_nB_n$.  Then
$|\beta_n|,|\delta_n|\le1$, and, with
$Y=\begin{psmallmatrix}0&-i\\ i&0\end{psmallmatrix}$,
\[
-i[a_n,b_n]=2\beta_n\delta_n s_nY,
\qquad
(-i[a_n,b_n])^2\le4s_n^2 1_2.
\]
If $\rho$ is a state with $\operatorname{Tr}(G\rho)\le E$ and $P_n$ is the projection onto
the $n$th block, put $p_n=\operatorname{Tr}(P_n\rho)$.  Then
$\sum_n p_n=1$, $\sum_nE_np_n\le E$, and
\[
\operatorname{Var}_\rho(-i[a,b])
\le\operatorname{Tr}\rho(-i[a,b])^2
\le4\sum_n p_ns_n^2.
\]
Conversely, take $a=\bigoplus_nZ$ and $b=\bigoplus_nB_n$.  For every probability vector
$(p_n)$ with $\sum_nE_np_n\le E$, the state
$\rho_p=\bigoplus_n p_n1_2/2$ has mean energy at most $E$, zero expectation of
$-i[a,b]$, and variance $4\sum_np_ns_n^2$.  Hence
\begin{equation}
\Gamma_{G,E}(M,N)
=4\left(
\sup_{\substack{p_n\ge0,\ \sum_np_n=1\\ \sum_nE_np_n\le E}}
\sum_n p_ns_n^2
\right)^{1/2}.
\label{eq:block-energy-exact}
\end{equation}
Since $s_k\uparrow1$, this formula also gives $\Gamma(M,N)=4$.

At $E=E_k$, let $q=\sum_{n>k}p_n$.  The energy constraint and monotonicity of the
$E_n$ give $q\le E_k/E_{k+1}$, while $s_n\le s_k$ for $n\le k$.  Therefore
\[
\sum_n p_ns_n^2
\le s_k^2+\frac{E_k}{E_{k+1}},
\]
and \eqref{eq:block-energy-exact} together with \eqref{eq:energy-ladder-choice} yields
\[
\Gamma_{G,E_k}(M,N)
\le4s_k+4\sqrt{\frac{E_k}{E_{k+1}}}
\le4-2r_k+r_k=4-r_k.
\]
This proves \eqref{eq:arbitrarily-slow-recovery}.

For the final assertion, given a candidate $R(E)\to0$, take for instance $r_k=2^{-k}$.
The eigenvalues can be chosen recursively so that, in addition to
\eqref{eq:energy-ladder-choice}, $R(E_k)<r_k$ for every $k\ge3$.  Then
\eqref{eq:arbitrarily-slow-recovery} contradicts \eqref{eq:no-uniform-rate-form} along the
sequence $E_k$.
\end{proof}

The construction lies in the general von Neumann-algebraic setting of
Theorem~\ref{thm:EC-dual-net-defect}.  It shows that a quantitative recovery rate requires
additional structure relating the energy observable to the two algebras.

\section{Wedge realization and model examples}
\label{sec:wedge-envelope}

The examples below realize complementary regimes within the second-quantization framework.
Wedge-dual generalized free fields separate Haag duality from essential duality, while the Weyl
criterion identifies when the nonzero quantitative branch occurs.  Wedge duality is used only in
this section; the results of Sections~2--4 do not assume it.  We use the standard wedge-separation
property of Minkowski space: for every spacelike pair of double cones $O_1\perp O_2$ there is a
wedge $W$ such that
\begin{equation}
O_1\subset W\subset O_2',
\label{eq:wedge-separation}
\end{equation}
equivalently $O_1\subset W$ and $O_2\subset W'$ \cite{ThomasWichmann1997}.

For a double cone $O$ define its wedge envelope by
\begin{equation}
\cA^{\wedge}(O):=\bigcap_{W\supset O}\cA(W),
\label{eq:wedge-envelope}
\end{equation}
where the intersection runs over wedges containing $O$.

\begin{lemma}[Wedge realization of the dual algebra]
\label{lem:wedge-envelope-dual}
Assume wedge duality,
\begin{equation}
\cA(W')=\cA(W)'
\qquad\text{for every wedge }W.
\label{eq:wedge-duality}
\end{equation}
Then every double cone $O$ satisfies
\begin{equation}
\cA^{\wedge}(O)=\cA(O')'=\cA^d(O),
\label{eq:wedge-dual-identity}
\end{equation}
and consequently $\NS_{\cA}(O)=\Chan(\cA^{\wedge}(O))$.
\end{lemma}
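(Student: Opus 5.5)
The plan is to prove the two inclusions $\cA^{\wedge}(O)\subset\cA(O')'$ and $\cA(O')'\subset\cA^{\wedge}(O)$ separately, and then read off the channel statement from Proposition~\ref{prop:dual-reconstruction}.

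\emph{The inclusion $\cA(O')'\subset\cA^{\wedge}(O)$.} Fix a wedge $W\supset O$. Then $W'\subset O'$, since causal complementation reverses inclusion. If $\cA(W')$ is read through the additive extension \eqref{eq:additive-extension}, every double cone $V\subset W'$ also satisfies $V\subset O'$. Hence $\cA(W')\subset\cA(O')$, and taking commutants gives $\cA(O')'\subset\cA(W')'$. Wedge duality \eqref{eq:wedge-duality}, applied to the wedge $W'$, gives $\cA(W')'=\cA(W'')=\cA(W)$; here I use that $W''=W$ for wedges, and equivalently one may apply \eqref{eq:wedge-duality} to $W$ and take commutants, using the bicommutant theorem. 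Intersecting over all $W\supset O$ yields $\cA(O')'\subset\cA^{\wedge}(O)$.

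\emph{The inclusion $\cA^{\wedge}(O)\subset\cA(O')'$.} By the definition of the additive extension, $\cA(O')$ is generated by the $\cA(V)$ with $V\subset O'$ a double cone. So it suffices to show that each $x\in\cA^{\wedge}(O)$ commutes with each such $\cA(V)$. For a double cone $V\subset O'$ we have $O\perp V$, and the wedge-separation property \eqref{eq:wedge-separation} supplies a wedge $W$ with $O\subset W\subset V'$, that is, $O\subset W$ and $V\subset W'$. Then $x\in\cA(W)$. Also $\cA(V)\subset\cA(W')=\cA(W)'$, where the inclusion is isotony of the additive extension and the equality is wedge duality. Therefore $[x,\cA(V)]=0$. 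Taking the von Neumann algebra generated by all such $\cA(V)$ gives $x\in\cA(O')'$. The identity $\cA(O')'=\cA^d(O)$ is the definition \eqref{eq:dual-net}.

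\emph{The channel statement and the main obstacle.} Once \eqref{eq:wedge-dual-identity} is established, $\NS_{\cA}(O)=\Chan(\cA^d(O))=\Chan(\cA^{\wedge}(O))$ follows from \eqref{eq:C=dualchannels}. The main obstacle I expect is bookkeeping rather than mathematics: making sure $\cA(W)$ and $\cA(O')$ are both understood through the additive extension, so that isotony is available for unbounded regions. One must also note that $W'$ is again a wedge with $W''=W$. The nontrivial geometric input is the wedge-separation property, which is quoted from \cite{ThomasWichmann1997}, so no further work is needed there.
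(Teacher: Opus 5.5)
Your proposal is correct and follows essentially the same route as the paper: the inclusion $\cA(W')\subset\cA(O')$ for $W\supset O$ handles one direction, and wedge separation plus wedge duality handles the other. The only cosmetic difference is that the paper first proves $\cA(O')=\bigvee_{W\supset O}\cA(W')$ and then takes commutants, whereas you check the two commutant inclusions directly, element by element.
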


\begin{proof}
We first show
\begin{equation}
\cA(O')=\bigvee_{W\supset O}\cA(W').
\label{eq:complement-from-opposite-wedges}
\end{equation}
Let $V\subset O'$ be a double cone.  Applying the wedge-separation property
\eqref{eq:wedge-separation} to $O$ and $V$ gives a wedge $W$ with $O\subset W\subset V'$, hence
$V\subset W'$ and $\cA(V)\subset\cA(W')$.  Since $\cA(O')$ is the join of the algebras $\cA(V)$
over all double cones $V\subset O'$, this proves the inclusion ``$\subset$'' in
\eqref{eq:complement-from-opposite-wedges}.  Conversely, $O\subset W$ implies $W'\subset O'$, so
$\cA(W')\subset\cA(O')$ by isotony of the open-region extension.  Taking commutants and using
wedge duality gives
\[
\cA(O')'=\bigcap_{W\supset O}\cA(W')'
=\bigcap_{W\supset O}\cA(W)=\cA^{\wedge}(O),
\]
which is \eqref{eq:wedge-dual-identity}.  The channel identity is
Proposition~\ref{prop:dual-reconstruction}.
\end{proof}

\begin{lemma}[Wedge duality gives zero defect]
\label{lem:wedge-zero-defect}
Assume wedge duality.  Then for every spacelike pair $O_1\perp O_2$,
\[
[\cA^d(O_1),\cA^d(O_2)]=0,
\qquad
\Gamma_{\cA}(O_1,O_2)=0.
\]
In particular the dual net is local.
\end{lemma}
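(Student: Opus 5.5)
By Lemma~\ref{lem:wedge-envelope-dual}, each dual algebra is a wedge envelope, $\cA^d(O_i)=\cA^{\wedge}(O_i)=\bigcap_{W\supset O_i}\cA(W)$. The idea is to separate $O_1$ and $O_2$ by one wedge $W$ with $O_1\subset W$ and $O_2\subset W'$, and then to use wedge duality once more to see that the two envelopes sit in mutually commuting wedge algebras.

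\emph{Step 1: choose the separating wedge.} Fix a spacelike pair $O_1\perp O_2$. The wedge-separation property \eqref{eq:wedge-separation} gives a wedge $W$ with $O_1\subset W$ and $O_2\subset W'$. The causal complement $W'$ of a wedge is again a wedge, so $W'$ is one of the wedges appearing in the intersection that defines $\cA^{\wedge}(O_2)$.

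\emph{Step 2: locate each dual algebra inside a wedge algebra.} Since $W\supset O_1$, the definition of the envelope together with \eqref{eq:wedge-dual-identity} gives
\[
\cA^d(O_1)=\cA^{\wedge}(O_1)\subset\cA(W).
\]
Since $W'\supset O_2$, the same reasoning gives
\[
\cA^d(O_2)\subset\cA(W').
\]

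\emph{Step 3: apply wedge duality.} By \eqref{eq:wedge-duality}, $\cA(W')=\cA(W)'$. Combined with Step~2 this yields
\[
[\cA^d(O_1),\cA^d(O_2)]\subset[\cA(W),\cA(W)']=0.
\]

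\emph{Step 4: conclude.} Every commutator $[a,b]$ with $a\in\cA^d(O_1)$ and $b\in\cA^d(O_2)$ therefore vanishes. Hence every $\diam\sigma(-i[a,b])$ is zero, and Proposition~\ref{prop:Gamma-Delta} gives $\Gamma_{\cA}(O_1,O_2)=2\Delta_{\rm sa}=0$. Since the spacelike pair was arbitrary, Definition~\ref{def:essential-duality} shows that the dual net is local.

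The only step needing care is Step~1: one must check that $W'$ is itself a wedge, so that it is admissible in the intersection defining $\cA^{\wedge}(O_2)$. This is standard Minkowski geometry, since the causal complement of a wedge is the opposite wedge. Everything else is a direct unpacking of the definitions.
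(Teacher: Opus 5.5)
Your proof is correct and is essentially the paper's argument: you separate the pair by a wedge $W$ with $O_1\subset W$ and $O_2\subset W'$, place $\cA^d(O_1)$ in $\cA(W)$ and $\cA^d(O_2)$ in $\cA(W')=\cA(W)'$, and read off $\Gamma_{\cA}(O_1,O_2)=0$. The only difference is cosmetic: the paper does not invoke Lemma~\ref{lem:wedge-envelope-dual} but obtains the two inclusions directly from $W'\subset O_1'$, $W\subset O_2'$ by isotony and taking commutants with wedge duality, which is the easy half of that lemma.
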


\begin{proof}
Choose a wedge $W$ as in \eqref{eq:wedge-separation}.  Since $O_1\subset W$ and
$O_2\subset W'$, causal complementation reverses inclusion and gives
$W'\subset O_1'$ and $W\subset O_2'$.  Therefore
\[
\cA^d(O_1)=\cA(O_1')'\subset\cA(W')'=\cA(W),
\qquad
\cA^d(O_2)=\cA(O_2')'\subset\cA(W)'=\cA(W').
\]
The two dual algebras commute.  The vanishing of $\Gamma_{\cA}$ follows from
Theorem~\ref{thm:quantitative-essential}.
\end{proof}

\begin{proposition}[Zero defect with a Haag-duality gap]
\label{prop:GY-zero-defect}
Consider the generalized free field models of Gaier and Yngvason
\cite[Examples 1 and 2 in Sec.~5]{GaierYngvason2000}: a massless field in $d=3$ with
$M(p)=(a\cdot p)^{2n}$ for a spacelike or lightlike $a\neq0$, and the corresponding $d=2$ models
with $m>0$.  Let $\cA=\mathcal M_{\min}$ be the minimal net generated by the field.  Then
\begin{enumerate}[label=\textup{(\roman*)}]
\item $\cA$ satisfies essential duality and
$\Gamma_{\cA}(O_1,O_2)=0$ for every spacelike pair $O_1\perp O_2$;
\item Haag duality fails at every bounded region $O$;
\item at every bounded region $O$ there are complement-fixing unitary channels arbitrarily close
to the identity in cb norm which are not $\cA(O)$-inner.
\end{enumerate}
Thus $\NS_{\cA}(O)$ is strictly larger than $\Chan(\cA(O))$ at every bounded region while the
spacelike composability defect vanishes identically.
\end{proposition}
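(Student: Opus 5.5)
The plan is to import the structural facts about these models from Gaier and Yngvason and feed them into the general results of Sections~2--4; nothing new about the models has to be computed.

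For (i), I will rely on the property of these models established in \cite[Sec.~5]{GaierYngvason2000}: the minimal net $\mathcal M_{\min}$ generated by the field satisfies wedge duality \eqref{eq:wedge-duality}. This holds because the wedge algebras of a second-quantized free-type field are dual via the Bisognano--Wichmann/Araki--Eckmann--Osterwalder modular argument, and the wedge subspaces of the one-particle space are standard and symplectically dual. With this, Lemma~\ref{lem:wedge-zero-defect} gives $[\cA^d(O_1),\cA^d(O_2)]=0$ and $\Gamma_{\cA}(O_1,O_2)=0$ for every spacelike pair. Essential duality then follows directly from Definition~\ref{def:essential-duality}, or equivalently from Theorem~\ref{thm:quantitative-essential}(iii). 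The only subtlety is making sure that wedge duality is stated for the additive extension \eqref{eq:additive-extension}, that is, that $\cA(W)$ means the algebra generated by double cones inside $W$. I will check this against Gaier--Yngvason's definition of $\mathcal M_{\min}(W)$.

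For (ii), the key input is the main result of those examples. For the chosen $M(p)$, the field algebra $\mathcal M_{\min}(O)$ is strictly smaller than the wedge envelope $\bigcap_{W\supset O}\mathcal M_{\min}(W)$, so that $\mathcal M_{\max}\ne\mathcal M_{\min}$ at every bounded region. Lemma~\ref{lem:wedge-envelope-dual} identifies this envelope with $\cA^d(O)$, so $\cA(O)\subsetneq\cA^d(O)$, which is exactly failure of Haag duality at $O$.

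I expect this step to be the main obstacle. Gaier and Yngvason prove the gap through the one-particle subspaces: the symplectic complement of $K(O')$ is strictly larger than the closure of $K(O)$, because the measure $M(p)$ vanishes on a hyperplane and permits extra localized test functions. I will transfer this to the algebra level via the second-quantization lattice isomorphism $K\mapsto\mathcal R(K)$ \cite{Araki1963}, together with $\mathcal R(K)'=\mathcal R(K')$. The remaining point to verify is that "every bounded region" is covered. For double cones this is their statement, and for a general bounded $O$ I would reduce to it by isotony of the gap, or simply restrict the claim to double cones as the net is indexed.

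Part (iii) is then Theorem~\ref{thm:haag-kraus} applied at $O$. Pick a self-adjoint $a\in\cA^d(O)\setminus\cA(O)$, and the flows $\Ad_{e^{ita}}$ with small $t\neq0$ lie in $\NS_{\cA}(O)\setminus\Chan(\cA(O))$ and converge to $\id$ in cb norm. The final sentence of the proposition combines (ii), which via Theorem~\ref{thm:haag-kraus}(iii) gives the strict inclusion of channel classes, with (i).
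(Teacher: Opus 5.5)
Your plan is the paper's proof. Wedge duality for $\mathcal M_{\min}$ from Gaier--Yngvason feeds Lemma~\ref{lem:wedge-zero-defect} for (i). Their gap $\mathcal M_{\min}(O)\neq\mathcal M_{\max}(O)=\bigcap_{W\supset O}\mathcal M_{\min}(W)$ becomes a Haag-duality gap through Lemma~\ref{lem:wedge-envelope-dual} for (ii), and (iii) is Theorem~\ref{thm:haag-kraus}. The wedge-level check you flag, that wedge duality applies to the additive extension, corresponds to the paper's appeal to $\mathcal M_{\min}(W)=\mathcal M_{\max}(W)$ on wedges.

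One correction to your stated reason for wedge duality. The two-point measure carries the factor $(a\cdot p)^{2n}$ with $a$ fixed, so it is not Lorentz invariant and Bisognano--Wichmann does not apply to these fields. Wedge duality is also not automatic for generalized free fields. It comes from \cite[Prop.~4.1]{GaierYngvason2000}, whose hypothesis is that $p_+\mapsto M(p_+,p_+^{-1}(\hat p^2+m^2),\hat p)$ has only real zeros. This is checked for $M$ and for every Lorentz transform $M_\Lambda$, which has the same form, and that is what covers all wedges.
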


\begin{proof}
In these examples the rational function $p_+\mapsto M(p_+,p_+^{-1}(\hat p^2+m^2),\hat p)$ has
only real zeros, and the same holds after every Lorentz transformation because $M_\Lambda$ is
again of the stated form; by \cite[Prop.~4.1]{GaierYngvason2000} wedge duality holds for every
wedge.  Gaier and Yngvason also record that
\[
\mathcal M_{\min}(O)\neq\mathcal M_{\max}(O),
\qquad
\mathcal M_{\max}(O)=\mathcal M_{\min}(O')'
=\bigcap_{W\supset O}\mathcal M_{\min}(W)
\]
for bounded $O$.  On every wedge, however, $\mathcal M_{\min}(W)=\mathcal M_{\max}(W)$
\cite[Remark following Eq.~(53)]{GaierYngvason2000}.  The identification of
$\mathcal M_{\min}(O')'$ with the intersection over wedges is
Lemma~\ref{lem:wedge-envelope-dual} applied to $\mathcal M_{\min}$.  Since
$\mathcal M_{\max}(O)=\mathcal M_{\min}(O')'$ is the dual-net algebra $\cA^d(O)$ of the
minimal net.  The strict inclusion is therefore failure of Haag duality, proving (ii), while
(iii) follows from Theorem~\ref{thm:haag-kraus}.  Part (i) follows from
Lemma~\ref{lem:wedge-zero-defect}.
\end{proof}

\subsection{Bosonic second quantization}
\label{subsec:weyl-dichotomy}

Let $\mathfrak h$ be a separable complex Hilbert space and $K\subset\mathfrak h$ a closed real
linear subspace.  Write $\mathcal R(K)$ for the von Neumann algebra on symmetric Fock space
generated by the Weyl unitaries $W(f)$, $f\in K$, with
\begin{equation}
W(f)W(g)=e^{-\frac{i}{2}\omega(f,g)}W(f+g),
\qquad
\omega(f,g)=\operatorname{Im}\langle f,g\rangle,
\label{eq:weyl-convention}
\end{equation}
and let $K':=\{g\in\mathfrak h:\omega(g,f)=0\ \text{for all }f\in K\}$ be the symplectic polar.
The map $K\mapsto\mathcal R(K)$ is an isomorphism of complemented nets: it carries generated
real subspaces to generated von Neumann algebras and the polar to the commutant,
\begin{equation}
\mathcal R(K)'=\mathcal R(K') .
\label{eq:second-quantization-commutant}
\end{equation}
This goes back to Araki \cite{Araki1963} and Eckmann--Osterwalder
\cite{EckmannOsterwalder1973}; we use the formulation of Figliolini and Guido
\cite[Prop.~2.1]{FiglioliniGuido2000}.  From \eqref{eq:weyl-convention},
\begin{equation}
W(f)W(g)=e^{-i\omega(f,g)}W(g)W(f).
\label{eq:weyl-exchange}
\end{equation}

\begin{proposition}[Extremal Weyl witnesses]
\label{prop:Weyl-extremal-witnesses}
Let $M,N\subset B(\cH)$ contain one-parameter unitary groups $U(r)\in M$ and $V(s)\in N$
satisfying
\begin{equation}
U(r)V(s)=e^{-i\kappa rs}V(s)U(r),\qquad \kappa\neq0.
\label{eq:CCR-phase}
\end{equation}
Then:
\begin{enumerate}[label=(\roman*)]
\item choosing $r_0,s_0$ with $\kappa r_0s_0=\pi$ and writing $U=U(r_0)$, $V=V(s_0)$, the
self-adjoint contractions
\begin{equation}
A_0:=\frac{U-U^*}{2i},\qquad B_0:=\frac{V-V^*}{2i}
\label{eq:trigonometric-Weyl-witnesses}
\end{equation}
satisfy
\begin{equation}
\diam\sigma\bigl(-i[A_0,B_0]\bigr)=4;
\label{eq:trigonometric-Weyl-diameter}
\end{equation}
in particular $\Gamma(M,N)=4$.
\item $M$ and $N$ contain self-adjoint unitaries $A,B$ satisfying $AB=-BA$.  In particular,
\[
\Delta_{\rm sa}(M,N)=2.
\]
\end{enumerate}
\end{proposition}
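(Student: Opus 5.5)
The plan is to work entirely with the single pair of unitaries $U=U(r_0)$, $V=V(s_0)$. Setting $r=r_0$, $s=s_0$ in \eqref{eq:CCR-phase} with $\kappa r_0s_0=\pi$ gives $UV=e^{-i\pi}VU=-VU$, so $U$ and $V$ anticommute. Taking adjoints gives $V^*U^*=-U^*V^*$. Multiplying $UV=-VU$ on the left and right by $U^*$, and separately by $V^*$, yields $U^*V=-VU^*$ and $UV^*=-V^*U$. Hence every element of $\{U,U^*\}$ anticommutes with every element of $\{V,V^*\}$. This is the only algebraic input needed.

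For (i), first compute $[A_0,B_0]$. Since each of $U,U^*$ anticommutes with each of $V,V^*$, we have $A_0B_0=-B_0A_0$, so $[A_0,B_0]=2A_0B_0$ and $-i[A_0,B_0]=-2iA_0B_0$. To find its spectrum, put $h=-2iA_0B_0$. Then
\[
h^2=-4A_0B_0A_0B_0=4A_0^2B_0^2,
\]
and $A_0^2=\tfrac14(2-U^2-U^{*2})$, with a similar expression for $B_0^2$. This may not reach spectral edge $\pm2$ directly, so a cleaner route is preferable. Observe that the conjugation $x\mapsto VxV^*$ sends $U\mapsto -U$, hence $A_0\mapsto -A_0$, and fixes $B_0$; therefore it sends $h\mapsto -h$. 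So $\sigma(h)$ is symmetric, and $\diam\sigma(h)=2\|h\|$, exactly as in the proof of Proposition~\ref{prop:Gamma-Delta}.

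It remains to show $\|h\|=2$, i.e. $\|A_0B_0\|=1$. The bound $\|A_0B_0\|\le1$ is clear. For the lower bound, $A_0^2$ and $B_0^2$ commute, because $A_0$ and $B_0$ anticommute. Therefore $\|A_0B_0\|^2=\|B_0A_0^2B_0\|=\|A_0^2B_0^2\|$, a product of commuting positive contractions, and we need their joint spectrum to reach $(1,1)$.

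This is the main obstacle. The plan is to realize the relation as a representation. Anticommutation makes $U^2$ commute with $V$, and also with $U$, so $U^2$ and $V^2$ are central in $C^*(U,V)$. Split $\sigma(U)$ as $\{\lambda,-\lambda\}$ pairs: since $VUV^*=-U$, we get $\sigma(U)=-\sigma(U)$. Then choose $\pm i\in\sigma(U)$ and $\pm i\in\sigma(V)$ jointly. To do this, use the joint spectral measure of the commuting unitaries $U^2,V^2$, pick a point $(-1,-1)$ in their joint spectrum, and compress by approximate spectral projections. At that point $U^2\approx-1$ gives $A_0^2=\tfrac14(2-2\Re U^2)\approx1$, and similarly $B_0^2\approx1$. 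The remaining issue is that $(-1,-1)$ need not lie in the joint spectrum of $(U^2,V^2)$. The fallback is to exploit the full groups: replace $r_0,s_0$ by $r_0/\mu$ and $\mu s_0$ for suitable $\mu$, or use the freedom in the spectral parameters of $U(\cdot)$, translated via $e^{i\theta}$ phases. If needed, I would instead prove (ii) first and derive (i) from it.

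For (ii), use the generators. The relation \eqref{eq:CCR-phase} makes $U(\cdot)$ and $V(\cdot)$ a Weyl pair, so by Stone--von Neumann their generators have absolutely continuous spectrum $\mathbb R$. Define $A=\operatorname{sgn}$-type symmetries $A=u(P)$ and $B=v(Q)$ with $u,v$ periodic $\pm1$ square waves of period $2\pi/(\kappa s_0)$ and $2\pi/(\kappa r_0)$, respectively, adapted so that translation by the half-period flips the sign. Concretely, take $A=\mathbf 1_{S}(P)-\mathbf 1_{S^c}(P)$ with $S$ a union of half-period intervals, for $P$ the generator of $U$. Then $A\in M$, since it is a Borel function of the generator. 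Since $V(s_0)$ shifts $P$ by a half period, $V A V^*=-A$. Then set $B=V$-side analogue adjusted so that $B$ is a symmetry anticommuting with $A$. For instance, $B=\tfrac12(1+A')\ldots$ is not right; the clean choice is $B=$ the symmetry $w(Q)$ with $w$ the square wave, and I must check $AB=-BA$ via a Zak-transform / lattice argument on the Weyl pair. This gives $\|[A,B]\|=2\|AB\|=2$, hence $\Delta_{\rm sa}=2$. Then $\Gamma=4$ follows from Proposition~\ref{prop:Gamma-Delta}, completing (i) even if the direct computation for $A_0,B_0$ stalls.
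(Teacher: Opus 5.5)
Your reduction in (i) is the paper's. $UV=-VU$ gives $A_0B_0=-B_0A_0$, conjugation by $V$ sends $h=-2iA_0B_0$ to $-h$, and everything comes down to $\|A_0^2B_0^2\|=1$, i.e.\ to $(-1,-1)\in\sigma(U^2,V^2)$. But you stop at exactly the decisive step, and your worry that $(-1,-1)$ might be missing is unfounded under the hypotheses. The missing idea is to use the \emph{whole} groups, not just the single pair $U,V$:
\begin{itemize}
\item $V(t)^*U^2V(t)=e^{-2i\kappa r_0t}U^2$, while $V(t)$ commutes with $V^2$;
\item $U(q)V^2U(q)^*=e^{-2i\kappa qs_0}V^2$, while $U(q)$ commutes with $U^2$.
\end{itemize}
Joint spectra are invariant under unitary conjugation, and $\kappa r_0,\kappa s_0\neq0$. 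So the nonempty compact set $\sigma(U^2,V^2)\subset\mathbb T^2$ is invariant under every rotation of $\mathbb T^2$, and hence equals $\mathbb T^2$. For a lone anticommuting pair the claim really does fail: self-adjoint anticommuting unitaries such as Pauli matrices give $A_0=B_0=0$. Neither of your fallbacks repairs this. Rescaling $r_0,s_0$ changes the operators the statement is about. Proving (ii) first yields $\Gamma(M,N)=4$ via Proposition~\ref{prop:Gamma-Delta}, but not $\diam\sigma(-i[A_0,B_0])=4$ for these specific trigonometric witnesses, whose product norm $\|A_0B_0\|$ is still governed by $\sigma(U^2,V^2)$.

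In (ii), your $A$ (a $\pm1$ square wave in the generator of $U$ that flips sign under the shift implemented by $V$) is in substance the paper's $A=\chi(U)$, with $\chi:\mathbb T\to\{-1,1\}$ Borel and $\chi(-z)=-\chi(z)$. Note that the shift is $\kappa s_0=\pi/r_0$, so the period must be $2\pi/r_0$, not $2\pi/(\kappa s_0)$. But the whole content of (ii) is $AB=-BA$, and you defer it to an unspecified Zak-transform argument after Stone--von Neumann. None of that machinery is needed. Functional calculus gives $VAV^*=\chi(VUV^*)=\chi(-U)=-A$, i.e.\ $AVA=-V$. Then $B:=\chi(V)\in N$ satisfies $ABA=\chi(AVA)=\chi(-V)=-B$. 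Your square wave in the generator of $V$ would work precisely because, with half-period $\pi/s_0$, it is of the form $\chi(V)$ for an odd $\chi$. As written, though, neither the anticommutation nor $\Delta_{\rm sa}(M,N)=2$ is established.
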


\begin{proof}
The phase condition gives $UV=-VU$, hence $A_0B_0=-B_0A_0$.  Put $X=U^2$ and $Y=V^2$.
They commute.  If $S=\sigma(X,Y)\subset\mathbb T^2$ is their joint spectrum, then conjugation by
$V(t)$ rotates $X$ through every phase and fixes $Y$, while conjugation by $U(q)$ fixes $X$ and
rotates $Y$ through every phase.  Thus the nonempty compact set $S$ is invariant under the
transitive $\mathbb T^2$-action and therefore $S=\mathbb T^2$.  Since
\[
A_0^2=\frac{2-X-X^*}{4},\qquad B_0^2=\frac{2-Y-Y^*}{4},
\]
the joint spectral point $(-1,-1)$ gives $\|A_0^2B_0^2\|=1$, hence $\|A_0B_0\|=1$.  For
$h=-i[A_0,B_0]=-2iA_0B_0$ one has $\|h\|=2$, and conjugation by $V$ sends $h$ to $-h$.
The spectrum $\sigma(h)$ is compact, so its spectral radius $2$ is attained; symmetry then gives
$\{-2,2\}\subset\sigma(h)$, proving (i).

For (ii), with the same $U,V$, let $\chi:\mathbb T\to\{-1,1\}$ be the Borel function satisfying
$\chi(-z)=-\chi(z)$.  Borel functional calculus gives self-adjoint unitaries
$A=\chi(U)\in M$ and $B=\chi(V)\in N$.  Since $VUV^*=-U$, functional calculus gives
$VAV^*=-A$, equivalently $AVA=-V$.  Applying functional calculus once more,
$ABA=\chi(AVA)=\chi(-V)=-B$, so $AB=-BA$.  Proposition~\ref{prop:Gamma-Delta}, together
with (i), yields $\Delta_{\rm sa}(M,N)=2$.
\end{proof}

The first part reaches the maximal infinitesimal coefficient by trigonometric continuous
functional calculus; the second supplies Clifford symmetries for the exact finite-parameter
profile below.

\begin{proposition}[Exact finite-scale order profile for a Clifford pair]
\label{prop:exact-Clifford-profile}
Let $A,B\in B(\cH)$ be self-adjoint unitaries with $AB=-BA$, and put
$\alpha_s=\Ad_{e^{isA}}$, $\beta_t=\Ad_{e^{itB}}$ and $q=q(s,t)=|\sin s\,\sin t|$.  Then for all
$s,t\in\mathbb R$,
\begin{equation}
\|\alpha_s\circ\beta_t-\beta_t\circ\alpha_s\|_{\cb}=4q\sqrt{1-q^2}.
\label{eq:exact-Clifford-order-defect}
\end{equation}
In particular the infinitesimal coefficient at the identity is $4$.
\end{proposition}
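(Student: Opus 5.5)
We reduce the computation to Proposition~\ref{prop:finite-scale-spectral}: with $u=e^{isA}$ and $v=e^{itB}$, the defect equals $2\sqrt{1-\nu(w)^2}$, where $w=u^*v^*uv$ is the group commutator. So the whole task is to identify $\sigma(w)$ explicitly.

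Because $A^2=B^2=1$, we have $e^{isA}=\cos s\,1+i\sin s\,A$ and $e^{itB}=\cos t\,1+i\sin t\,B$. The anticommutation $AB=-BA$ gives $v^*Av=e^{-2itB}A$ and similar relations. It is cleanest to compute $w$ directly in the Clifford algebra generated by $A$, $B$ and $C:=-iAB$. One checks that $C$ is a self-adjoint unitary anticommuting with both $A$ and $B$. The three operators obey Pauli relations, so the real span of $1,A,B,C$ behaves like the quaternions. Multiplying out the four factors, I expect
\[
w=\alpha 1+i(\beta_1A+\beta_2B+\beta_3C),\qquad \alpha,\beta_j\in\mathbb R,\qquad \alpha^2+|\beta|^2=1.
\]
Expanding $u^*v^*uv$ should give $\alpha=1-2\sin^2s\sin^2t=1-2q^2$. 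As a check, writing $v^*uv=\cos s+i\sin s\,e^{-2itB}A$ and multiplying by $u^*$, the scalar part comes out as $\cos^2s+\sin^2s\cos2t=1-2\sin^2s\sin^2t$.

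Next we determine the spectrum. The operator $K=\beta_1A+\beta_2B+\beta_3C$ is self-adjoint with $K^2=|\beta|^2 1$, since the cross terms cancel by anticommutation. When $|\beta|\neq0$, $K$ also has both eigenvalues $\pm|\beta|$: conjugation by a suitable Clifford element, for instance any of $A,B,C$ anticommuting with $K$ (some nonzero combination always exists), sends $K$ to $-K$, and $\sigma(K)$ is nonempty. Hence $\sigma(w)=\{\alpha\pm i|\beta|\}$ with $|\beta|=\sqrt{1-\alpha^2}$. This is a pair of conjugate points on $\mathbb T$, or the single point $\alpha$ when $\beta=0$.

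Now we evaluate $\nu$. If $\alpha\ge0$, the convex hull of $\sigma(w)$ is the vertical chord, so $\nu=\alpha$ and the defect is $2\sqrt{1-\alpha^2}$. If $\alpha<0$, then $\nu=|\alpha|$, giving the same formula. Substituting $\alpha=1-2q^2$ gives $1-\alpha^2=4q^2(1-q^2)$, so the defect is $4q\sqrt{1-q^2}$, using $q\le1$.

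The infinitesimal claim follows because $4q\sqrt{1-q^2}/|st|\to4$ as $(s,t)\to(0,0)$. It is also consistent with $\diam\sigma(-i[A,B])=\diam\sigma(2C)=4$ from Theorem~\ref{thm:quantitative-essential}(i).

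The main obstacle is the bookkeeping in the quaternion-type product. It needs care to confirm that $w$ has the stated form and that the scalar part is exactly $1-2q^2$, with no sign error in $\sin s\sin t$ (only its square matters). A second point of care is that $\sigma(K)$ is symmetric, so that both spectral points $\alpha\pm i|\beta|$ actually occur.
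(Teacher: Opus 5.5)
Your proof is correct. It follows the paper's overall plan: reduce via Proposition~\ref{prop:finite-scale-spectral} to the spectrum of $w$, and show that $\sigma(w)$ is a conjugate pair on $\mathbb T$ with real part $1-2q^2$. The difference lies in how you get the spectrum.

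The paper uses $B$ to identify the two eigenspaces of $A$, which gives $\cH\cong\mathbb C^2\otimes\cK$ with $A=Z\otimes1$ and $B=X\otimes1$. Then $w$ is an $SU(2)$ matrix tensored with $1_{\cK}$, and its trace and determinant give both eigenvalues $e^{\pm i\theta}$ at once. The proof finishes with the smallest-enclosing-disc form of the defect; you use the equivalent $\nu$-form instead.

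You avoid that unitary equivalence and work in the abstract algebra generated by $A$, $B$, $C=-iAB$. The relations $AB=iC$, $BC=iA$, $CA=iB$ make the real span of $1,iA,iB,iC$ closed under products. That span, rather than the span of $1,A,B,C$, is the quaternion algebra. So $w=\alpha+iK$ with real coefficients follows without expanding beyond the scalar part, which settles your ``I expect''. Moreover $w^*w=\alpha^2+K^2$ gives $\alpha^2+|\beta|^2=1$ directly.

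What your route buys is that no eigenspace identification is needed. What it costs is a separate argument that both points $\alpha\pm i|\beta|$ actually lie in $\sigma(w)$, which the $2\times2$ matrix picture supplies for free.

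In that step, be precise about the symmetry you conjugate by. Typically none of $A$, $B$, $C$ anticommutes with $K$ on its own. The right choice is $S=n_1A+n_2B+n_3C$ for a real unit vector $n\perp\beta$. This $S$ is a self-adjoint unitary with $SK+KS=2(n_1\beta_1+n_2\beta_2+n_3\beta_3)1=0$. Hence $SKS=-K$, and $\sigma(K)=\{\pm|\beta|\}$ whenever $\beta\neq0$.
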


\begin{proof}
Let $u=e^{isA}=\cos s+i\sin s\,A$ and $v=e^{itB}=\cos t+i\sin t\,B$.  Because $B$ maps the
$+1$-eigenspace of $A$ onto the $-1$-eigenspace and is unitary, the two eigenspaces are
unitarily isomorphic; identifying them through $B$ we may write $\cH\cong\mathbb C^2\otimes\cK$
with
\[
A=\begin{pmatrix}1&0\\0&-1\end{pmatrix}\otimes1,
\qquad
B=\begin{pmatrix}0&1\\1&0\end{pmatrix}\otimes1 .
\]
In this representation the group commutator $w=w_{A,B}(s,t)=u^*v^*uv$ of
\eqref{eq:group-commutator} is a $2\times2$ unitary tensored with $1_{\cK}$, with $\det w=1$ and
\[
\tfrac12\operatorname{Tr}_2(w)=1-2\sin^2s\,\sin^2t=1-2q^2 ,
\]
so $\sigma(w)=\{e^{i\theta},e^{-i\theta}\}$ with
\begin{equation}
\cos\theta=1-2q^2 .
\label{eq:Clifford-relative-angle}
\end{equation}
The smallest closed disc containing two points of $\mathbb T$ has radius half their distance, so
$\inf_{\lambda\in\mathbb C}\|w-\lambda1\|=|\sin\theta|$, and
Proposition~\ref{prop:finite-scale-spectral} gives
\[
C_{A,B}(s,t)=2|\sin\theta|=2\sqrt{1-(1-2q^2)^2}=4q\sqrt{1-q^2}.
\]
This covers the degenerate cases $q\in\{0,1\}$, where $w=\pm1$ and both sides vanish.  The
infinitesimal statement follows from $q(s,t)/|st|\to1$ as $(s,t)\to(0,0)$.
\end{proof}

\begin{remark}[Finite-scale range]
\label{rem:finite-scale-range}
Along the slice $t=\pi/2$ formula \eqref{eq:exact-Clifford-order-defect} reduces to
\[
C_{A,B}(s,\pi/2)=2|\sin 2s| ,
\]
so a single Clifford pair realizes continuously every cb distance in $[0,2]$, which is the
largest possible distance between two unitary channels.  The infinitesimal coefficient therefore
does not determine the finite-parameter profile.
\end{remark}

\begin{proposition}[Bosonic second-quantization alternative]
\label{prop:second-quantization-dichotomy}
Let $K_1,K_2\subset\mathfrak h$ be closed real linear subspaces and $M_j=\mathcal R(K_j)$.
\begin{enumerate}[label=(\roman*)]
\item If $\omega(K_1,K_2)=0$ then $[M_1,M_2]=0$, and $C_{a,b}\equiv0$ for all bounded self-adjoint
$a\in M_1$, $b\in M_2$.
\item If $\omega(K_1,K_2)\neq0$ then $M_1$ and $M_2$ contain self-adjoint unitaries $A,B$ with
$AB=-BA$, whose channel flows satisfy
\begin{equation}
\bigl\|\Ad_{e^{isA}}\Ad_{e^{itB}}-\Ad_{e^{itB}}\Ad_{e^{isA}}\bigr\|_{\cb}=4q\sqrt{1-q^2},
\qquad q=|\sin s\,\sin t| .
\label{eq:second-quantization-finite-profile}
\end{equation}
Consequently $\Gamma(M_1,M_2)=4$ and $\Delta_{\rm sa}(M_1,M_2)=2$.
\end{enumerate}
\end{proposition}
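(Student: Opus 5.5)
Part (i) follows from the second-quantization dictionary. If $\omega(K_1,K_2)=0$ then $K_2\subset K_1'$, and \eqref{eq:second-quantization-commutant} gives $M_2=\mathcal R(K_2)\subset\mathcal R(K_1')=\mathcal R(K_1)'=M_1'$. Alternatively, one can read $[M_1,M_2]=0$ off the Weyl exchange relation \eqref{eq:weyl-exchange} on generators and pass to the generated von Neumann algebras. Once $[M_1,M_2]=0$, for bounded self-adjoint $a\in M_1$, $b\in M_2$ the unitaries $e^{isa}$ and $e^{itb}$ commute. Hence $\Ad_{e^{isa}}$ and $\Ad_{e^{itb}}$ commute, which is also immediate from Lemma~\ref{lem:channel-commutation}, and so $C_{a,b}\equiv0$.

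For (ii), the plan is to produce one-parameter groups satisfying \eqref{eq:CCR-phase} and then invoke Propositions~\ref{prop:Weyl-extremal-witnesses} and \ref{prop:exact-Clifford-profile}. Pick $f\in K_1$, $g\in K_2$ with $\kappa:=\omega(f,g)\neq0$. Set $U(r)=W(rf)\in M_1$ and $V(s)=W(sg)\in M_2$. By \eqref{eq:weyl-convention}, and since $\omega(f,f)=0$, each of these is a one-parameter unitary group; it is strongly continuous in the standard Fock representation. Relation \eqref{eq:weyl-exchange} gives
\[
U(r)V(s)=e^{-i\kappa rs}V(s)U(r),
\]
which is exactly \eqref{eq:CCR-phase}. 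Proposition~\ref{prop:Weyl-extremal-witnesses}(ii) then supplies self-adjoint unitaries $A\in M_1$, $B\in M_2$ with $AB=-BA$. Proposition~\ref{prop:exact-Clifford-profile} gives \eqref{eq:second-quantization-finite-profile} verbatim.

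For the consequences, Proposition~\ref{prop:Weyl-extremal-witnesses}(i) already gives $\Gamma(M_1,M_2)=4$ and (ii) gives $\Delta_{\rm sa}(M_1,M_2)=2$. As a consistency check, $\|[A,B]\|=\|2AB\|=2$ together with Proposition~\ref{prop:Gamma-Delta} yields $\Gamma=2\Delta_{\rm sa}\ge4$, and the upper bound $\Gamma\le4$ is part of that proposition.

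The only step needing care is the first one in (ii): confirming that $r\mapsto W(rf)$ is a genuine continuous group, which the Weyl relations and the Fock representation guarantee, and that the sign convention for $\kappa$ is immaterial, since Proposition~\ref{prop:Weyl-extremal-witnesses} only requires $\kappa\neq0$. Everything else is direct citation.
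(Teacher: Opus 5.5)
Your proposal is correct and follows the paper's proof essentially step for step: for (i) you use $K_2\subset K_1'$ and $\mathcal R(K_1')=M_1'$, and for (ii) you feed $U(r)=W(rf)$, $V(s)=W(sg)$ into Proposition~\ref{prop:Weyl-extremal-witnesses} and then Proposition~\ref{prop:exact-Clifford-profile}. Your two additions are both valid but not needed beyond what the paper cites: the group property of $r\mapsto W(rf)$, which follows from $\omega(f,f)=0$, and the consistency check $\|[A,B]\|=\|2AB\|=2$.
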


\begin{proof}
If $\omega(K_1,K_2)=0$ then $K_2\subset K_1'$, so $M_2\subset\mathcal R(K_1')=M_1'$ by
\eqref{eq:second-quantization-commutant}, giving (i).  If $\omega(f,g)=\kappa\neq0$ for some
$f\in K_1$, $g\in K_2$, then $U(r)=W(rf)\in M_1$ and $V(s)=W(sg)\in M_2$ satisfy
\eqref{eq:CCR-phase} by \eqref{eq:weyl-exchange}.  Proposition~\ref{prop:Weyl-extremal-witnesses}
gives both $\Gamma(M_1,M_2)=4$ from
trigonometric witnesses and the required Clifford symmetries.  Proposition~\ref{prop:exact-Clifford-profile} gives
\eqref{eq:second-quantization-finite-profile}, and Proposition~\ref{prop:Gamma-Delta} yields
$\Delta_{\rm sa}(M_1,M_2)=2$.
\end{proof}

\begin{corollary}[Weyl-net criterion]
\label{cor:weyl-net-dichotomy}
Suppose the Haag--Kastler net is a bosonic second-quantization net, $\cA(O)=\mathcal R(K(O))$
with $K(O)$ closed real one-particle localization subspaces.  For an open region $R$ put
$K(R):=\overline{\operatorname{span}_{\mathbb R}}\{K(V):V\subset R\ \text{a double cone}\}$ and
$K^d(O):=K(O')'$.  Then $\cA^d(O)=\mathcal R(K^d(O))$, and for every spacelike pair exactly one
of the following holds:
\begin{enumerate}[label=(\alph*)]
\item $\omega(K^d(O_1),K^d(O_2))=0$, and all channel-order profiles between the two dual algebras
vanish;
\item $\omega(K^d(O_1),K^d(O_2))\neq0$, and the two dual algebras admit trigonometric
extremal witnesses as well as a Clifford pair with the profile
\eqref{eq:second-quantization-finite-profile}.
\end{enumerate}
In particular $\Gamma_{\cA}(O_1,O_2)\in\{0,4\}$, and failure of essential duality forces the
second alternative at some spacelike pair.
\end{corollary}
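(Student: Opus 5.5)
The corollary reduces to three ingredients: an identification of $\cA^d(O)$ as a second-quantization algebra, Proposition~\ref{prop:second-quantization-dichotomy}, and Theorem~\ref{thm:quantitative-essential}(iii). The plan is to first prove $\cA^d(O)=\mathcal R(K^d(O))$ and then apply the dichotomy to $K_j=K^d(O_j)$.

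\emph{Step 1: identify $\cA(O')$.} I first show $\cA(O')=\mathcal R(K(O'))$ for the open region $O'$, using the additive extension \eqref{eq:additive-extension}. By definition,
\[
\cA(O')=\bigvee_{V\subset O'}\mathcal R(K(V)).
\]
The complemented-net isomorphism $K\mapsto\mathcal R(K)$ of \cite[Prop.~2.1]{FiglioliniGuido2000} sends the closed real span of a family of subspaces to the von Neumann algebra generated by the corresponding algebras. Applied to the family $\{K(V):V\subset O'\}$, this gives exactly $\mathcal R(K(O'))$, with $K(O')$ as defined in the statement.

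\emph{Step 2: identify the dual algebra.} Taking commutants and using \eqref{eq:second-quantization-commutant} gives
\[
\cA^d(O)=\mathcal R(K(O'))'=\mathcal R(K(O')')=\mathcal R(K^d(O)).
\]
This step is the only real obstacle: I must make sure the cited lattice isomorphism covers arbitrary families of closed subspaces, hence a closed span over infinitely many double cones. Should it only be stated for pairs of subspaces, I would pass to the directed family of finite spans and use a monotone limit on both the subspace side and the algebra side. Weyl operators depend strongly continuously on the test vector, so $W(f)$ for $f$ in the closed span lies in the generated von Neumann algebra.

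\emph{Step 3: the dichotomy.} Set $K_j=K^d(O_j)$ and apply Proposition~\ref{prop:second-quantization-dichotomy}. The two cases $\omega(K_1,K_2)=0$ and $\omega(K_1,K_2)\neq0$ are mutually exclusive and exhaustive, and they give (a) and (b) respectively:
\begin{itemize}
\item In case (a), part (i) of that proposition shows that all profiles $C_{a,b}$ vanish, so $\Gamma_{\cA}(O_1,O_2)=0$.
\item In case (b), part (ii) supplies the Clifford pair with profile \eqref{eq:second-quantization-finite-profile} and gives $\Gamma=4$. The trigonometric witnesses come from Proposition~\ref{prop:Weyl-extremal-witnesses}(i), applied with $U(r)=W(rf)$ and $V(s)=W(sg)$, where $f\in K_1$, $g\in K_2$ satisfy $\omega(f,g)\neq0$.
\end{itemize}
Hence $\Gamma_{\cA}(O_1,O_2)\in\{0,4\}$ at every spacelike pair.

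\emph{Step 4: failure of essential duality.} If essential duality fails, Theorem~\ref{thm:quantitative-essential}(iii) provides a spacelike pair with $\Gamma_{\cA}(O_1,O_2)\neq0$. At that pair alternative (a) is impossible, because it forces $\Gamma_{\cA}=0$. So (b) holds there.
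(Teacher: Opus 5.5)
Your proposal is correct and follows the paper's proof step for step. You identify $\cA(O')=\mathcal R(K(O'))$ by preservation of joins, take commutants with \eqref{eq:second-quantization-commutant}, apply Proposition~\ref{prop:second-quantization-dichotomy} to $K^d(O_1),K^d(O_2)$, and use Theorem~\ref{thm:quantitative-essential}(iii) to exclude alternative (a) at some pair. Your fallback for the join step is also a valid justification of what the paper asserts in one phrase: handle algebraic spans via the Weyl relations, then pass to the closed span using strong continuity of $f\mapsto W(f)$.
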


\begin{proof}
Preservation of generated subspaces and joins gives $\cA(O')=\mathcal R(K(O'))$, and taking
commutants with \eqref{eq:second-quantization-commutant} gives $\cA^d(O)=\mathcal R(K^d(O))$.
The alternatives are Proposition~\ref{prop:second-quantization-dichotomy}.  If essential duality
fails, Theorem~\ref{thm:quantitative-essential}(iii) supplies a spacelike pair with positive
defect, at which alternative (a) is excluded.
\end{proof}

\begin{remark}[Local reversible cb dichotomy for second quantization]
\label{rem:weyl-cb-tangent}
At a spacelike pair in Corollary~\ref{cor:weyl-net-dichotomy}, alternative (a) gives
$\Omega^{\mathrm{rev}}_\varepsilon=0$ for $0<\varepsilon<2$.  In alternative (b),
$\Gamma_{\cA}(O_1,O_2)=4$ and therefore
\[
\lim_{\varepsilon\downarrow0}
\frac{\Omega^{\mathrm{rev}}_\varepsilon(\cA^d(O_1),\cA^d(O_2))}{\varepsilon^2}=1.
\]
For the Clifford witnesses this limit is explicit: with
$r_\varepsilon=\arcsin(\varepsilon/2)$,
\[
C_{A,B}(r_\varepsilon,r_\varepsilon)
=\varepsilon^2\sqrt{1-\frac{\varepsilon^4}{16}}.
\]
\end{remark}

\begin{corollary}[Extremal spacelike pair without essential duality]
\label{cor:Yngvason-extremal}
Let $\cA$ be one of the generalized free field nets of \cite{Yngvason1994} for which essential
duality fails.  Then there are spacelike-separated double cones $O_1,O_2$ and self-adjoint
unitaries $A\in\cA^d(O_1)$, $B\in\cA^d(O_2)$ with $AB=-BA$, so that
\begin{equation}
\bigl\|\Ad_{e^{i\tau A}}\Ad_{e^{i\sigma B}}-\Ad_{e^{i\sigma B}}\Ad_{e^{i\tau A}}\bigr\|_{\cb}
=4q\sqrt{1-q^2},
\qquad q=|\sin\tau\,\sin\sigma| .
\label{eq:Yngvason-exact-order-defect}
\end{equation}
This pair has infinitesimal defect $4$ and admits finite parameters at which the two orders of
composition are at the maximal cb distance $2$.
\end{corollary}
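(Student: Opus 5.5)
The plan is to reduce the statement to Corollary~\ref{cor:weyl-net-dichotomy}, after checking that Yngvason's generalized free field nets are bosonic second-quantization nets in the sense used there. These models are built from a generalized free field with a suitable Källén--Lehmann-type measure. The local algebras are therefore $\cA(O)=\mathcal R(K(O))$, where $K(O)$ is the closed real span of the one-particle vectors of test functions supported in $O$, sitting in the one-particle space $\mathfrak h$ of the generalized free field. I would take this identification from \cite{Yngvason1994} directly. The net is isotone and local, since $\omega(K(O_1),K(O_2))=0$ at spacelike separation by causal support of the commutator function, and $\mathcal R$ preserves generated subspaces.

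With that in place, Corollary~\ref{cor:weyl-net-dichotomy} gives $\cA^d(O)=\mathcal R(K^d(O))$ for every double cone. It also gives, at each spacelike pair, one of the two alternatives (a) and (b). Failure of essential duality is a hypothesis here. By Theorem~\ref{thm:quantitative-essential}(iii) there is then a spacelike pair $O_1\perp O_2$ with $\Gamma_{\cA}(O_1,O_2)>0$, which rules out alternative (a) at that pair. So $\omega(K^d(O_1),K^d(O_2))\neq0$.

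Next, Proposition~\ref{prop:second-quantization-dichotomy}(ii) applies with $K_j=K^d(O_j)$ and $M_j=\cA^d(O_j)$. It supplies self-adjoint unitaries $A\in\cA^d(O_1)$ and $B\in\cA^d(O_2)$ with $AB=-BA$. Concretely, one picks $f\in K^d(O_1)$ and $g\in K^d(O_2)$ with $\omega(f,g)=\kappa\ne0$, sets $U=W(r_0f)$ and $V=W(s_0g)$ with $\kappa r_0s_0=\pi$, and takes $A=\chi(U)$, $B=\chi(V)$ as in Proposition~\ref{prop:Weyl-extremal-witnesses}(ii). Proposition~\ref{prop:exact-Clifford-profile} then gives \eqref{eq:Yngvason-exact-order-defect} with $\tau,\sigma$ in place of $s,t$.

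The two final claims follow from the profile. As $(\tau,\sigma)\to(0,0)$ we have $q/|\tau\sigma|\to1$, so the infinitesimal coefficient is $4$; this agrees with $\Gamma_{\cA}(O_1,O_2)=4$. For the maximal distance, note that $4q\sqrt{1-q^2}=2\sin(2\arcsin q)$ equals $2$ exactly when $q=1/\sqrt2$. This happens for instance at $\sigma=\pi/2$, $\tau=\pi/4$, matching Remark~\ref{rem:finite-scale-range}. The value $2$ is the maximum possible, since any two unitary channels are within cb distance $2$.

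The main obstacle is the first step. One must confirm that Yngvason's nets are exactly second-quantization nets $\mathcal R(K(O))$ in the Fock representation, so that the complemented-net isomorphism \eqref{eq:second-quantization-commutant} applies to $\cA(O')$. That is, the one-particle structure has to be a standard (separable) complex Hilbert space on which the Araki--Eckmann--Osterwalder duality is valid. If the paper of Yngvason states failure of essential duality directly in symplectic terms, the first step can be bypassed. In that case one reads off the spacelike pair with $\omega(K^d(O_1),K^d(O_2))\neq0$ directly, and the argument proceeds from the second step.
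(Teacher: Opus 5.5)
Your proposal is correct and follows the paper's proof essentially step for step. The paper likewise identifies Yngvason's models as Fock--Weyl generalized free fields, citing the sources rather than re-verifying, and invokes Corollary~\ref{cor:weyl-net-dichotomy} to place some spacelike pair in alternative (b); it then reads off the profile from Proposition~\ref{prop:exact-Clifford-profile} and the maximal distance from Remark~\ref{rem:finite-scale-range}. Your remaining worry is also unnecessary, since $\mathcal R(K)'=\mathcal R(K')$ holds for arbitrary closed real subspaces and no standardness condition has to be checked.
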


\begin{proof}
The models of \cite[Secs.~2 and 7]{Yngvason1994} are generalized free fields presented in
Fock--Weyl form, hence bosonic second-quantization nets in the sense of
Corollary~\ref{cor:weyl-net-dichotomy}; see also \cite[Sec.~2]{GaierYngvason2000}.
Since essential duality fails, Corollary~\ref{cor:weyl-net-dichotomy} places some spacelike pair
in alternative (b), which
supplies $A$ and $B$; the profile is
Proposition~\ref{prop:exact-Clifford-profile}, and Remark~\ref{rem:finite-scale-range} gives the
finite-parameter statement.
\end{proof}

\subsection{Energy regularity in translation-covariant second quantization}
\label{subsec:energy-regular-weyl}

Assume now that the one-particle localization net is translation covariant,
$u(a)K(O)=K(O+a)$, with
\[
u(te_0)=e^{itH_1},\qquad H_1\ge0,
\]
and take $G=\mathrm d\Gamma(H_1)$ on symmetric Fock space.  The scalar phase in the Weyl
exchange relation disappears at the level of the displacement automorphisms themselves; the
nonzero order defect in Proposition~\ref{prop:Weyl-extremal-witnesses} is carried by the bounded
trigonometric generators built from those Weyl operators.  The next result controls these
generators with respect to $G$.

\begin{lemma}[Weyl graph-energy bound]
\label{lem:weyl-graph-energy}
If $f\in\operatorname{dom}H_1^{1/2}$, then, for every $t\in\mathbb R$ and
$\Psi\in\operatorname{dom}G^{1/2}$,
\begin{equation}
\bigl\|G^{1/2}W(tf)\Psi\bigr\|
\le
\bigl\|G^{1/2}\Psi\bigr\|
+\frac{|t|}{\sqrt2}\,\bigl\|H_1^{1/2}f\bigr\|\,\|\Psi\|.
\label{eq:weyl-graph-energy}
\end{equation}
Consequently, for
\[
A_{r,f}:=\frac{W(rf)-W(-rf)}{2i},\qquad
c_{r,f}:=\frac{|r|}{\sqrt2}\,\|H_1^{1/2}f\|,
\]
one has
\begin{equation}
\|G^{1/2}A_{r,f}\Psi\|
\le\|G^{1/2}\Psi\|+c_{r,f}\|\Psi\|.
\label{eq:trig-witness-graph-bound}
\end{equation}
The unitary group $e^{-isA_{r,f}}$ is $G$-energy-limited.
\end{lemma}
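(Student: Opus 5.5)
The main tool is the covariance of second quantization under the one-particle unitary group: $\Gamma(e^{i\tau H_1})W(f)\Gamma(e^{i\tau H_1})^*=W(e^{i\tau H_1}f)$. Since $G=\mathrm d\Gamma(H_1)$ generates $\Gamma(e^{i\tau H_1})$, this gives the commutation relation of $G$ with $W(tf)$ in quadratic-form sense. For $\Psi$ in a core of $G$, say finite-particle vectors with components in $\operatorname{dom}H_1$, we plan to show
\[
W(tf)^*\,G\,W(tf)=G+t\,\phi_{H_1}(f)+\tfrac{t^2}{2}\|H_1^{1/2}f\|^2 ,
\]
at least in the form sense on $\operatorname{dom}G^{1/2}$ (with the sign convention of \eqref{eq:weyl-convention}). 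Here $\phi_{H_1}(f)$ denotes the Segal field $\phi(iH_1f)$ or the analogous field, up to normalization. We would derive this by differentiating $\tau\mapsto W(tf)^*\Gamma(e^{i\tau H_1})W(tf)$, or directly via $W(tf)^*a(g)W(tf)=a(g)+\tfrac{it}{\sqrt2}\langle g,f\rangle$, which holds for the standard Fock realization with $W(f)=e^{i\phi(f)}$ and $\phi(f)=(a^*(f)+a(f))/\sqrt2$, and then summing $G=\sum a^*(e_k)a(e_k)$ over a basis of $H_1$-spectral elements. The cleaner route is to write $G^{1/2}$ as $\|G^{1/2}\Psi\|^2=\|\,\mathrm d\Gamma\text{-annihilator}\,\Psi\|^2$, that is, $\|G^{1/2}\Psi\|^2=\int\|a(H_1^{1/2}\cdot)\Psi\|^2$ in the form $\|(1\otimes H_1^{1/2})\,a\Psi\|^2$, where $a\Psi$ is the "annihilated" vector in $\mathfrak h\otimes\mathcal F$. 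With this identity $G^{1/2}$ corresponds to the map $\Psi\mapsto (H_1^{1/2}\otimes1)\mathbf a\Psi$, with $\|G^{1/2}\Psi\|=\|(H_1^{1/2}\otimes1)\mathbf a\Psi\|$.

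The key step uses the shift $\mathbf a W(tf)=W(tf)(\mathbf a+\tfrac{it}{\sqrt2}f\otimes1)$. Applying $H_1^{1/2}\otimes1$ and the triangle inequality, and using that $1\otimes W(tf)$ is unitary, gives
\[
\|G^{1/2}W(tf)\Psi\|\le\|G^{1/2}\Psi\|+\tfrac{|t|}{\sqrt2}\|H_1^{1/2}f\|\,\|\Psi\| .
\]
This is \eqref{eq:weyl-graph-energy} on the core. We then extend it to all $\Psi\in\operatorname{dom}G^{1/2}$ by closedness of $G^{1/2}$. If $\Psi_n\to\Psi$ in $G^{1/2}$-graph norm, then $W(tf)\Psi_n\to W(tf)\Psi$ and $G^{1/2}W(tf)\Psi_n$ is Cauchy by the bound applied to differences (the bound is affine, so we apply it to $\Psi_n-\Psi_m$). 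Hence $W(tf)\Psi\in\operatorname{dom}G^{1/2}$ with the inequality.

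For \eqref{eq:trig-witness-graph-bound}, write $A_{r,f}=\tfrac1{2i}(W(rf)-W(-rf))$. Apply \eqref{eq:weyl-graph-energy} to each term and halve:
\[
\|G^{1/2}A_{r,f}\Psi\|\le\tfrac12(2\|G^{1/2}\Psi\|+2c_{r,f}\|\Psi\|).
\]
For energy-limitedness we follow the usual definition: $\langle e^{-isA}\Psi,Ge^{-isA}\Psi\rangle\le$ a bound depending only on $\langle\Psi,G\Psi\rangle$ and $s$, or alternatively $\|G^{1/2}e^{-isA}\Psi\|\le\|G^{1/2}\Psi\|+\text{const}\,|s|\|\Psi\|$. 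We plan to obtain this either from the power series of $e^{-isA}$, which only gives exponential growth, or better from the derivative. Setting $\Psi_s=e^{-isA}\Psi$, we have $\tfrac{d}{ds}\|G^{1/2}\Psi_s\|^2=2\operatorname{Im}\langle G^{1/2}\Psi_s,G^{1/2}A\Psi_s\rangle$. Since $A$ is self-adjoint this equals $2\operatorname{Im}\langle G^{1/2}\Psi_s,[G^{1/2},A]\Psi_s\rangle$, and the commutator part is controlled by $c_{r,f}$ through the same shift formula. The term $W(\pm rf)$ acting on $\mathbf a$ produces exactly $\pm\tfrac{ir}{\sqrt2}f$, which is bounded after $H_1^{1/2}$. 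A Grönwall argument on $\|G^{1/2}\Psi_s\|$ then gives linear growth, $\|G^{1/2}\Psi_s\|\le\|G^{1/2}\Psi\|+c_{r,f}|s|\|\Psi\|$. That $e^{-isA}$ preserves $\operatorname{dom}G^{1/2}$ follows from \eqref{eq:trig-witness-graph-bound}, since $A$ is bounded on the graph-norm space and the exponential series therefore converges there.

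The main obstacle is the rigorous domain bookkeeping for the shift identity of $\mathbf a$ and for the energy-limited derivative: making sense of $[G^{1/2},A]$ as a bounded operator. If the Grönwall step stalls, we would settle for the weaker exponential graph bound $\|G^{1/2}e^{-isA}\Psi\|\le e^{|s|}(\|G^{1/2}\Psi\|+c\|\Psi\|)$, obtained from the series. That bound still yields finite mean energy for finite-energy inputs, which is what energy-limitedness requires.
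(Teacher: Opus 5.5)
Your proof of \eqref{eq:weyl-graph-energy} and \eqref{eq:trig-witness-graph-bound} is the paper's argument in vector form. The paper expands $\|G^{1/2}\Psi\|^2=\sum_j\|a(H_1^{1/2}e_j)\Psi\|^2$ over a basis and applies Minkowski in the direct sum plus Parseval. That is exactly your triangle inequality for $T:=(H_1^{1/2}\otimes1)\mathbf a$ in $\mathfrak h\otimes\mathcal F$. Both versions treat the same domain point at the same level of detail: $W(tf)\Psi$ is no longer finite-particle, so one must check it lies where the annihilator identity holds. You correctly flag this.

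The genuine difference is the energy-limitedness step. The paper uses only graph-norm boundedness of $A_{r,f}$ and the Cauchy--Schwarz bound $2x(x+c_{r,f}y)\le 3x^2+c_{r,f}^2y^2$ on the commutator form. It then cites van Luijk's quadratic-form criterion with $\omega=3$, $E_0=c_{r,f}^2/3$, which gives exponential energy growth.

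Your Gr\"onwall route is self-contained and actually sharper, but run it through $T$ rather than through $[G^{1/2},A_{r,f}]$. The shift formula gives $TA_{r,f}=(1\otimes A_{r,f})T+R$ with $\|R\|\le c_{r,f}$. Since $\langle T\Psi,(1\otimes A_{r,f})T\Psi\rangle$ is real, you get
$\bigl|\tfrac{d}{ds}\|G^{1/2}\Psi_s\|^2\bigr|=2\bigl|\operatorname{Im}\langle T\Psi_s,R\Psi_s\rangle\bigr|\le 2c_{r,f}\|G^{1/2}\Psi_s\|\,\|\Psi\|$.
Differentiability holds because $A_{r,f}$ is bounded on the graph-norm space. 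Hence $\|G^{1/2}e^{-isA_{r,f}}\Psi\|\le\|G^{1/2}\Psi\|+c_{r,f}|s|\,\|\Psi\|$. This is at most quadratic growth of the mean energy, which is stronger than the exponential bound and implies it.

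Boundedness of $[G^{1/2},A_{r,f}]$ itself does not follow from the shift formula. The formula only gives $W^*GW=(T+\xi)^*(T+\xi)$, and taking absolute values is not operator-Lipschitz in general. So do not aim for that commutator bound.

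In your fallback, note that energy-limitedness is a uniform quantitative bound of exponential type, e.g. $\operatorname{Tr}(\rho_sG)+E_0\le e^{\omega|s|}(\operatorname{Tr}(\rho G)+E_0)$, not mere finiteness of output energy. The series actually gives $\|G^{1/2}A_{r,f}^n\Psi\|\le\|G^{1/2}\Psi\|+n\,c_{r,f}\|\Psi\|$, hence $\|G^{1/2}e^{-isA_{r,f}}\Psi\|\le e^{|s|}\bigl(\|G^{1/2}\Psi\|+c_{r,f}|s|\,\|\Psi\|\bigr)$. This reduces to the identity at $s=0$, unlike your version with $c_{r,f}$ in place of $c_{r,f}|s|$. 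After squaring and averaging over a decomposition of a mixed input, it therefore does yield a bound of that form.
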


\begin{proof}
Choose an orthonormal basis $(e_j)\subset\operatorname{dom}H_1$.  On the algebraic Fock
core,
\[
\|G^{1/2}\Psi\|^2
=\sum_j\|a(H_1^{1/2}e_j)\Psi\|^2.
\]
The Weyl displacement relation shifts $a(H_1^{1/2}e_j)$ by a scalar of modulus
$|t|\,|\langle e_j,H_1^{1/2}f\rangle|/\sqrt2$.  Minkowski's inequality in the Hilbert direct
sum and Parseval's identity give \eqref{eq:weyl-graph-energy}.  Closure extends the estimate to
$\operatorname{dom}G^{1/2}$, and the estimates for $t$ and $-t$ show that $W(tf)$ preserves
that domain.

Averaging the estimates for $W(rf)$ and $W(-rf)$ gives
\eqref{eq:trig-witness-graph-bound}.  Hence $A_{r,f}$ is bounded for the
$G^{1/2}$-graph norm, so $e^{-isA_{r,f}}$ is a graph-norm strongly continuous group on
$\operatorname{dom}G^{1/2}$.  With
$x=\|G^{1/2}\Psi\|$ and $y=\|\Psi\|$,
\[
\left|i\left(
\langle G^{1/2}\Psi,G^{1/2}A_{r,f}\Psi\rangle
-\langle G^{1/2}A_{r,f}\Psi,G^{1/2}\Psi\rangle
\right)\right|
\le2x(x+c_{r,f}y)
\le3x^2+c_{r,f}^2y^2.
\]
Thus the quadratic-form criterion \cite[Thm.~3.7]{vanLuijk2025EnergyLimited} applies, for
example, with $\omega=3$ and $E_0=c_{r,f}^2/3$, and gives $G$-energy-limitedness.
\end{proof}

\begin{theorem}[Energy-regular extremal Weyl witnesses]
\label{thm:energy-regular-weyl}
Let $O\mapsto\cA(O)=\mathcal R(K(O))$ be a translation-covariant bosonic
second-quantization net with positive one-particle time-translation generator $H_1$.  If
essential duality fails, then there are spacelike-separated double cones
$\widetilde O_1,\widetilde O_2$
and vectors
\[
f\in K^d(\widetilde O_1)\cap\bigcap_{m\ge1}\operatorname{dom}H_1^m,
\qquad
g\in K^d(\widetilde O_2)\cap\bigcap_{m\ge1}\operatorname{dom}H_1^m
\]
with $\omega(f,g)\ne0$.  If $r,s\in\mathbb R$ are chosen so that
$\omega(f,g)rs=\pi$, then
\[
A_{r,f}=\frac{W(rf)-W(-rf)}{2i}\in\cA^d(\widetilde O_1),\qquad
B_{s,g}=\frac{W(sg)-W(-sg)}{2i}\in\cA^d(\widetilde O_2)
\]
are self-adjoint contractions satisfying
\begin{equation}
\diam\sigma\bigl(-i[A_{r,f},B_{s,g}]\bigr)=4,
\label{eq:energy-regular-extremal}
\end{equation}
and both unitary groups $e^{-itA_{r,f}}$ and $e^{-iuB_{s,g}}$ are
$G$-energy-limited.
\end{theorem}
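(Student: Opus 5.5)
The proof has three steps: first produce smooth vectors with nonzero symplectic pairing by spacetime smoothing, then feed them into Proposition~\ref{prop:Weyl-extremal-witnesses}, and finally into Lemma~\ref{lem:weyl-graph-energy}.

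\emph{Step 1: a nonzero pairing.} If essential duality fails, Corollary~\ref{cor:weyl-net-dichotomy} gives a spacelike pair $O_1\perp O_2$ with $\omega(K^d(O_1),K^d(O_2))\neq0$. So there are $f_0\in K^d(O_1)$ and $g_0\in K^d(O_2)$ with $\omega(f_0,g_0)\neq0$.

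\emph{Step 2: smoothing along time translations.} Since spacelike separation of double cones is an open condition, I would choose slightly larger spacelike-separated double cones $\widetilde O_j\supset O_j+\{te_0:|t|<\tau\}$ for some small $\tau>0$. I would first check covariance of the dual subspaces, $u(a)K^d(O)=K^d(O+a)$. This follows from $u(a)K(O')=K(O'+a)$ and the fact that the symplectic polar commutes with symplectic unitaries. Isotony of $K^d$ then gives $u(te_0)f_0\in K^d(\widetilde O_1)$ for $|t|<\tau$.

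For a real-valued $\varphi\in C_c^\infty(-\tau,\tau)$, set
\[
f=\int\varphi(t)\,e^{itH_1}f_0\,dt .
\]
This is a Bochner integral of vectors in the closed real subspace $K^d(\widetilde O_1)$ with real weights, so $f\in K^d(\widetilde O_1)$. Integration by parts shows that $f$ lies in $\operatorname{dom}H_1^m$ for every $m$. Define $g$ from $g_0$ and a real $\psi\in C_c^\infty(-\tau,\tau)$ in the same way.

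It remains to arrange $\omega(f,g)\neq0$. Computing,
\[
\omega(f,g)=\iint\varphi(t)\psi(t')\,F(t'-t)\,dt\,dt',\qquad F(s):=\omega(f_0,e^{isH_1}g_0),
\]
where $F$ is continuous and $F(0)\neq0$. Taking $\varphi,\psi$ to be approximate identities concentrated at $0$ makes $\omega(f,g)\to F(0)\neq0$. So some choice of mollifiers gives $\omega(f,g)\neq0$. This smoothing step is the main obstacle. The delicate points are keeping the smeared vectors inside the dual subspaces, using only covariance and openness, and ensuring that $\widetilde O_1,\widetilde O_2$ stay spacelike separated.

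\emph{Step 3: witnesses and energy limits.} The Weyl operators $W(rf)\in\cA^d(\widetilde O_1)=\mathcal R(K^d(\widetilde O_1))$ and $W(sg)\in\cA^d(\widetilde O_2)$ form one-parameter groups obeying \eqref{eq:CCR-phase} with $\kappa=\omega(f,g)$, by \eqref{eq:weyl-exchange}. With $\kappa rs=\pi$, the operators $A_{r,f},B_{s,g}$ are exactly the operators $A_0,B_0$ of \eqref{eq:trigonometric-Weyl-witnesses}. Proposition~\ref{prop:Weyl-extremal-witnesses}(i) therefore shows that they are self-adjoint contractions with $\diam\sigma(-i[A_{r,f},B_{s,g}])=4$.

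Since $f,g\in\operatorname{dom}H_1^{1/2}$, Lemma~\ref{lem:weyl-graph-energy} gives $G$-energy-limitedness of both unitary groups $e^{-itA_{r,f}}$ and $e^{-iuB_{s,g}}$.
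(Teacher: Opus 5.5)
Your Steps 1 and 3 agree with the paper. The time-smearing mechanics in Step 2 are also correct: real mollifiers, closedness of the dual subspace, smoothness of $f$, and convergence of $\omega(f,g)$ to $F(0)$.

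\textbf{The gap.} The claim that spacelike separation of double cones is an open condition is false. The relation $O_1\perp O_2$ means $O_1\subset O_2'$, and this allows the closures of $O_1$ and $O_2$ to touch. Corollary~\ref{cor:weyl-net-dichotomy} gives you no control over which spacelike pair exhibits the nonzero pairing.

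For example, let $O_1,O_2$ be the double cones based on two open balls in the time-zero plane that are tangent at a spatial point $x_*$. For any $t\neq0$, take points of $O_1+te_0$ near $(t,x_*)$ and points of $O_2$ near $(0,x_*)$. Their spatial distance is smaller than $|t|$, so they are timelike separated. Hence no double cone containing $O_1+\{te_0:|t|<\tau\}$ is spacelike to $O_2$, and your $\widetilde O_1,\widetilde O_2$ do not exist. Replacing $O_1,O_2$ by smaller regions does not help: $f_0\in K^d(O_1)$ need not lie in $K^d$ of any smaller region.

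\textbf{The missing idea.} You must first pull the two regions strictly apart by spacelike translations, and only then smear in time. This is what the paper does.

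\begin{enumerate}
\item Choose a wedge $W$ with $O_1\subset W$ and $O_2\subset W'$, as in \eqref{eq:wedge-separation}.
\item Choose a spacelike vector $v$ pointing into $W$ with $\overline W+\varepsilon v\subset W$ and $\overline{W'}-\varepsilon v\subset W'$ for $\varepsilon>0$.
\item By the covariance of $K^d$ you already verified, $f_\varepsilon:=u(\varepsilon v)f_0\in K^d(O_1+\varepsilon v)$ and $g_\varepsilon:=u(-\varepsilon v)g_0\in K^d(O_2-\varepsilon v)$.
\item Strong continuity of $u$ gives $\omega(f_\varepsilon,g_\varepsilon)\to\omega(f_0,g_0)\neq0$, so fix some $\varepsilon>0$ with nonzero pairing.
\item Now $O_1+\varepsilon v$ and $O_2-\varepsilon v$ have compact closures inside $W$ and $W'$ respectively, so there is genuine room. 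For small $\delta>0$, their time-translates by $|t|<\delta$ fit inside double cones $\widetilde O_1\Subset W$ and $\widetilde O_2\Subset W'$, which are automatically spacelike separated.
\end{enumerate}

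Your smoothing argument then goes through unchanged with $f_\varepsilon,g_\varepsilon$ in place of $f_0,g_0$.
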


\begin{proof}
By Corollary~\ref{cor:weyl-net-dichotomy}, failure of essential duality gives spacelike
double cones $O_1,O_2$ and vectors $f_0\in K^d(O_1)$, $g_0\in K^d(O_2)$ with
$\omega(f_0,g_0)\ne0$.  Choose a wedge $W$ with $O_1\subset W$ and $O_2\subset W'$
as in \eqref{eq:wedge-separation}.  There is a spacelike translation vector $v$ directed into
$W$ such that $\overline W+\varepsilon v\subset W$ and
$\overline{W'}-\varepsilon v\subset W'$ for $\varepsilon>0$; for the standard right wedge
one may take the positive first spatial coordinate vector.  Translation covariance of
$K^d$ gives
\[
f_\varepsilon:=u(\varepsilon v)f_0\in K^d(O_1+\varepsilon v),\qquad
g_\varepsilon:=u(-\varepsilon v)g_0\in K^d(O_2-\varepsilon v).
\]
Strong continuity implies
$\omega(f_\varepsilon,g_\varepsilon)\to\omega(f_0,g_0)$ as
$\varepsilon\downarrow0$, so fix $\varepsilon>0$ for which this pairing is nonzero.
The translated double cones now have closures strictly inside $W$ and $W'$.  Hence there are
spacelike double cones $\widetilde O_1\Subset W$, $\widetilde O_2\Subset W'$ and
$\delta>0$ such that
\[
O_1+\varepsilon v+te_0\subset\widetilde O_1,\qquad
O_2-\varepsilon v+te_0\subset\widetilde O_2
\qquad(|t|<\delta).
\]

Let $(\chi_n)$ be a real $C_c^\infty(\mathbb R)$ approximate identity with integral one and
support contained in $(-\delta,\delta)$, shrinking to zero.  Set
\[
f_n=\int\chi_n(t)u(te_0)f_\varepsilon\,dt,\qquad
g_n=\int\chi_n(t)u(te_0)g_\varepsilon\,dt.
\]
For every $t$ in the support, covariance and isotony put the integrands in
$K^d(\widetilde O_1)$ and $K^d(\widetilde O_2)$, respectively.  These are closed real
subspaces and $\chi_n$ is real, hence
$f_n\in K^d(\widetilde O_1)$ and $g_n\in K^d(\widetilde O_2)$.  Writing $\varphi_n(\lambda)=\int\chi_n(t)e^{it\lambda}\,dt$, spectral calculus gives
$f_n=\varphi_n(H_1)f_\varepsilon$ and similarly for $g_n$.  Since $\varphi_n$ is a Schwartz
function,
\[
f_n,g_n\in\bigcap_{m\ge1}\operatorname{dom}H_1^m.
\]
The approximate-identity property gives $f_n\to f_\varepsilon$ and
$g_n\to g_\varepsilon$ in norm, and therefore
$\omega(f_n,g_n)\to\omega(f_\varepsilon,g_\varepsilon)\ne0$.  Fix $n$ with nonzero
pairing and write $f=f_n$, $g=g_n$.

Proposition~\ref{prop:Weyl-extremal-witnesses} gives
\eqref{eq:energy-regular-extremal} after choosing $r,s$ with $\omega(f,g)rs=\pi$.
Lemma~\ref{lem:weyl-graph-energy} applies to both generators and proves the final statement.
\end{proof}

\section{Conclusion}

The causal-complement channel class is
\[
\NS_{\cA}(O)=\Chan(\cA^d(O)).
\]
Accordingly, Haag duality identifies this class with $\Chan(\cA(O))$, whereas essential
duality is order-independence of the corresponding classes at spacelike separation.  The
small-parameter order defect has coefficient $\diam\sigma(-i[a,b])$, and optimization gives
\[
\Gamma(M,N)=2\Delta_{\rm sa}(M,N),\qquad
\lim_{\varepsilon\downarrow0}
\frac{\Omega^{\rm rev}_\varepsilon(M,N)}{\varepsilon^2}
=\frac14\Gamma(M,N).
\]
For the full inner-channel classes, the coefficient $\Lambda(M,N)$ is the optimal normalized
order defect and has the same commutativity zero set.

With a positive reference observable $G$, the mean-input-energy-constrained coefficient obeys
\[
\Gamma_{G,E}(M,N)=0\iff[M,N]=0,
\qquad
\Gamma_{G,E}(M,N)\uparrow\Gamma(M,N).
\]
Spectral-edge accessibility provides a sufficient finite-energy recovery threshold, but the
general hypotheses impose no rate: Proposition~\ref{prop:no-uniform-energy-rate} gives
arbitrarily slow recovery even for compact-resolvent $G$ with finite spectral multiplicities,
ground energy zero, and first positive eigenvalue one.

For bosonic second quantization, the dual localization subspaces give a sharp dichotomy:
vanishing symplectic pairing yields zero spacelike defect, while nonzero pairing yields
$\Gamma=4$ together with the Clifford finite-parameter profile.  Wedge-dual generalized free
fields realize the zero-defect branch despite bounded-region Haag-duality gaps, and generalized
free fields failing essential duality realize the nonzero branch.  In translation-covariant
second-quantization nets with positive one-particle time generator, the extremal witnesses can
also be chosen from vectors lying in every power domain of that generator, with bounded witness
flows that are energy-limited relative to the second-quantized time-translation energy.
\begingroup
\small
\setlength{\bibsep}{2pt}
\bibliographystyle{unsrtnat}
\bibliography{refs}
\endgroup
\end{document}